\let\olddesc\description
\def\description{\olddesc\setlist[itemize]{leftmargin=*,labelindent=-12pt}}
\newtcolorbox[auto counter]{examplebox}[2][]{%
title=Box~\thetcbcounter: #2,#1}
\newtheorem{theorem}{Theorem}
\newtheorem{result}{Main result}
\newtheorem{corollary}{Corollary}
\newtheorem{proposition}[corollary]{Proposition}
\newtheorem{lemma}[corollary]{Lemma}
\newtheorem{definition}[corollary]{Definition}
\newcommand{\norm}[1]{\left\lVert#1\right\rVert}
\newcommand{\ket}[1]{\left| #1 \right\rangle} 
\newcommand{\bra}[1]{\left\langle #1 \right|} 
\newcommand{\braket}[2]{\left\langle#1 | #2\right\rangle}
\newcommand{\ketbra}[2]{|#1\rangle\!\langle#2|}
\newcommand{\proj}[1]{|#1\rangle\!\langle#1|}
\newcommand{\rank}{\mathrm{rank}}
\newcommand{\tr}{\mathrm{tr}}
\newcommand{\Tr}{\tr}
\newcommand{\e}{\mathrm{e}}
\newcommand{\1}{\mathbbm{1}}
\newcommand{\mf}{\mathfrak}
\newcommand{\mc}{\mathcal}
\newcommand{\s}{S}
\begin{document}

\title{Extensive R\'enyi entropies in matrix product states}

\author{Alberto Rolandi}
\email{rolandia@student.ethz.ch}
\affiliation{Institute for Theoretical Physics, ETH Z{\"u}rich, 8093 Z{\"u}rich, Switzerland}

\author{Henrik Wilming}
\email{henrikw@phys.ethz.ch}
\affiliation{Institute for Theoretical Physics, ETH Z{\"u}rich, 8093 Z{\"u}rich, Switzerland}

\date{}

\begin{abstract} 
\noindent 
	We prove that all R\'enyi entanglement entropies of spin-chains described by generic (gapped), translational invariant matrix product states (MPS) are extensive for disconnected sub-systems: All R\'enyi entanglement entropy densities of the sub-system consisting of every $k$-th spin are non-zero in the thermodynamic limit if and only if the state does not converge to a product state in the thermodynamic limit. 
	Furthermore, we provide explicit lower bounds to the entanglement entropy in terms of the expansion coefficient of the transfer operator of the MPS and spectral properties of its fixed point in canonical form. 
	As side-result we obtain a lower bound for the expansion coefficient and singular value distribution of a primitve quantum channel in terms of its Kraus-rank and entropic properties of its fixed-point. For unital quantum channels this yields a very simple lower bound on the distribution of singular values and the expansion coefficient in terms of the Kraus-rank. 
Physically, our results are motivated by questions about equilibration in many-body localized systems, which we review. 
\end{abstract}

\maketitle


\section{Introduction and main results}
Matrix product states (MPS), also known as finitely correlated states \cite{Fannes1992}, and tensor networks more generally, have become indispensable tools to study quantum many-body systems (see, for example, \cite{Bridgeman2016} for a thorough introduction). 
The essential reason for this is that matrix product states provide a variational class of many-body wavefunctions that is both treatable in a computationally efficient manner and at the same time captures
the relevant correlation structure of ground states of gapped local Hamiltonians, i.e., non-critical systems.  
The two crucial features that they capture are i) the Area Law of entanglement and ii) the exponential decay of correlations, which hold for any ground state of a gapped Hamiltonian in one spatial dimension \cite{Hastings2004,Hastings2006,Hastings2007}. 
The Area Law states that the entanglement entropy of a subsystem $A$ fulfills
\begin{align}\label{eq:arealaw}
	S(\rho_A) \leq c |\partial A|,
\end{align}
where $|\partial A|$ denotes the size of the boundary of $A$,  $c$ is some constant, and $S$ denotes von~Neumann entropy. 
MPS with fixed bond-dimension $D$ (see below for a definition of bond-dimension) more generally fulfill an Area Law not just for the von~Neumann entropy, but for all R\'enyi entropies $S_\alpha$:
\begin{align}
	S_\alpha(\rho_A) \leq |\partial A|\log(D).
\end{align}
 
 The exponential decay of correlations on the other hand means that connected corelation functions of local operators $A$ and $B$ decay exponentially with the distance $d(A,B)$ between the support of the operators:
 \begin{align}\label{eq:clustering}
	 \big| \langle AB\rangle - \langle A\rangle\langle B\rangle\big| \leq \norm{A}\norm{B} \e^{- d(A,B)/\xi},
\end{align}
where $\xi\geq 0$ is the \emph{correlation length} and $\langle \cdot \rangle$ denotes an expectation value with respect to the quantum state in question.  
Indeed, MPS with fixed bond-dimension either fulfill \eqref{eq:clustering} with $\xi>0$ or show infinite-ranged correlations for certain observables $A,B$. 
That is, MPS with fixed bond-dimension cannot represent quantum sates that show a power-law decay of correlations. 

In this paper, we show that within the set of MPS with finite correlation length (so-called \emph{gapped} MPS), a similar dichotomy shows up for the R\'enyi entanglement entropies of generic translational invariant matrix-product states, when we consider \emph{disconnected} sub-systems consisting of every $k$-th spin of the total system: either all R\'enyi entanglement entropies grow proportional to $n/k$ as the system-size $n$ is increased, or the matrix product state converges to a product state (in a sense made precise below). 
In the latter case, all R\'enyi entropies with $\alpha>1$ converge to zero. Therefore, for these entropies it is not possible to have an intermediate kind of behaviour, such as a growth of the order $\sqrt{n/k}$. To state our result, we mention here already that a central ingredient in the study of matrix product states is the \emph{transfer operator} $T$, which induces a completely positive map $\mc T$ with the same eigenvalues and singular values as $T$. We say that the transfer operator is \emph{gapped} if it has a unique, non-degenerate eigenvalue of modulus $1$.
\begin{figure*}
	\centering
	\includegraphics[width=14cm]{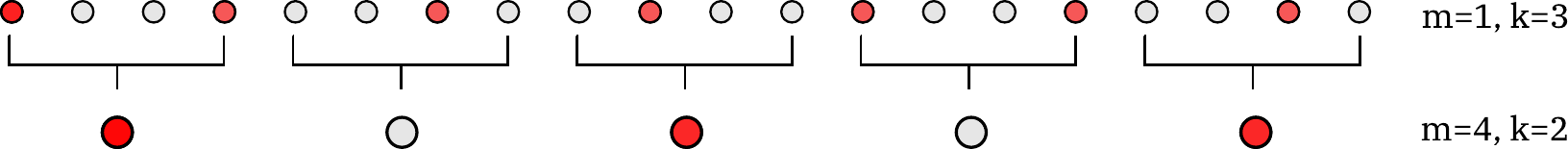}
	\caption{We can re-group the sites of the chain into blocks of $m$ spins and think of each block as a "super-spin". Our results then directly transfer to studying the entropy per block. Technically, this re-grouping simply amounts to taking the $m$-th power of the transfer operator. }
	\label{fig:renormalization}
\end{figure*}
\begin{result}[Extensivity of R\'enyi entropies, informal]\label{thm:main_informal} Consider a system described by a translational invariant matrix product state with gapped transfer operator and fixed bond dimension. Denote by $\rho_k$ the density matrix of the sub-system consisting of every $k$-th spin. Then its R\'enyi entanglement entropies are proportional to the system-size $n$:
	\begin{align}
		\lim_{n\rightarrow \infty} \frac{k}{n} S_\alpha(\rho_k) \geq s_\alpha^{(k)},
	\end{align}
	 and $s_\alpha^{(k)}=0$ \emph{if and only if} $\rho_k$ approaches a product-state in the thermodynamic limit, in which case even the total R\'enyi entropies with $\alpha>1$ vanish in the thermodynamic limit. 
\end{result}
One might at first sight think that this result violates the Area Law of MPS. But this is not the case, for if $A$ consists of every $k$-th spin, we have $|\partial A|=2|A|=2n/k$. 
The property of having a gapped transfer operator is generic, in the sense that if we choose the tensor describing the translational invariant MPS at random from a reasonable ensemble, then we obtain a gapped transfer operator with probability one.
We have expressed our result for the subsystem consisting of every $k$-th spin, but we can also instead partition the system into blocks of $m$ consecutive spins and consider each of the blocks as one "super-spin" (see Fig.~\ref{fig:renormalization}). 
Our result then directly applies to the super-spins as well, since the transfer operator describing the system in terms of super-spins is simply the $m$-th power $T^m$ of the original transfer operator.   

The lower bound $s^{(k)}_\alpha$ to the entropy density is determined by the largest eigenvalue of what we call the \emph{twisted transfer operator} below and can easily be computed numerically given the tensor describing the MPS. 
Nevertheless, it is of interest to have quantitative estimates of this quantity.
We therefore derive lower bounds to $s^{(k)}_\alpha$ in terms of properties of the transfer operator itself.
In deriving these, we obtain some general side-results about completely positive (CP) maps, which we believe to be of independent interest. 

As before, we assume that the transfer operator has a spectral gap $\Delta=1-|t_2|>0$, where $t_i$ denote the eigenvalues of the transfer operator, ordered with decreasing absolute value, and $t_1=1$. 
Then the adjoint $\mc T^*$ of the associated completely positive map $\mc T$ has a unique fixed-point density matrix $\Lambda$. A unital or trace-preserving completely positive map is called \emph{primitive} if both $\mc T$ and $\mc T^*$ have a unique eigenvalue of modulus $1$ and the associated fixed-points are strictly positive \cite{Wolf2012}. 
We also introduce the \emph{singular gap} $\Delta_s$, given by
\begin{align}
	\Delta_s = s_1(T) - s_2(T),
\end{align}
where $s_i(T)=s_i(\mc T)$ denote the singular values of $T$, i.e., the eigenvalues of $\sqrt{T^\dagger T}$.
The largest singular value of a unital CP-map $\mc T$ fulfills $s_1(\mc T)\geq 1$ and 
$s_1(\mc T)=1$ if and only if $\mc T$ is also trace-preserving \cite{PerezGarcia2006}. 

Let us denote by $\lambda_{\min{}}$ the smallest non-zero eigenvalue of $\Lambda$ and recall that $D$ is the bond-dimension.

We summarize some of our bounds in terms of the following main result, which provides a lower bound to $s_\alpha^{(k)}$ in terms of $\lambda_{\min{}}$ and the singular values of $T$, in the limit $k\rightarrow \infty$. 
To state it, we denote by $\mf s$ the \emph{swap-operator} that acts as $\mf s\ket{a}\otimes\ket{b} = \ket{b}\otimes \ket{a}$. 
For simplicity, we here also assume that $\Lambda>0$ (in which case $\mc T$ is primitive and unital), but provide results for the general case as well as more refined versions of the bounds below in section~\ref{sec:lower-bounds}.  In particular, these allow to replace much of the dependence on the singular values $s_i(\mc T)$ with entropic properties of $\Lambda$ in the following result. 
\begin{result}[Lower bound to entropy, informal] Under the same conditions as in result~\ref{thm:main_informal} and if $\Lambda>0$, we have
	\begin{align}
		\lim_{k\rightarrow \infty} s_2^{(k)} &= -\log\left(\tr\left[\Lambda\otimes\Lambda \mf s \mc T \otimes\mc T [\mf s]\right]\right)\\
		&\geq -\log(1 - \lambda_{\min{}}^2[D^2-\sum_i s_i^2(T)]) \geq 0,
	\end{align}
	with
		$\sum_{i=1}^{D^2} s_i^2(T) \leq D^2$.
	The convergence is exponential in $k$ and the bound becomes trivial if and only if $D=1$ (and hence the state is given by a product state). 
	Moreover, if $\Lambda$ is maximally mixed, then
	\begin{align}
		\lim_{k\rightarrow \infty} s_2^{(k)} &\geq -\log(1/D^2 + (1-\Delta_s)^2),
	\end{align}
	and $0\leq \Delta_s \leq 1$. 
	In the limit of large block-sizes, the entropy per block converges (exponentially quickly) to
	\begin{align}
		\lim_{m\rightarrow \infty} \lim_{k\rightarrow \infty} s_\alpha^{(k)} = 2 S_\alpha(\Lambda).
	\end{align}
\end{result}
While the lower bounds given above for finite $k$ on first sight only seems to deliver information for the R\'enyi entropy $S_2$, they actually provide a lower bound for all R\'enyi entropies, since $S_\alpha\geq \frac{1}{2}S_2$ for all $\alpha$.

Stated as above, our result is one on MPS and their entanglement properties. However, given any quantum channel $\mc C$ of Kraus-rank $d$, we can construct from it a translational invariant MPS on a spin chain with local Hilbert-space dimension $d$ whose transfer operator matches $\mc C$: $\mc C=\mc T^*$. 
But we know that the entropy density $\frac{k}{n} S_\alpha$ of a system of $d$-dimensional constituents is always upper bounded by $\log(d)$.
The above bounds then lead us to results bounding how quickly a primitive quantum channel $\mc T^*$ mixes towards its fixed-point density matrix $\Lambda$ in terms of its Kraus-rank $d$.
To make the results easier to interpret, we introduce the \emph{expansion coefficient $\kappa(\mc T)$} of a unital or trace preserving, primitive CP-map $\mc T$:
\begin{align}
	\kappa(\mc T) := \sup_{X:\norm{X}_2=1} \norm{\mc T[X]- \lim_{n\rightarrow\infty}\mc T^n[X]}_2, 
\end{align}
where $\norm{X}_2=\sqrt{\Tr[X^\dagger X]}$ denotes the Schatten 2-norm induced by the Hilbert-Schmidt inner product. 
Equivalently, if $\mc T$ is unital and writing $\mc P_\Lambda[X]=\1\Tr[\Lambda X]$, we have $\kappa(\mc T) = \norm{\mc T-\mc P_\Lambda}$ when $\mc T$ is viewed as super-operator on the Hilbert-space of $D\times D$ matrices with Hilbert-Schmidt inner product. This implies $\kappa(\mc T)=\kappa(\mc T^*)$.
The expression \emph{expansion coefficient} is borrowed from the literature of \emph{quantum expanders} \cite{Hastings2007b,Hastings2007a,Ben-Aroya2010,Gross2008,Harrow2008}, where the term is used for the special case of unital and trace-preserving CP-maps (unital quantum channels). It measures how quickly an arbitrary input to the CP-map approaches the fixed-point. 
Indeed, suppose that $\kappa(\mc T^*)\leq \epsilon/\sqrt{D}$. Then for any density matrix $\rho$, we have
\begin{align}
	\norm{\mc T^*[\rho]-\Lambda}_1 \leq \sqrt{D}\norm{\mc T^*[\rho]-\Lambda}_2 \leq \epsilon. 
\end{align}
While we later derive a general bound relating $\kappa(\mc T)$ to the Kraus-rank and the entropic properties of the fixed-point $\Lambda$, we here state a simplified version for the special case of unital channels. 
\begin{corollary}\label{cor:1}
	Let $\mc T$ be a primitive, unital, trace-preserving, completely positive map on a $D$-dimensional Hilbert-space with Kraus-rank $d$. 
	Then
	\begin{align}
	\frac{1}{d} \leq \frac{\sum_i s_i(\mc T)^2}{D^2} \leq \frac{1}{D^2} + \kappa(\mc T)^2.
	\end{align}
	In particular, its expansion coefficient is bounded as
	\begin{align}
		\label{eq:boundkappa}
	\kappa(\mc T)^2 \geq \frac{1}{d} - \frac{1}{D^2} \geq 0 .
	\end{align}
\end{corollary}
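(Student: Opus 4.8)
The plan is to pass to the natural (Liouville-space) representation of $\mc T$. Fix a Kraus decomposition $\mc T[X]=\sum_{a=1}^{d}K_a X K_a^\dagger$ and write $T=\sum_{a=1}^{d}K_a\otimes\bar K_a$; this $T$ has the same singular values as $\mc T$ and satisfies $\sum_i s_i(\mc T)^2=\Tr[T^\dagger T]$. Both inequalities in Corollary~\ref{cor:1} then reduce to elementary linear algebra, with the lower bound coming from a trace constraint on a Gram matrix and the upper bound from a block decomposition of $\mc T$.

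For the \emph{lower bound} $1/d\le\sum_i s_i(\mc T)^2/D^2$, I would introduce the $d\times d$ Gram matrix $G_{ab}=\Tr[K_a^\dagger K_b]$, which is positive semidefinite of rank at most $d$. A direct computation gives $\sum_i s_i(\mc T)^2=\Tr[T^\dagger T]=\sum_{a,b}|\Tr[K_a^\dagger K_b]|^2=\Tr[G^2]$, while trace preservation $\sum_a K_a^\dagger K_a=\1$ forces $\Tr[G]=\Tr[\1]=D$. Since $G$ is positive semidefinite with at most $d$ nonzero eigenvalues, Cauchy--Schwarz gives $\Tr[G^2]\ge(\Tr G)^2/d=D^2/d$, which is the claim after dividing by $D^2$.

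For the \emph{upper bound}, I would use that a map which is simultaneously unital, $\mc T[\1]=\1$, and trace preserving, $\mc T^*[\1]=\1$, leaves both $\mathbb C\1$ and its Hilbert--Schmidt orthogonal complement (the traceless matrices) invariant, hence is block-diagonal, $\mc T=1\oplus\mc T'$, with $\mc T'$ the restriction to the $(D^2-1)$-dimensional traceless subspace. The unique fixed point of $\mc T^*$ is then $\Lambda=\1/D$, so $\mc P_\Lambda[X]=\1\Tr[X]/D$ is precisely the Hilbert--Schmidt orthogonal projection onto $\mathbb C\1$, i.e.\ $\mc P_\Lambda=1\oplus 0$ in the same decomposition and therefore $\mc T-\mc P_\Lambda=0\oplus\mc T'$. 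Consequently $\sum_i s_i(\mc T)^2=1+\sum_i s_i(\mc T')^2$, and, recalling that for unital $\mc T$ one has $\kappa(\mc T)=\norm{\mc T-\mc P_\Lambda}=\norm{\mc T'}$ (operator norm), while $\mc T'$ has at most $D^2-1$ singular values each bounded by $\kappa(\mc T)$, we get $\sum_i s_i(\mc T')^2\le(D^2-1)\kappa(\mc T)^2$. Dividing by $D^2$ and using $(D^2-1)/D^2\le 1$ yields $\sum_i s_i(\mc T)^2/D^2\le 1/D^2+\kappa(\mc T)^2$. Combining the two chains of inequalities gives $\kappa(\mc T)^2\ge 1/d-1/D^2$, and the right-hand side is nonnegative because the Kraus rank of a CP map on a $D$-dimensional space satisfies $d\le D^2$.

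I do not anticipate a genuine obstacle; the argument is short. The two steps that need the most care are (i) identifying $\Lambda=\1/D$ for a map that is both unital and trace preserving, so that $\mc P_\Lambda$ is exactly the Hilbert--Schmidt projection onto the identity, and (ii) justifying the block decomposition $\mc T=1\oplus\mc T'$, which is what makes $\sum_i s_i(\mc T)^2=1+\sum_i s_i(\mc T')^2$ and $\kappa(\mc T)=\norm{\mc T'}$ hold simultaneously. Everything else is the standard dictionary between the Kraus, Choi, and natural representations of a CP map, together with the inequality $\sum_j x_j^2\ge(\sum_j x_j)^2/(\#\text{terms})$.
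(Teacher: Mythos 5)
Your proof is correct. The upper bound follows essentially the same route as the paper: the paper's corollary in section~\ref{sec:lower-bounds} also isolates the singular value $s_1(\mc T)=1$ coming from the invariant line $\mathbb C\1$ and bounds the remaining $D^2-1$ singular values by $s_2(\mc T)=\kappa(\mc T)$; your block-decomposition $\mc T = 1\oplus \mc T'$ with $\mc P_\Lambda = 1\oplus 0$ is just a cleaner way of saying the same thing, and your intermediate factor $(D^2-1)/D^2\leq 1$ matches the paper's final relaxation. The lower bound, however, is where you genuinely diverge. The paper obtains $1/d\leq \hat t(\mc T)=\Tr[\Lambda\otimes\Lambda\,\mf s\,\mc T\otimes\mc T[\mf s]]$ by an information-theoretic detour: it builds an MPS with physical dimension $d$ out of a minimal Kraus decomposition, invokes Lemma~\ref{lemma:kinfty} to identify $\hat t(\mc T)$ with the $k\to\infty$ entropy density, and uses the trivial bound $\log(d)$ on the entropy density of a chain of $d$-dimensional spins; the trace-preserving case $\Lambda=\1/D$ then reduces $\hat t(\mc T)$ to $\sum_i s_i(\mc T)^2/D^2$ via Lemma~\ref{lemma:swap-singular-values}. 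You instead prove $\Tr[T^\dagger T]=\Tr[G^2]$ for the Kraus Gram matrix $G_{ab}=\Tr[K_a^\dagger K_b]$ (your conjugation bookkeeping checks out), note $\Tr[G]=D$ from trace preservation, and apply Cauchy--Schwarz to the $d$ eigenvalues of $G\geq 0$. This is shorter, fully elementary, needs neither primitivity nor unitality for the lower bound, and makes the equality case transparent ($G\propto\1$, i.e., mutually orthogonal Kraus operators of equal Hilbert--Schmidt norm). What it does \emph{not} give you is the paper's more general inequality $1/d\leq\hat t(\mc T)$ for a non-maximally-mixed fixed point $\Lambda$, which is what Theorem~\ref{thm:expansion-that} actually needs; your argument would have to be redone with a $\Lambda$-weighted Gram matrix to recover that. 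For the corollary as stated, though, your proof is complete.
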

At this point it is interesting to know that for any $n\in \mathbb N$, there exist primitive, unital and trace-preserving CP-maps with $D=2^n$ and $d=2$ \cite{Hanson2020,Daniel}. 
The inequality then tells us that such channels fulfill
\begin{align}
	\kappa(\mc T)^2 \geq \frac{1}{2}\left(1- 2^{-(2n-1)}\right).
\end{align}
In fact, for primitive, unital and trace-preserving CP-maps the expansion coefficient is given by $1-\Delta_s$. The bound then translates to 
\begin{align}\label{eq:boundsingulargap}
\Delta_s\leq 	\Delta_s(2-\Delta_s) \leq 1 + 1/D^2 - 1/d\leq 1,
\end{align}
where the first inequality follows from $\Delta_s\leq 1$ for such maps.
A particular simple application of the bound is the following: Consider the CP-map $\mc T[A] = \1 \Tr[A]/D = \mc P_{\1/D}[A]$. Both $\mc T$ and $\mc T^*$ map every density matrix to the maximally mixed state. It has $\kappa(\mc T)=0$. The bound given in \eqref{eq:boundkappa} then implies the well-known fact that this CP-map requires maximal Kraus-rank. It is hence tight in this limit.  

One may wonder whether a similar result also holds for the \emph{spectral gap} $\Delta$, which is always at least as large as the singular gap. 
Here, even the weak inequality $\Delta \leq 1+1/D^2-1/d$ can be violated when the singular gap is replaced by the spectral gap. 
An example for $D=2$ and $d=2$ is given by the mixed-unitary map of the form
\begin{align}
	\mc T'[\cdot] := \frac{1}{2} U_1 \cdot U_1^\dagger + \frac{1}{2} U_2 \cdot U_2^\dagger,
\end{align}
with $U_1=\sigma_y$ and $U_2 = \frac{1}{\sqrt{2}}\big (\1 + \mathrm i/\sqrt{2}(\sigma_x + \sigma_z)\big)$. This map is primitive, but has $\Delta_s=0$ and $\Delta=1$. Indeed, it has two non-zero singular values, both equal to $1$. The first bound in corollary~\ref{cor:1} then shows $d \geq 2$. It is therefore tight, while the second bound only gives the trivial bound $d\geq 4/5$. 

To conclude the introduction and overview of our results, let us mention that using $\mc T'$ to construct an MPS with bond dimension $2$ on a spin-1/2 chain yields a state with the following interesting properties: a) every connected subsystem of size $l$ with $2\leq l\leq n-2$ has R\'enyi-2 entropy $S_2 = 2\log(2)$ saturating the Area Law bound, b) every individual spin is maximally mixed and c) whenever two subsystems are separated by at least two spins, they are uncorrelated. In particular, this implies that the sub-system consisting of every $k$-th spin is maximally mixed for every $k\geq 3$, despite the fact that the global state is pure. All these properties can be derived from $\tr[U_1 U_2^\dagger]=0$ and $\mc T'\circ \mc T' = \mc P_{\1/2}$ alone, as shown in appendix~\ref{app:exampleMPS}.

\subsection{Structure of Paper}
We first discuss the physical motivation for our results, which comes from the problem of equilibration of many-body localized systems (readers mainly interested in our technical results may skip this section). 
We then briefly review the formalism of matrix product states and show how R\'enyi entropies are calculated in this framework using the the twisted transfer operator.  
In section~\ref{sec:translational-invariant}, we provide the technical version and the proof for our first main-result, showing that the R\'enyi entropies are extensive.
Section~\ref{sec:lower-bounds} provides the technical formulation on our lower-bounds for the entropy density and their corollary for quantum channels. 
We end with a brief conclusion, where we discuss some open problems.  
The construction and properties of the MPS mentioned at the end of the introduction are discussed in appendix~\ref{app:exampleMPS}. 
To ease readability, some purely technical proofs of Lemmata are delegated to appendix~\ref{app:lemmas}.

\section{Physical motivation for our results} 
\label{sec:physical-motivation}
The motivation for our results comes from the study of equilibration of many-body localized systems \cite{Nandkishore2015}.
When a complex quantum system is initialized in a simple initial state and allowed to evolve under its natural unitary time-evolution, sub-systems will generically become entangled with the remainder of the system and loose memory of their precise initial conditions \cite{Gogolin2016}. 
More precisely, the quantum state of the sub-system will show small fluctuations around a stationary state, interspersed with rare, larger deviations and a recurrence to the initial state after a time double-exponentially large in the system-size. 
In this case, we say that the system \emph{equilibrates}. 
Generically, even more is true, namely that the stationary state of the small sub-system may be computed from a thermal, statistical ensemble of the full system, even though global properties of the full system at no time can be described by this ensemble. In this case, we say that the system also \emph{thermalizes}. 
While this general picture of equilibration and thermalization is agreed upon, see for example the review \cite{Gogolin2016}, a detailed understanding of when, how and after how much time equilibration and thermalization actually happens is still missing. 
In particular, no general criterion has been found that allows to decide whether a presented many-body Hamiltonian acting on a particular initial state will lead to equilibration and thermalization. 
Furthermore, even though rigorous arguments exist that show equilibration under fairly general (although often not explicitly checkable) conditions \cite{Tasaki1998,Popescu2006,Reimann2008,Linden2009,Goldstein2010,Short2011,Reimann2012,Reimann2012a,Short2012,Masanes2013,Gogolin2016,Gallego2017}, no general prediction can be made for how long it will take for a particular observable to equilibrate (see, however, Refs.~\cite{Linden2010,Short2012,Goldstein2013,Malabarba2014,Goldstein2015,Garcia-Pintos2017,Wilming2017,DeOliveira2017,Dabelow2020}).   
Indeed, recent years have shown several surprises, such as the existence of many-body localized systems, which equilibrate but don't thermalize due to lack of transport, or many-body scared systems \cite{Bernien2017,Shiraishi2017,Turner2018,Moudgalya2018}, some of which show perfect oscillations of all, even non-local, observables for certain simple initial states, despite being non-integrable (see, for example, Refs.~\cite{choi2019emergent,schecter2019weak,Chattopadhyay2020,Iadecola2020}). 
These findings show that the equilibration and thermalization behaviour of complex quantum systems is a complex and fascinating field of research.

One of the central emerging insights in this area is that the entanglement structure of high-energy eigenstates of the many-body system are important for their equilibration behaviour. 
We emphasize here already that this insight is logically distinct from the famous \emph{eigenstate thermalization hypothesis} \cite{Srednicki1994,Srednicki1999,DAlessio2016,Gogolin2016}, which explains \emph{thermalization} due to the entanglement property of energy eigenstates, but has to assume that the system equilibrates in the first place.   
Indeed, recently, it was shown in Ref.~\cite{Wilming2019} that a generic, interacting many-body system initialized in a low-entangled state equilibrates to exponentially good precision in the system-size provided that the energy eigenstates are sufficiently entangled. 
Here, sufficiently entangled means that for any energy eigenstate $\ket{E}$ with energy $E$, the \emph{R\'enyi entanglement entropies} $S_\alpha$ of \emph{some} suitable sub-system $A(E)$ is extensive:
\begin{align} \label{eq:ergodicity}
	S_\alpha(\rho_{A(E)}) \geq g (E/n)\, n,
\end{align}
where $n$ is the system-size, $g$ is a sufficiently regular function of the energy-density and $\rho_{A(E)}$ denotes the reduced density matrix of $\ket{E}$ on the sub-system $A(E)$. 
Systems that fulfill this property were dubbed \emph{entanglement-ergodic} in Ref.~\cite{Wilming2019}. 
Thus, entanglement-ergodic systems equilibrate exponentially well. Conversely, it can be shown that systems with perfect many-body scars, that show periodic revivals of the initial states cannot be entanglement-ergodic \cite{Alhambra2020}. Indeed, in such systems there always are energy eigenstates whose entanglement entropies are bounded by $\log(n)$ for \emph{arbitrary} sub-systems. 
Entanglement-ergodicity therefore seems to provide an interesting structural insight connecting the entanglement properties of energy eigenstates with the equilibration behaviour of complex quantum systems. 
However, at first sight, it may seem to fail to explain why many-body localized (MBL) systems equilibrate. This is because it is known that energy eigenstates of MBL systems fulfill Area Laws \cite{Bauer2013} and are well described by MPS \cite{Friesdorf2015}. 
At first sight, this suggests that the eigenstates of MBL systems feature much less entanglement than required for entanglement-ergodicity. 
However, entanglement-ergodicity allows to choose the region $A(E)$ arbitrarily. By choosing a region $A(E)$ to consist of every $k$-th spin, the results in this paper make it plausible that even MBL systems are entanglement-ergodic (while we only show results for translationally invariant MPS, we expect that disordered MPS typically have even higher entropies).

\section{Matrix-product states, R\'enyi entropies and the twisted transfer operator}
\label{sec:background}
Let us first briefly review the formalism of matrix product states. For an in-depth introduction to tensor networks see for example \cite{Bridgeman2016}. We consider a chain of $n$ spins, each of which is described by a $d$-dimensional Hilbert-space $\mathcal{H}_S$. In the following we refer to $d$ as the \emph{physical dimension}. An arbitrary state vector $\ket{\psi}$ can be written as
\begin{equation}\label{arb_state}
	\ket{\psi} = \sum_{i_1,...,i_n = 0}^{d-1} \psi^{i_1,...,i_n}\ket{i_1,...,i_n}
\end{equation}
for some $d^n$ parameters $\psi^{i_1,...,i_n}$. One can think of $\ket{\psi}$ as a $n$-$0$ tensor and of $\psi^{i_1,...,i_n}$ as its components. 
A matrix product state of bond dimension $D$ is a pure state such that
\begin{align}\label{MPS_def}
	\psi^{i_1,...,i_n} &= \tr\left[(A^{(1)})^{i_1}(A^{(2)})^{i_2}...(A^{(n)})^{i_n} B\right],
\end{align}
where both $B$ and each $(A^{(j)})^{i_k}$ are $D\times D$-matrices. 
The boundary operator $B$ specifies the boundary condition of the MPS. 
In particular, if $B=\1$ we have periodic boundary conditions and if $B=\ketbra{\Omega}{\Omega'}$ for some vectors $\ket{\Omega},\ket{\Omega'}$ we have open boundary conditions. In the following, we will only consider periodic boundary conditions.

The $D$-dimensional Hilbert-space $\mathcal{H}_B\simeq \mathbb C^D$ is generally referred to as the \textit{bond Hilbert space} and $D$ is called the \emph{bond dimension}. To avoid confusion we will mark physical indices by Roman letters and bond indices, such as the matrix elements $(A^{(j)})^{i,\alpha}_\beta$, by Greek letters in this subsection. \\

Note that \eqref{MPS_def} implies that $\ket{\psi}$ has $dnD^2$ parameters instead of $d^n$. In principle $D$ can be chosen as large as it needs to be, allowing MPS to represent any state. But the advantage of this representation becomes clear when we consider states that follow area laws. For instance any state that has its Schmidt-rank bounded by $c$ for any bipartition of the system can be exactly expressed by an MPS with $D=\mathcal{O}(c)$. More generally for many relevant states it is sufficient to have an area law on von Neumann entropy to guarantee arbitrarily good approximation with $D = \mathcal{O}(\mathrm{poly}(n))$ \cite{Verstraete2006,Perez-Garcia2007,Hastings2007,Schuch2008}.

At this point it is useful to introduce the graphical notation for tensor network states. We will represent a tensor by a box with legs, with the number of legs corresponding to the order of the tensor. Each leg will correspond to a copy of $\mathcal{H}_S$ (leg pointing up), $\mathcal{H}_S^*$ (leg pointing down), $\mathcal{H}_B$ (leg pointing left) and $\mathcal{H}_B^*$ (leg pointing right). Here are a few examples to see what type of object corresponds to which diagram:

\begin{gather*}
\includegraphics[width=0.06\textwidth, valign=c, raise=3pt]{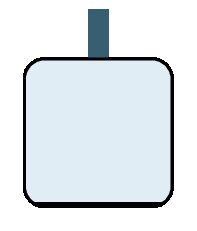} \in \mathcal{H}_S\;,\quad 
\includegraphics[width=0.06\textwidth, valign=c, raise=-3pt]{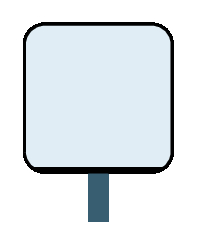} \in  \mathcal{H}_S^*\;,\\
\includegraphics[height=0.06\textwidth, valign=c]{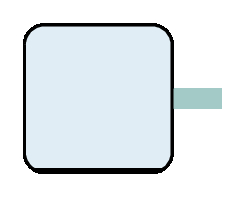} \in \mathcal{H}_B\;,\quad 
\includegraphics[height=0.06\textwidth, valign=c]{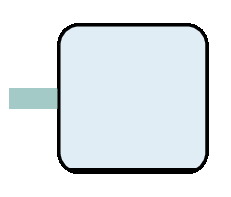} \in \mathcal{H}_B^*~, \\
\includegraphics[width=0.06\textwidth, valign=c]{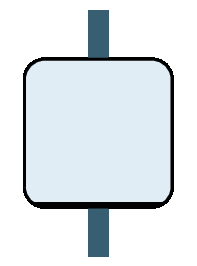} \in L(\mathcal{H}_S)\;,\quad 
\includegraphics[height=0.06\textwidth, valign=c]{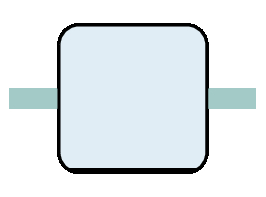} \in L(\mathcal{H}_B)~.
\end{gather*}

In this notation index contraction and partial tracing become:
\[
A^\alpha_\gamma B^\gamma_\beta = \includegraphics[height=0.06\textwidth, valign=c]{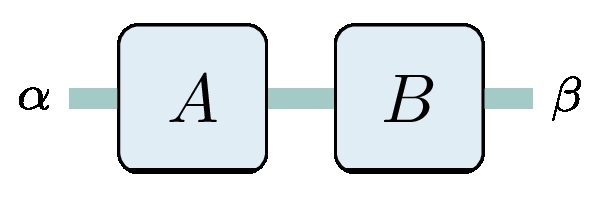}\;,\\ \Tr_{\mathcal{H}_{B_1}}[T] = \includegraphics[width=0.103\textwidth, valign=c, raise=2pt]{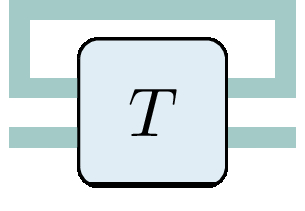}\;,
\]
where $A$ and $B$ are $1$-$1$ tensors, $T$ is a $2$-$2$ tensor and $\mathcal{H}_{B_1}$ refers to the first copy of $\mathcal{H}_{B}$ upon which $T$ acts. Using this notation, the MPS representation becomes (in the case of periodic boundary conditions):
\begin{equation}\label{MPS_diagram}
	\includegraphics[height=0.08\textwidth, valign=c]{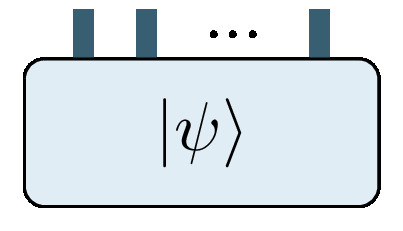}~ = \quad\includegraphics[height=0.08\textwidth, valign=c, raise=1pt]{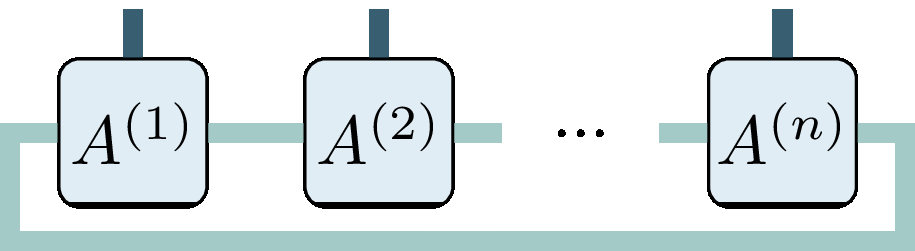}~.\nonumber
\end{equation}
Similarly, for $\bra{\psi}$ the result is:
\begin{equation}
	\includegraphics[height=0.08\textwidth, valign=c]{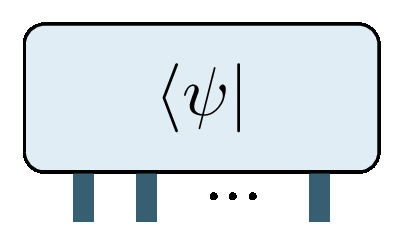}~ = \quad\includegraphics[height=0.08\textwidth, valign=c, raise=-1pt]{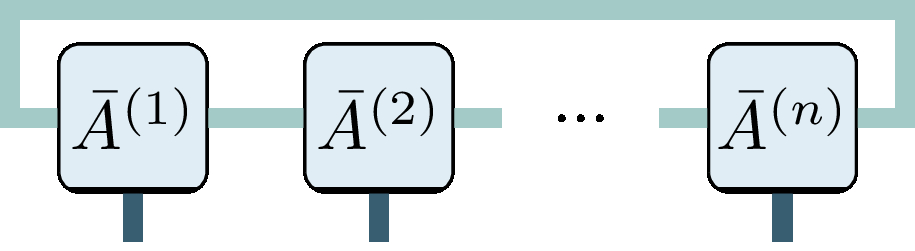}~.\nonumber
\end{equation}
The normalized density matrix corresponding to $\ket{\psi}$ is given by $\rho = \frac{1}{\braket{\psi}{\psi}} \ket{\psi}\bra{\psi}$. In the diagrammatic notation we get
\begin{equation}\label{rho_mps}
	 \hat\rho~ =\quad \includegraphics[width=0.35\textwidth, valign=c]{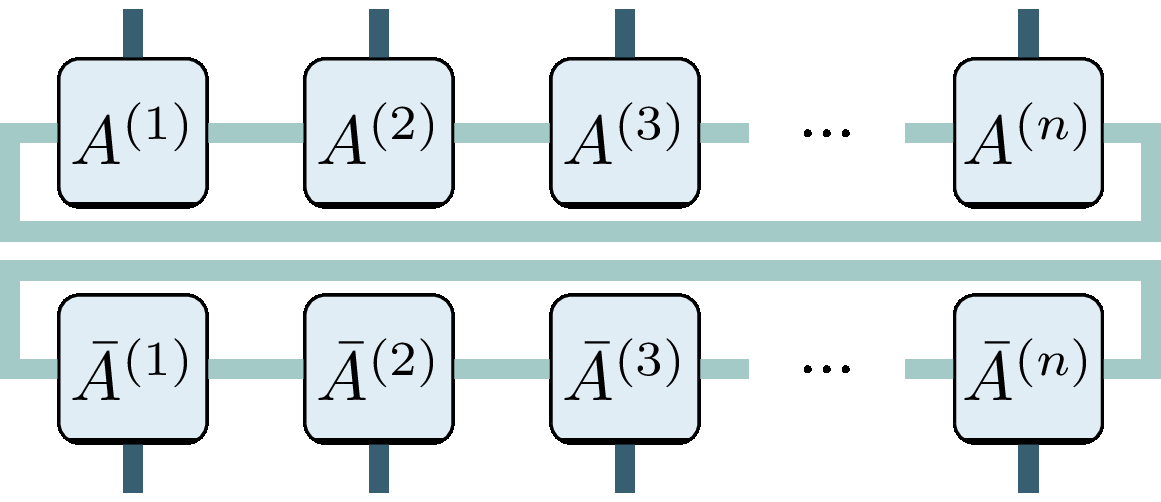}~,
\end{equation}
where we neglected the normalisation. We will always neglect the normalisation in the diagrams. In \eqref{rho_mps} it is easy to see that when we take a partial trace, we can consider the traced out site as an operator acting on $\mathcal{H}_B^{\otimes 2}$. This operator plays an important role and is called the \textit{transfer operator}. At a given site $j$ it is defined as:
\begin{equation}\label{def_T_formal}
	(T^{(j)})^{\alpha\beta}_{\gamma\delta}~ = ~(A^{(j)})^{i,\alpha}_\gamma (\bar{A}^{(j)})^\beta_{i,\delta}~ =~ \includegraphics[height=0.135\textwidth, valign=c]{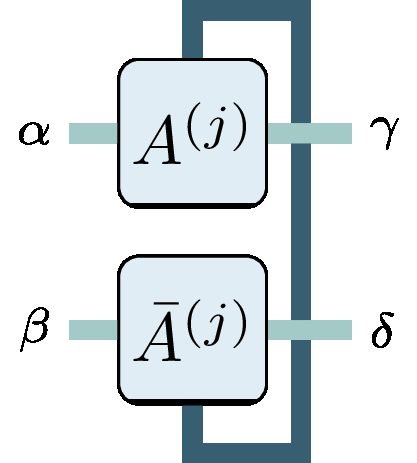}~,\nonumber
\end{equation}
or equivalently $T^{(j)} = (A^{(j)})^i\otimes(\bar A^{(j)})_i$, assuming Einstein summation convention. We can express the norm of $\ket{\psi}$ in terms of the transfer operator as
\begin{equation}\label{norm_psi}
	\begin{split}
		\braket{\psi}{\psi} 
		& = \Tr[T^{(1)}T^{(2)} \cdots T^{(n)}].
	\end{split}
\end{equation}
We emphasize here already that the transfer-operator is in general not a normal operator, i.e., it is not diagonalisable in an orthonormal basis, which can lead to technical difficulties.
However, the transfer-operator $T^{(j)}$ is in direct correspondence to the completely positive map
\begin{align}
	\mc T^{(j)}[X] := (A^{(j)})^i X (A^{(j)})^\dagger_i
\end{align}
acting on matrices $X\in \mathbb C^{D\times D}$. 
This correspondence, which we will refer to as \emph{Liouville representation} will be extremely useful later on. 
It arises by mapping a vector in $\ket{\phi}\ket{\psi}\in \mathcal H_B\otimes \mathcal H_B$ to the operator $\ket{\phi}\overline{\bra{\psi}}$, where the overline denotes complex conjugation in the basis $\{\ket{\alpha}\}$.
In particular, the spectra of $\mc T^{(j)}$ and $T^{(j)}$ are identical. From the theory of completely positive maps, it then follows that $\mc T^{(j)}$ (and $T^{(j)}$) has a positive (hence real) eigenvalue that is larger or equal to the modulus of all other eigenvalues. 
\begin{definition}[Gapped transfer operator]
	Let $T$ be a transfer-operator with eigenvalues (including multiplicities) $t_1\geq |t_2|\geq \cdots$. We say that $T$ is \emph{gapped} if $|t_j|<t_1$ for all $j\geq 2$ and define its gap as 
	\begin{align}
		\Delta := \frac{t_1 - |t_2|}{t_1} > 0.	
	\end{align}
\end{definition}
The property of being gapped is generic: If we choose the tensor $A^{(j)}$ randomly, then we obtain a gapped transfer operator with unit probability.
We will call the eigenvector to the eigenvalue $t_1$ the \emph{highest eigenvector} of $T$. 
Finally, we call a matrix product state \emph{translational invariant}, if all the tensors $A^{(j)}$ are identical, i.e., $A^{(j)}=A$ and $T^{(j)}=T$ for all $T$.

Any translationally invariant MPS with periodic boundary conditions is invariant under the transformation $A^i\mapsto X A^i X^{-1}$ for arbitrary invertible matrices $X$.
This \emph{gauge freedom} can be used to bring the tensors $A$ into a \emph{canonical form}, which is described in the following proposition. For more details in regard to this, see Ref.~\cite{Perez-Garcia2007}
(and Ref.~\cite{Fannes1992} for the equivalent results in the framework of finitely correlated states).
\begin{proposition}[Canonical form of translational invariant MPS, Theorem 4 of \cite{Perez-Garcia2007}]\label{can_form}
	Given a translational invariant MPS with periodic boundary conditions we can always decompose the matrices $A^i \in L(\mathcal{H}_B)$ as
	\begin{equation}\label{canonical_A}
		A^i = 
		\begin{pmatrix}
			\lambda_1 A^i_{(1)} & 0 & 0\\
			0 & \lambda_2 A^i_{(2)} & 0\\ 
			0 & 0 & \ddots
		\end{pmatrix}~,
	\end{equation}
	where $\lambda_j \geq 0$ for every $j$. We will call $\mathcal{H}_j$ the subspace corresponding to the $j$-th block, with $\mathcal{H}_i \perp \mathcal{H}_j$ for $i\neq j$.
	The $A^i_{(j)}$ blocks each satisfy the following conditions:
	\begin{itemize}
	 	\item[\textit{a.}] $P_j$ is the only fixed point of the irreducible, primitive, CP map $\mc T_j[X] = \sum_i A^i_{(j)}XA_{i,(j)}^{\dagger}$ (no sum on $j$), with $P_j$ the orthogonal projector onto $\mathcal{H}_j$.
	 	\item[\textit{b.}] $\mc T_j^*[\Lambda_j] = \Lambda_j$, where $\mc T_j^*$ denotes the adjoint map of $\mc T_j$. $\Lambda_j$ is such that $\left.\Lambda_j\right|_{\mathcal{H}_j}$ is positive and full rank and $\left.\Lambda_j\right|_{\mathcal{H}_j^c} = 0$.
	\end{itemize}
\end{proposition}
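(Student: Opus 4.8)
The plan is to combine the gauge freedom $A^i \mapsto X A^i X^{-1}$, which for periodic boundary conditions leaves $\ket\psi$ invariant by cyclicity of the trace, with Perron--Frobenius theory for positive maps applied to the completely positive transfer map $\mc T[X]=\sum_i A^i X (A^i)^\dagger$, which has the same spectrum as $T$. Concretely, I would proceed in three stages: (i) normalise $\mc T$ and split it into irreducible blocks by a suitable gauge, (ii) adjust the gauge inside each block so that the block map becomes unital, and (iii) read off $\Lambda_j$ as the Perron eigenvector of the adjoint block map.

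For stage (i), Perron--Frobenius for completely positive maps gives that $\mc T$ has spectral radius $\mu\ge 0$ attained at an eigenvalue with a positive semidefinite eigenvector. If $\mu=0$ then $T$ is nilpotent and $\ket\psi$ vanishes for large chains, so I assume $\mu>0$, rescale $A^i\to A^i/\sqrt\mu$, and set aside the factor $\sqrt\mu$ to be absorbed into the $\lambda_j$ at the end. After rescaling there is a density matrix $\rho$ with $\mc T[\rho]=\rho$, and since the range of $\sum_i A^i\rho(A^i)^\dagger$ equals $\sum_i A^i\,V$ with $V:=\mathrm{supp}(\rho)$, we get $A^iV\subseteq V$ for all $i$. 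In a basis adapted to $V\oplus V^\perp$ every $A^i$ is then block upper-triangular, and the $(1,1)$-compression $\mc B[X]=\sum_i B^i X(B^i)^\dagger$ inherits $\rho$ as a full-rank fixed point. Iterating this extraction on $V$ and on the map induced on $V^\perp$ terminates because $\dim\mathcal H_B<\infty$ and produces a gauge in which the $A^i$ are simultaneously block upper-triangular with irreducible diagonal blocks (each of spectral radius $1$ after rescaling its norm into the coefficient $\lambda_j$). It then remains to kill all the off-diagonal blocks by a gauge of the form $X=\1+Z$ with $Z$ strictly upper-triangular: this amounts to solving simultaneous Sylvester-type equations $B^iZ-ZD^i=C^i$, whose solvability is ultimately guaranteed by the structure theory of positive maps and follows from having chosen the invariant subspaces as supports of fixed points; the required gauge can alternatively be built from the Cesàro mean $\lim_N\tfrac1N\sum_{n<N}\mc T^n$.

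For stage (ii), in each irreducible block Perron--Frobenius for irreducible completely positive maps gives a unique (up to scaling) positive semidefinite fixed point of $\mc T_j$, which irreducibility forces to be full rank on $\mathcal H_j$; call it $\omega_j>0$. Conjugating the block tensors by $\omega_j^{-1/2}$ turns the block map into the unital map $\mc T_j$ with $\mc T_j[P_j]=\omega_j^{-1/2}\big(\sum_i A^i_{(j)}\omega_j(A^i_{(j)})^\dagger\big)\omega_j^{-1/2}=P_j$, and the uniqueness of this fixed point is precisely irreducibility; this gives condition \textit{a.} (after, if necessary, blocking a fixed number of sites, which replaces $T$ by a power of itself, so as to make each block aperiodic and hence primitive). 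For stage (iii), $\mc T_j^*$ is itself an irreducible completely positive map of spectral radius $1$, so Perron--Frobenius endows it with a unique positive semidefinite fixed point $\Lambda_j$, which irreducibility forces to be strictly positive on $\mathcal H_j$ and which I extend by zero on $\mathcal H_j^c$; this is condition \textit{b.} Collecting all the scalars accumulated in the normalisations into the coefficients $\lambda_j\ge 0$ completes the construction.

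The step I expect to be hardest is the block-diagonalisation in stage (i): extracting a common invariant subspace from the support of a fixed point is immediate, but showing that the resulting off-diagonal blocks can actually be gauged away — so that the canonical form is genuinely block-diagonal rather than merely block-triangular — relies on the structure theory of positive maps and is the technical heart of the argument.
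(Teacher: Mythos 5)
The paper does not prove this proposition; it is imported verbatim as Theorem~4 of \cite{Perez-Garcia2007}, so your proposal is compared against the standard proof in that reference rather than against anything in this paper. Your overall architecture — Perron--Frobenius for the CP transfer map, invariant subspaces obtained from supports of fixed points, rescaling each irreducible block by $\omega_j^{-1/2}$ to make it unital, and reading off $\Lambda_j$ as the Perron eigenvector of the adjoint — is exactly the standard route, and stages (ii) and (iii) are essentially correct (modulo the blocking-of-sites caveat you already flag to upgrade irreducibility to primitivity).

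The genuine gap is in the step you yourself identify as the hardest: passing from block upper-triangular to block diagonal. Your proposed mechanism — a gauge $X=\1+Z$ with $Z$ strictly upper-triangular solving $B^iZ-ZD^i=C^i$ simultaneously for all $i$ — does not work in general: for $d>1$ this is an overdetermined system of Sylvester equations with a single unknown $Z$, and there is no structural reason (and no true theorem) guaranteeing a common solution; the off-diagonal blocks of a tuple of simultaneously block-triangular matrices cannot in general be removed by any similarity transformation. The correct argument is different and much simpler, and it is the reason the proposition is stated for \emph{periodic} boundary conditions: if every $A^i$ is block upper-triangular, then so is every product $A^{i_1}\cdots A^{i_n}$, and
\begin{equation}
\tr\big[A^{i_1}\cdots A^{i_n}\big]=\sum_j \tr\big[A^{i_1}_{(j)}\cdots A^{i_n}_{(j)}\big],
\end{equation}
i.e.\ the trace is blind to the strictly upper-triangular part. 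One therefore simply \emph{discards} the off-diagonal blocks: the resulting block-diagonal tensor generates the same state $\ket\psi$, even though it is not gauge-equivalent to the original tensor. This is why the proposition says the matrices ``can be decomposed'' as in \eqref{canonical_A} — the canonical-form tensor is in general a different representative of the same state, not a similarity transform of the one you started with. With that replacement your stage (i) closes, and the rest of your argument goes through.
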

We will take the convention that $\lambda_1 \geq \lambda_2 \geq \lambda_3 \geq ...$ . 
Furthermore, normalization freedom allows us to choose $\lambda_1 = 1 > \lambda_2$, where the inequality follows if the transfer operator $T$ is gapped. 
Also observe that the spectrum of the transfer operator $T$ remains invariant under a gauge-transformation. 
Furthermore, whether the highest eigenvector of $T$ is a product state or not is also a gauge-invariant property. 
We will therefore from now on assume that the MPS is in canonical form and the transfer operator is normalized.

The canonical form does not imply that the CP-map $\mc T$ is block-diagonal with respect to the above block-decomposition.
Nevertheless, we have
\begin{align}
	\mc T[P_1] &= \sum_i \sum_{j,k}\lambda_j\lambda_k A^{(i)}_j P_1 {A^{(i)}_k}^\dagger = \sum_i \lambda_1^2 A^{(i)}_1 P_1 {A^{(i)}_1}^\dagger \nonumber \\
	&= \lambda_1^2 \mc T_1[P_1] = P_1.
\end{align}
So $P_1$ is a fixed-point of $\mc T$. If $T$ is gapped, then this fixed-point is unique. In particular, this implies that the normalized highest eigenvector of $T$ is given by
\begin{align}\label{eq:t1}
	\ket{t_1} = \frac{1}{\sqrt{d_1}} \sum_{j=1}^{d_1} \ket{j}\otimes\ket{j},
\end{align}
where $\{\ket{j}\}_{j=1}^{d_1}$ is an orthonormal basis of $\mc H_1$ and $d_1 = \dim(\mc H_1)$. This result tells us that $\ket{t_1}$ is a product state if and only if $d_1=1$ if and only if $P_1$ has rank one.
It is easy to see that this in turn implies that the MPS converges to a product state in the thermodynamic limit (in a weak sense). 
Similarly, we find that $\Lambda_1$ is the unique fixed point of the dual map $\mc T^*$. Later, we will often be interested in the asymptotic case map $\lim_{m\rightarrow \infty} \mc T^m$. 
If $T$ is gapped, its form is very simple, as shown by the following Lemma, which is a reformulation of a standard result on completely positive maps (see, for example, Ref.~\cite{Wolf2012}). 
\begin{lemma}\label{lemma:powersprojection}
	Let $T$ be normalized, gapped and in canonical form. Then
	\begin{align}
		\lim_{m\rightarrow\infty}\mc T^m[X] = P_1\Tr[\Lambda_1 X] =: \mc P_{\Lambda_1}[X]. 
	\end{align}
\end{lemma}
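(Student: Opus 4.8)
The plan is to exploit the Jordan (spectral) decomposition of $\mc T$ regarded as a linear operator on the $D^2$-dimensional space $\mathbb C^{D\times D}$ equipped with the Hilbert--Schmidt inner product, the point being that gappedness forces the eigenvalue $t_1=1$ to be algebraically simple. First I would check that the candidate limit $\mc P_{\Lambda_1}[X]=P_1\Tr[\Lambda_1 X]$ is exactly the spectral projection of $\mc T$ onto the eigenvalue $1$. Using $\mc T[P_1]=P_1$ (established just above the Lemma in the main text) and $\mc T^*[\Lambda_1]=\Lambda_1$, a one-line computation gives $\mc T\,\mc P_{\Lambda_1}=\mc P_{\Lambda_1}=\mc P_{\Lambda_1}\,\mc T$, since $\mc P_{\Lambda_1}\mc T[X]=P_1\Tr[\mc T^*[\Lambda_1]X]=\mc P_{\Lambda_1}[X]$ and $\mc T\,\mc P_{\Lambda_1}[X]=\mc T[P_1]\Tr[\Lambda_1 X]=\mc P_{\Lambda_1}[X]$; and because $\Lambda_1$ is supported on $\mc H_1=\mathrm{ran}(P_1)$ and normalized to unit trace, $\Tr[\Lambda_1 P_1]=\Tr[\Lambda_1]=1$, so $\mc P_{\Lambda_1}^2=\mc P_{\Lambda_1}$. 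Hence $\mc P_{\Lambda_1}$ is a rank-one idempotent commuting with $\mc T$ whose range is the eigenspace $\ker(\mc T-1)=\mathrm{span}(P_1)$.

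Next I would set $\mc R:=\mc T-\mc P_{\Lambda_1}$. The relations above immediately give $\mc R\,\mc P_{\Lambda_1}=\mc P_{\Lambda_1}\,\mc R=0$, so the binomial expansion telescopes to $\mc T^m=\mc P_{\Lambda_1}+\mc R^m$ for every $m\ge 1$, and it remains to prove $\mc R^m\to 0$. Since $\mathbb C^{D\times D}=\mathrm{ran}(\mc P_{\Lambda_1})\oplus\ker(\mc P_{\Lambda_1})$ is a direct sum of $\mc T$-invariant subspaces, $\mathrm{spec}(\mc T)$ is the disjoint union (with multiplicities) of $\mathrm{spec}(\mc T|_{\mathrm{ran}\,\mc P_{\Lambda_1}})=\{1\}$ and $\mathrm{spec}(\mc T|_{\ker\,\mc P_{\Lambda_1}})$; because $t_1=1$ is algebraically simple the latter set is exactly $\{t_2,t_3,\dots\}$, all of modulus at most $1-\Delta<1$. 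On $\ker(\mc P_{\Lambda_1})$ one has $\mc R=\mc T$, while $\mc R$ annihilates $\mathrm{ran}(\mc P_{\Lambda_1})$, so the spectral radius of $\mc R$ equals $1-\Delta<1$. Putting $\mc R$ into Jordan normal form (or invoking Gelfand's formula) then yields $\norm{\mc R^m}\le C\,m^{D^2}(1-\Delta)^m\to 0$, so that $\mc T^m\to\mc P_{\Lambda_1}$, with convergence exponential in $m$.

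The only genuinely delicate point — and the reason the statement is not entirely trivial — is that $\mc T$ is in general not normal, so a priori the eigenvalue $1$ could sit atop a nontrivial Jordan block, which would both spoil the identification of $\mc P_{\Lambda_1}$ with a spectral projection and leave a modulus-$1$ contribution in $\mc R$. This is ruled out by the definition of a gapped transfer operator: the eigenvalues are enumerated with multiplicity and $|t_j|<t_1$ for all $j\ge2$, so $1$ occurs exactly once. Equivalently, if there were a generalized eigenvector $Y$ with $(\mc T-1)Y=P_1$, then $P_1=\mc P_{\Lambda_1}P_1=\mc P_{\Lambda_1}\mc T Y-\mc P_{\Lambda_1}Y=\mc P_{\Lambda_1}Y-\mc P_{\Lambda_1}Y=0$, a contradiction. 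With that settled everything else is bookkeeping; alternatively, the whole statement can be quoted directly from standard references on the asymptotics of completely positive maps, e.g.\ Ref.~\cite{Wolf2012}.
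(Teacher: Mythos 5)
Your proof is correct. The paper itself does not prove this lemma --- it merely notes that it is a reformulation of a standard result on completely positive maps and cites Ref.~\cite{Wolf2012} --- and your argument is exactly the standard one that reference supplies: identify $\mc P_{\Lambda_1}$ as the rank-one spectral idempotent commuting with $\mc T$ (using $\mc T[P_1]=P_1$, $\mc T^*[\Lambda_1]=\Lambda_1$, $\Tr[\Lambda_1 P_1]=1$), write $\mc T^m=\mc P_{\Lambda_1}+\mc R^m$, and kill $\mc R^m$ via Gelfand's formula since the gap forces the spectral radius of $\mc R$ to be $1-\Delta<1$. You also correctly isolate and dispose of the only delicate point, namely that non-normality could in principle place a Jordan block at the eigenvalue $1$, which is excluded both by the multiplicity convention in the paper's definition of a gapped transfer operator and by your direct computation $\mc P_{\Lambda_1}[(\mc T-1)Y]=0\neq P_1$.
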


In the canonical form, we see that in general $\mc T$ does not map the identity to itself, but we always have $\mc T[\1]\leq \1$. We call a CP-map that has this property \emph{sub-unital}.
An important property of sub-unital maps that we will need later is that they fulfill a Schwarz-inequality (see, for example, Refs.~\cite{Choi1974,Bhatia2009,Wolf2012} for more discussions of Schwarz inequalities for CP-maps):
\begin{lemma}\label{schwarz}
	Let $A : L(\mathcal{H}) \rightarrow L(\mathcal{H}')$ be a completely positive, sub-unital map (i.e., $A[\mathbbm{1}] \leq \mathbbm{1}$), with  $\mathcal{H}$ and $\mathcal{H}'$ two finite-dimensional Hilbert spaces. Then for all $X \in L(\mathcal{H})$
	\begin{equation}\label{schwarz_ineq}
		A[X]A[X^{\dagger}] \leq A[XX^{\dagger}].
	\end{equation}
\end{lemma}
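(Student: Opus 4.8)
The plan is to obtain \eqref{schwarz_ineq} from the standard Kadison--Schwarz argument for $2$-positive maps, the only twist being that we must accommodate sub-unitality in place of unitality. First I would record that a completely positive map is automatically Hermiticity-preserving, so that $A[X^\dagger] = A[X]^\dagger$ for every $X$; hence it suffices to prove $A[X]A[X]^\dagger \leq A[XX^\dagger]$.

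Next I would start from the manifestly positive semidefinite operator matrix on $\mathcal{H}\oplus\mathcal{H}$,
\[
\begin{pmatrix} X \\ \1 \end{pmatrix}\begin{pmatrix} X^\dagger & \1 \end{pmatrix} \;=\; \begin{pmatrix} XX^\dagger & X \\ X^\dagger & \1 \end{pmatrix} \;\geq\; 0 .
\]
Since $A$ is completely positive it is in particular $2$-positive, so the ampliation $A\otimes\mathrm{id}_2$ sends this to a positive matrix,
\[
\begin{pmatrix} A[XX^\dagger] & A[X] \\ A[X^\dagger] & A[\1] \end{pmatrix} \;\geq\; 0 .
\]
Now sub-unitality enters: because $\1 - A[\1]\geq 0$, the matrix $\mathrm{diag}(0,\,\1-A[\1])$ is positive semidefinite, and adding it to the previous matrix gives
\[
\begin{pmatrix} A[XX^\dagger] & A[X] \\ A[X^\dagger] & \1 \end{pmatrix} \;\geq\; 0 .
\]
Finally I would evaluate this quadratic form on vectors of the form $\ket{\phi}\oplus\big(-A[X]^\dagger\ket{\phi}\big)$ — equivalently, take the Schur complement of the invertible lower-right block $\1$ — which produces $\bra{\phi}\big(A[XX^\dagger]-A[X]A[X]^\dagger\big)\ket{\phi}\geq 0$ for every $\ket{\phi}$, i.e.\ precisely \eqref{schwarz_ineq}.

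I do not expect a genuine obstacle here; the computation is textbook. The two points I would state carefully are that complete positivity is used only through $2$-positivity, and that the usual (unital) Kadison--Schwarz inequality survives the weakening to sub-unital maps, with the insertion of $\mathrm{diag}(0,\1-A[\1])$ being exactly what makes this go through. As an alternative derivation I would also mention the Stinespring/Kraus route: writing $A[X] = R^\dagger (X\otimes\1) R$ with $R^\dagger R = A[\1]\leq\1$, so that $\norm{R}\leq 1$ and hence $RR^\dagger\leq\1$, one gets $A[X]A[X^\dagger] = R^\dagger(X\otimes\1)(RR^\dagger)(X^\dagger\otimes\1)R \leq R^\dagger(XX^\dagger\otimes\1)R = A[XX^\dagger]$, again using only that conjugation by a fixed operator preserves operator ordering.
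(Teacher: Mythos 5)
Your proof is correct, and your primary argument takes a genuinely different route from the paper's. You prove the inequality via $2$-positivity: the operator matrix $\bigl(\begin{smallmatrix} XX^\dagger & X \\ X^\dagger & \1 \end{smallmatrix}\bigr)\geq 0$ is mapped by the ampliation of $A$ to $\bigl(\begin{smallmatrix} A[XX^\dagger] & A[X] \\ A[X^\dagger] & A[\1] \end{smallmatrix}\bigr)\geq 0$, sub-unitality lets you replace the lower-right block by $\1$ after adding $\mathrm{diag}(0,\1-A[\1])\geq 0$, and the Schur complement (equivalently, your evaluation on $\ket{\phi}\oplus(-A[X]^\dagger\ket{\phi})$) yields \eqref{schwarz_ineq}. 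The paper instead works in a Stinespring representation $A[X]=V(X\otimes\1)V^\dagger$, uses $VV^\dagger=A[\1]\leq\1$ to conclude $V^\dagger V\leq\1$ (via coincidence of the nonzero spectra), and sandwiches: $A[X]A[X^\dagger]=V(X\otimes\1)V^\dagger V(X^\dagger\otimes\1)V^\dagger\leq V(XX^\dagger\otimes\1)V^\dagger=A[XX^\dagger]$. Your closing ``alternative'' derivation is precisely this argument with $R=V^\dagger$, so you have in effect reproduced the paper's proof as well. What your main route buys is that only $2$-positivity of $A$ is actually used and no Kraus/Stinespring dilation is needed; what the paper's route buys is that it stays entirely within the Kraus-operator formalism already in play throughout the paper. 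The small points you flag as needing care (Hermiticity preservation of completely positive maps, and the survival of the Kadison--Schwarz inequality under the weakening from unital to sub-unital) both check out.
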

\begin{proof}
	Since $A$ is completely positive, by the Stinespring representation we can choose an operator $V$ such that $A[X] = V(X\otimes\mathbbm{1})V^{\dagger}$: for $K_j$ the Kraus operators of $A$, choose $V = \sum_j K_j\otimes\bra{j}$.	Because $A$ is sub-unital, we have
	\begin{equation}
		VV^{\dagger} = A[\mathbbm{1}] \leq \mathbbm{1}.
	\end{equation}
	Thus the eigenvalues of $VV^{\dagger}$ are smaller than or equal to $1$. Since the eigenvalues of $VV^{\dagger}$ and $V^{\dagger}V$ are either $0$ or the singular values squared of $V$ we also have $V^{\dagger}V \leq \mathbbm{1}$, implying
	\begin{align}
	A[X]A[X^{\dagger}] &= V(X\otimes\mathbbm{1})V^{\dagger}V(X^{\dagger}\otimes\mathbbm{1})V^{\dagger} \\
		&\leq V(X\otimes\mathbbm{1})\mathbbm{1}(X^{\dagger}\otimes\mathbbm{1})V^{\dagger} \\
		&= V(XX^{\dagger}\otimes\mathbbm{1})V^{\dagger} = A[XX^{\dagger}]~.
	\end{align}
\end{proof}

\subsection{R\'enyi entropies of sub-systems}
Let us now discuss R\'enyi entropies and how to compute them in the framework of MPS.
R\'enyi entropies provide a family of entropy-measures of quantum states, depending on a real parameter $\alpha$: 
\begin{definition}[R\'enyi entropies]
	\label{RS}
	Given a density matrix $\rho$ its \emph{R\'enyi-$\alpha$ entropy} $S_\alpha(\rho)$ is defined to be
	\begin{align} 
		S_\alpha(\rho) &= \frac{1}{1-\alpha}\log(\Tr[\rho^\alpha])
	\end{align}
	for every $0 < \alpha < \infty$, $\alpha \neq 1$.  For the values $\alpha = 0,1,\infty$ we define the corresponding quantities by continuity as $S_0(\rho) = \log(\rank[\rho])$, $S_1(\rho) = - \Tr[\rho \log(\rho)]$ and $S_\infty = -\log(\|\rho\|)$. 
\end{definition}
R\'enyi entropies are important, since different R\'enyi entropies measure different properties of the density matrix. 
We can notice that $S_1$ is the von Neumann entropy, while $S_0$ and $S_\infty$ are, respectively, the max- and min-entropies, which are often encountered in quantum information theory. $\|\rho\|$ is equal to the greatest singular value of $\rho$. Since $\rho$ is a density matrix its eigenvalues are equal to the singular values. Therefore if $\rho$ has spectral decomposition $\lambda_1 \geq \lambda_2 \geq \cdots$, we have $S_\infty(\rho) = -\log(\lambda_1)$. 
For all allowed values of $\alpha$, Rényi entropies have the following properties:
\begin{itemize}
	\item $0 \leq S_\alpha(\rho) \leq \log(\rank[\rho])=S_0(\rho)$ and $S_\alpha(\rho)=0$ if and only if $\rho$ is pure.
	\item Additivity: $S_\alpha(\rho\otimes\sigma) = S_\alpha(\rho) + S_\alpha(\sigma)$.
	\item Monotonicity: if $ \alpha \leq \beta$ then $S_\alpha(\rho) \geq S_\beta(\rho)$. 
	\item For $\alpha \geq \beta > 1$ the following inequality holds \cite{Beck1990,Wilming2019}:
	\begin{equation}\label{norm_behaviour}
		S_\infty(\rho) \geq \frac{\alpha-1}{\alpha}S_\alpha(\rho) \geq \frac{\beta-1}{\beta}S_\beta(\rho).
	\end{equation}
\end{itemize}

Using \eqref{norm_behaviour} together with monotonicity, we can lower-bound all Rényi entropies with the particularly well-behaved $S_2$: for all $\alpha \geq 0$ and all states $\rho$, we have
\begin{equation}\label{one_s_to_rule_them_all}
	S_\alpha\left(\rho\right) \geq S_\infty\left(\rho\right) \geq \frac{1}{2} S_2\left( \rho\right).
\end{equation}
The Rényi-2 entropy of a state is $S_2(\rho) = -\log(\Tr[\rho^2])$. We will now discuss how to compute the R\'enyi-2 entropy of a sub-system for a matrix-product state. 
We consider the sub-system given by simply taking every $k$-th spin of the chain. Given a spin chain of length $n$ described by the density matrix $\rho^{(n)}$, we therefore define $\rho^{(n)}_k$ to be the reduced state describing every $k$-th spin (for $k \in \mathbb{N}$ and $2\leq k\leq n$). 
In other words, we trace out from the global state every spin of which the site index is not a multiple of $k$. 
\newpage
If we denote by $j$ the site indices we have
	$$\rho^{(n)}_k = \Tr_{j\notin k[1,\ldots n]}[\rho^{(n)}]~.$$
We can admit without much loss of generality that $n$ is a multiple of $k$ and do this in the following. In this case, we have:
\begin{widetext}
$$\rho_k = \includegraphics[width=0.98\textwidth, valign=c]{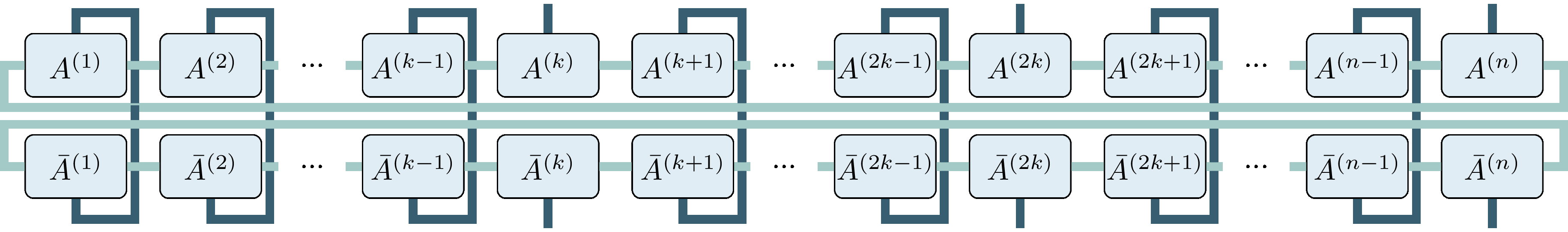}.$$
To make the structure more clear we can regroup the operators in the following way:
$$\includegraphics[width=0.72\textwidth, valign=c]{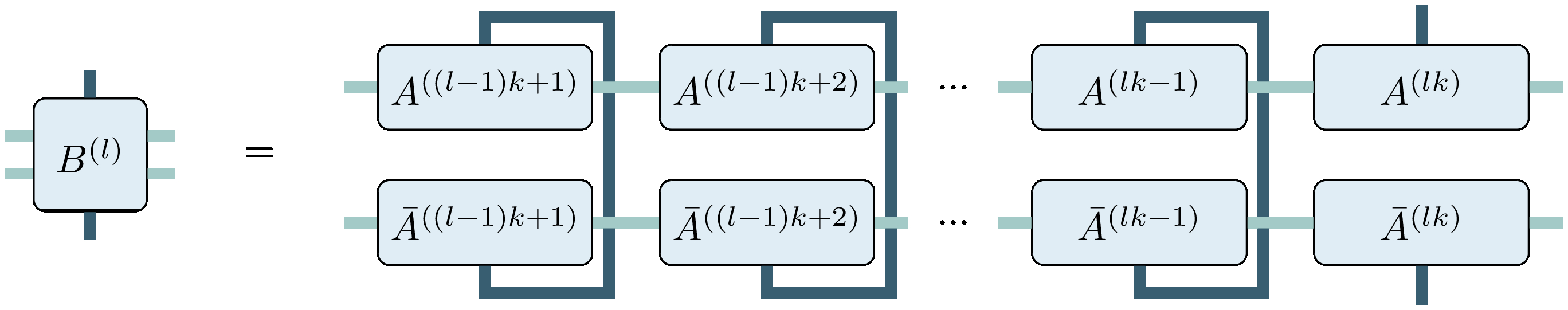}~.$$
\end{widetext}
This regrouping defines the operator $B^{(l)}:\mathcal H_S \otimes \mathcal H_B \otimes \mathcal H_B \rightarrow \mathcal H_S \otimes \mathcal H_B \otimes \mathcal H_B$.
We can now write $\rho_k^{(n)}$ as a product of the tensors $B^{(l)}$:
$$ \rho_k^{(n)} = \includegraphics[width=0.28\textwidth, valign=c]{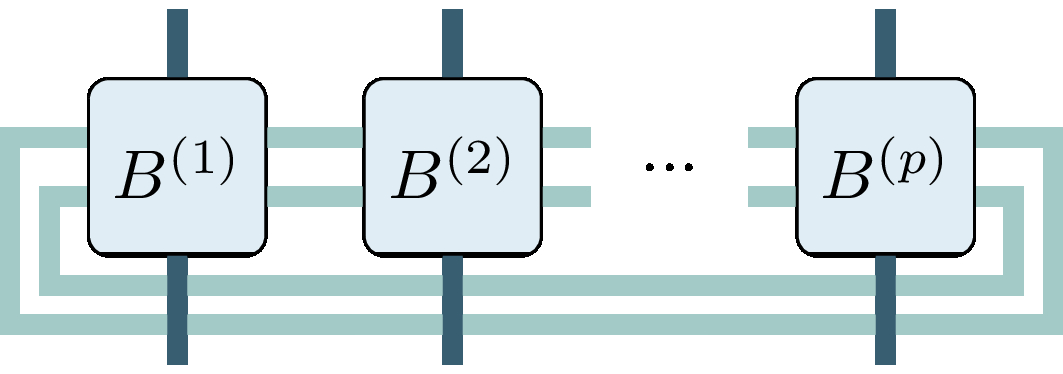}~, $$
where $n = kp$.  Note that on the r.h.s. we left out the normalization factor given by
\begin{align}
1/\Tr[T^{(1)}T^{(2)} \cdots T^{(n)}].
\end{align}
Now define an operator $\hat T^{(l)}_k$ by composing $B^{(l)}$ vertically with itself and taking the trace over the \emph{physical bonds}:
$$\includegraphics[width=0.23\textwidth, valign=c]{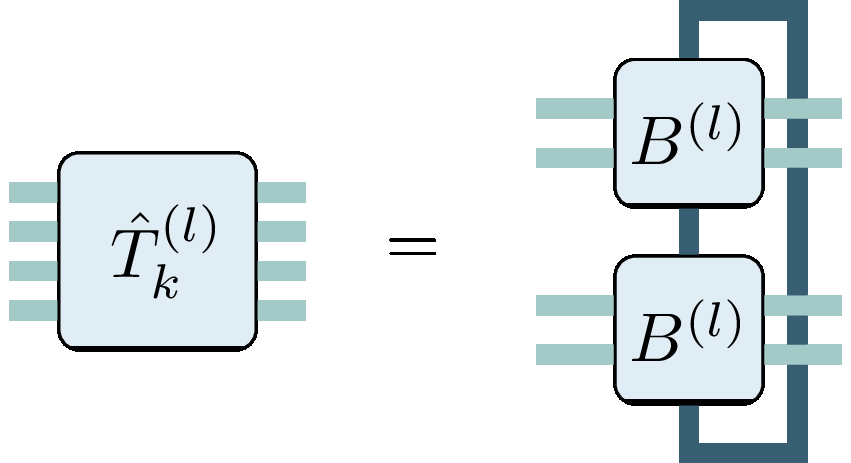}~.$$
For reasons that will become clear in the next section, we call $\hat T^{(l)}_k$ the \emph{twisted transfer operator}. We can use it to obtain a compact representation of $\Tr\big[(\rho_k^{(n)})^2\big]$:
\begin{align}\label{eq:fundamental}
	\Tr\Big[\Big(\rho_k^{(n)}\Big)^2\Big] = \frac{\Tr\big[\hat T^{(1)}_k \hat T^{(2)}_k \cdots \hat T^{(n/k)}_k \big]}{\Tr\big[T^{(1)}T^{(2)} \cdots T^{(n)}\big]^2},
\end{align}
which allows us to compute $S_2$ provided we have sufficient information about $\hat T^{(j)}_k$ and $T^{(j)}_k$. In fact we will see that the R\'enyi-2 entropy density only depends on the largest eigenvalue of the twisted transfer operator.

\subsection{The twisted transfer operators and higher R\'enyi entropies}
Form \eqref{eq:fundamental} it is clear that the twisted transfer operator plays a central role in estimating the R\'enyi entropies of $\rho_k$. 
We therefore now discuss the structure of this operator in more detail. 
We restrict to the translational invariant case, but it should be clear how the following statements generalize to the general case. In the following we therefore set $A^{(j)}=A,T^{(j)}=T,$ and $\hat T^{(l)}_k= \hat T_k$. 

It is useful to denote by $\mf{s}$ the \emph{swap} operator acting on $\mc H_B$:
\begin{align}
	\mf s \ket{\phi_1}\otimes \ket{\phi_2} = \ket{\phi_2}\otimes\ket{\phi_1},\quad \forall \ket{\phi_1},\ket{\phi_2}\in \mc H_B.\nonumber
\end{align}
The twisted transfer operator acts on $\mc H_B^{\otimes 4}$. We label the different copies of $\mc H_B$ from $1$ to $4$ and define the \emph{partial swap} operator
\begin{align}
	U := \1_{13}\otimes \mf s_{24},
\end{align}
where the subscripts indicate on which factors the operators act. Concretely, we have
\begin{align}
	U\ket{\phi_1}\ket{\phi_2}\ket{\phi_3}\ket{\phi_4} = \ket{\phi_1}\ket{\phi_4}\ket{\phi_3}\ket{\phi_2}.\nonumber
\end{align}
Note that $U= U^\dagger$ and $ U^2=\1$.
We can then write the twisted transfer operator $\hat T_k$ as
\begin{align}
 \hat T_k = (T\otimes T)^{k-1}  U (T\otimes T) U, 
\end{align}
which can be graphically represented as follows:
\begin{align}
	\includegraphics[width=0.37\textwidth, valign=c]{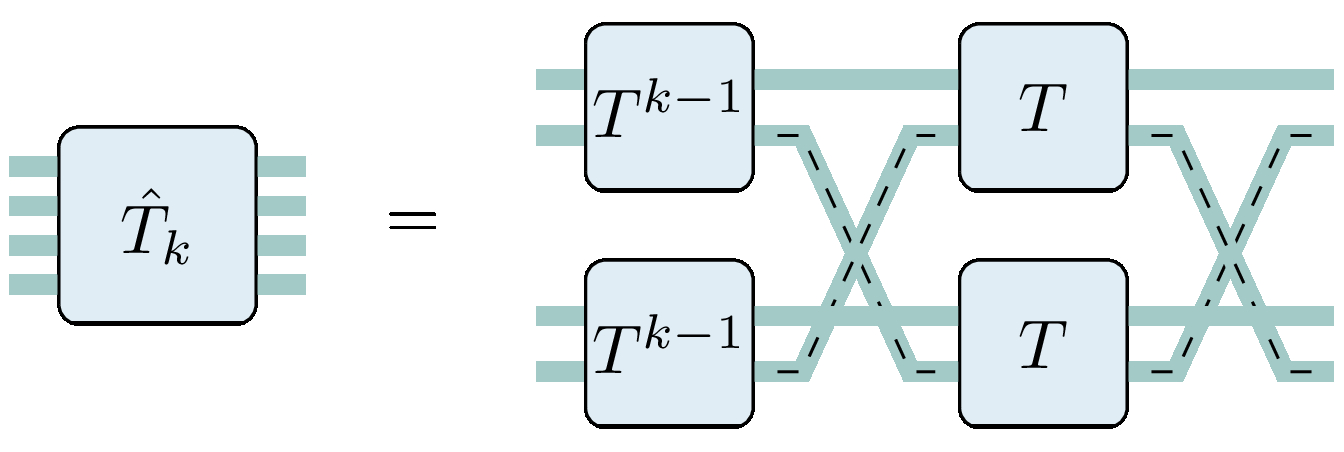}.\nonumber
\end{align}
A similar calculation as for the second R\'enyi entropy can be done for higher order R\'enyi entropies $S_l$ with $2\leq l\in \mathbbm{N}$, which are essentially computed by $\Tr\big[(\rho_k^{(n)})^l\big]$. 
One then finds that the result is expressed in terms of the twisted transfer operator of order $l$, which we define as
\begin{align}
	\tensor[^{}_l]{\hat T}{^{}_k}	:= (T^{\otimes l})^{k-1} U_l^{-1} T^{\otimes l} U_l, \quad U_l := \1_{135 \cdots} \otimes \mf s_{246\cdots},
\end{align}
where now $\mf s$ denotes the unitary \emph{shift operator} on $\mc H_B^{\otimes l}$, which acts by cyclically permuting the tensor factors:
\begin{align}
	\mf s \ket{\phi_1}\otimes\ket{\phi_2}\otimes\cdots\otimes\ket{\phi_l} = \ket{\phi_2} \otimes \ket{\phi_3}\otimes \cdots \otimes\ket{\phi_1}.\nonumber
\end{align}
Here, we decided to not indicate the $l$-dependence of $\mf s$ explicitly, as we will later mostly be concerned with the case $l=2$ and in any case $l$ will be obvious from the context. 
We clearly have $\hat T_k \equiv \tensor[^{}_2]{\hat T}{^{}_k}$ and $U\equiv U_2$.
Using the twisted transfer operator of order $l$, we then have
\begin{align}\label{eq:trrhol}
	\Tr[(\rho^{(n)}_k)^l] = \frac{\Tr\big[(\,\tensor[^{}_l]{\hat T}{^{}_k})^{n/k}\big]	}{\Tr[T^n]^l}. 
\end{align}

Later we will often work in the Liouville representation. Then the unitary $U_l$ is associated to right-multiplication by $\mf s$:
\begin{align}
	\mf U_l[X] = X\mf s 
\end{align}
and the twisted transfer-operator of order $l$ acts as
\begin{align}
	\tensor[^{}_l]{\mc T}{^{}_k}[X] = (\mc T^{\otimes l})^{k-1}\big[ \mc T^{\otimes l}[X\mf s]\mf s^\dagger \big]. 
\end{align}
To conclude this section, we compute the twisted transfer operator in the limits $k\rightarrow \infty$ and $m\rightarrow \infty$. 
\begin{lemma}\label{lemma:twistedTpowers}
Let $T$ be gapped and in canonical form. Then the twisted transfer-operator fulfills	
\begin{align}
	\lim_{m\rightarrow \infty} \lim_{k\rightarrow \infty}	\tensor[^{}_l]{\mc T}{^{m}_k}[P_1^{\otimes l}]	&= P_1^{\otimes l} \Tr\big[\Lambda_1^l\big]^2. 
\end{align}
	In particular, $P_1^{\otimes l}$ is the only eigenvector, with eigenvalue $\Tr[\Lambda_1^l]^2 = \exp(- 2(l-1)S_l(\Lambda_1))$.
	\begin{proof}
		Since $T$ is gapped, $\mc T^m$ converges to $\mc P_{\Lambda_1}$ in the limit $m\rightarrow \infty$ by Lemma~\ref{lemma:powersprojection}. Thus
	\begin{align}
		\lim_{m\rightarrow \infty} \lim_{k\rightarrow \infty}	\tensor[^{}_l]{\mc T}{^{m}_k}[P_1^{\otimes l}]  &= \lim_{m\rightarrow \infty} \mc P_{\Lambda_1}^{\otimes l}\big[ {\mc T^m}^{\otimes l}[P_1^{\otimes l} \mf s]\mf s^\dagger \big] \nonumber \\
		&= \mc P_{\Lambda_1}^{\otimes l}\Big[\mf s \mc P_{\Lambda_1}^{\otimes l}[\mf s^\dagger]\Big] \nonumber \\
		&= P_{1}^{\otimes l}\Tr\big[\Lambda_1^{\otimes l} \mf s\big]\Tr\big[\Lambda_1^{\otimes l} \mf s^\dagger \big]\nonumber \\
		&= P_1^{\otimes l}\Tr[\Lambda_1^l]^2,
	\end{align}
		where we used $\Tr[A_1\otimes A_2 \otimes\cdots\otimes A_l\mf s] = \Tr[A_1\cdots A_l]$ and $\mc P_{\Lambda_1}[X]=P_1 \Tr[\Lambda_1 X]$. 
	\end{proof}
\end{lemma}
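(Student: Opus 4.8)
The plan is to work entirely in the Liouville representation, where the twisted transfer operator of order $l$ with block-size $m$ reads
\begin{align}
	\tensor[^{}_l]{\mc T}{^{m}_k}[X] = \big((\mc T^m)^{\otimes l}\big)^{k-1}\Big[ (\mc T^m)^{\otimes l}[X\mf s]\,\mf s^\dagger \Big],
\end{align}
and to evaluate the two limits one after the other, invoking Lemma~\ref{lemma:powersprojection} at each stage. The only additional ingredients are the explicit action $\mc P_{\Lambda_1}[Y] = P_1\Tr[\Lambda_1 Y]$ of the limiting projection, the identity $\Lambda_1 P_1 = \Lambda_1$ (valid because $\Lambda_1$ is supported on $\mc H_1$), and the trace identity $\Tr[A_1\otimes\cdots\otimes A_l\,\mf s] = \Tr[A_1\cdots A_l]$ for the cyclic shift, together with its analogue for $\mf s^\dagger$.

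First I would take $k\to\infty$. The ``bulk'' factor is $\big((\mc T^m)^{\otimes l}\big)^{k-1} = (\mc T^{\otimes l})^{m(k-1)}$, and since $T$ is gapped and in canonical form, Lemma~\ref{lemma:powersprojection} applied to each of the $l$ tensor factors gives $(\mc T^N)^{\otimes l}\to \mc P_{\Lambda_1}^{\otimes l}$ as $N\to\infty$; with $N=m(k-1)$ this yields
\begin{align}
	\lim_{k\to\infty}\tensor[^{}_l]{\mc T}{^{m}_k}[P_1^{\otimes l}] = \mc P_{\Lambda_1}^{\otimes l}\Big[ (\mc T^m)^{\otimes l}[P_1^{\otimes l}\mf s]\,\mf s^\dagger \Big].
\end{align}
Letting then $m\to\infty$ and using the same lemma once more to replace $(\mc T^m)^{\otimes l}$ by $\mc P_{\Lambda_1}^{\otimes l}$, the right-hand side becomes $\mc P_{\Lambda_1}^{\otimes l}\big[\mc P_{\Lambda_1}^{\otimes l}[P_1^{\otimes l}\mf s]\,\mf s^\dagger\big]$. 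Expanding the inner projection with $\mc P_{\Lambda_1}^{\otimes l}[Y]=P_1^{\otimes l}\Tr[\Lambda_1^{\otimes l}Y]$, using $\Lambda_1 P_1=\Lambda_1$ and $\Tr[\Lambda_1^{\otimes l}\mf s]=\Tr[\Lambda_1^{\otimes l}\mf s^\dagger]=\Tr[\Lambda_1^l]$, each of the two applications of $\mc P_{\Lambda_1}^{\otimes l}$ contributes a factor $\Tr[\Lambda_1^l]$, leaving $P_1^{\otimes l}\Tr[\Lambda_1^l]^2$. Since $\Lambda_1$ is a normalized density matrix, $\Tr[\Lambda_1^l]=\exp(-(l-1)S_l(\Lambda_1))$, giving the claimed eigenvalue $\exp(-2(l-1)S_l(\Lambda_1))$; and because the limiting map has one-dimensional range spanned by $P_1^{\otimes l}$, this is its only eigenvector with non-zero eigenvalue.

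Since the argument is essentially bookkeeping on top of Lemma~\ref{lemma:powersprojection}, I do not expect a serious obstacle; the points deserving a little care are (i) that Lemma~\ref{lemma:powersprojection}, stated for $\mc T$, transfers to $\mc T^{\otimes l}$, which is immediate from $(\mc T^N)^{\otimes l}=(\mc T^{\otimes l})^N\to\mc P_{\Lambda_1}^{\otimes l}$ factor by factor; (ii) keeping the order of the limits exactly as the expression dictates ($k$ first, then $m$); and (iii) verifying that $\Tr[\Lambda_1^{\otimes l}\mf s^\dagger]$ equals $\Tr[\Lambda_1^l]$ rather than some rearrangement of it — which holds because all tensor factors are the same positive operator $\Lambda_1$, so no ordering or complex-conjugation subtlety arises.
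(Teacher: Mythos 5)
Your proposal is correct and follows essentially the same route as the paper's own proof: replace the $(k-1)$-fold bulk factor and then the inner $(\mc T^m)^{\otimes l}$ by $\mc P_{\Lambda_1}^{\otimes l}$ via Lemma~\ref{lemma:powersprojection}, and evaluate the resulting expression with $\mc P_{\Lambda_1}[Y]=P_1\Tr[\Lambda_1 Y]$ and $\Tr[\Lambda_1^{\otimes l}\mf s]=\Tr[\Lambda_1^{\otimes l}\mf s^\dagger]=\Tr[\Lambda_1^l]$. The points you flag for care (factorwise convergence of $(\mc T^N)^{\otimes l}$, order of limits, equality of the $\mf s$ and $\mf s^\dagger$ traces) are exactly the right ones and are handled correctly.
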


\subsection{The shift-operator and completely positive maps}
It is apparent from the discussion of the twisted transfer operator, that the shift-operator in conjunction with tensor powers of completely positive maps are central to the behaviour of R\'enyi entropies. 
We now make some simple, but important observation about the relationship between completely positive maps and the shift-operator $\mf s$.
\begin{lemma}\label{lemma:symmetry}
	$\mf s$ is a symmetry of $\mc T^{\otimes l}$:
	\begin{align}
		\mc T^{\otimes l}[\mf s X\mf s^\dagger] = \mf s \mc T^{\otimes l}[X] \mf s^\dagger.
	\end{align}		
	Therefore, 
	\begin{align}
	[\mf s,\mc T^{\otimes l}[\mf s]]=[\mf s,\mc T^{\otimes l}[\mf s^\dagger]]=0.
	\end{align}
\begin{proof}
	That $\mf s$ is a symmetry can be seen from expanding $X$ into an operator basis in tensor-product form. For the commutation relations, take $X = \mf s$ or $X = \mf s^\dagger$ and use the symmetry property.
\end{proof}
\end{lemma}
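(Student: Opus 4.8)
The plan is to exploit the single structural fact that makes this work: $\mc T^{\otimes l}$ factorizes over the $l$ tensor factors, while $\mf s$ merely permutes those factors cyclically, so conjugation by $\mf s$ only reshuffles \emph{which} copy of $\mc T$ acts on which leg — and since all the copies are literally the same map, nothing changes. Concretely, I would first reduce to product operators by linearity: every $X\in L(\mc H_B^{\otimes l})$ is a finite linear combination $X=\sum_\alpha X_1^{\alpha}\otimes\cdots\otimes X_l^{\alpha}$, and both sides of the claimed identity are linear in $X$, so it suffices to treat $X=X_1\otimes\cdots\otimes X_l$. For such an $X$, the definition $\mf s\ket{\phi_1}\otimes\cdots\otimes\ket{\phi_l}=\ket{\phi_2}\otimes\cdots\otimes\ket{\phi_1}$ gives directly $\mf s X\mf s^\dagger = X_l\otimes X_1\otimes\cdots\otimes X_{l-1}$, a cyclic rotation of the factors.

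The symmetry then follows from one line of bookkeeping:
\[
\mc T^{\otimes l}[\mf s X\mf s^\dagger] = \mc T[X_l]\otimes\mc T[X_1]\otimes\cdots\otimes\mc T[X_{l-1}] = \mf s\,\bigl(\mc T[X_1]\otimes\cdots\otimes\mc T[X_l]\bigr)\,\mf s^\dagger = \mf s\,\mc T^{\otimes l}[X]\,\mf s^\dagger .
\]
An entirely equivalent route, matching the remark in the text, is to work with Kraus operators: if $\{A^i\}$ are the Kraus operators of $\mc T$, then $\{A^{i_1}\otimes\cdots\otimes A^{i_l}\}$ are Kraus operators of $\mc T^{\otimes l}$, conjugation by $\mf s$ sends $A^{i_1}\otimes\cdots\otimes A^{i_l}$ to $A^{i_l}\otimes A^{i_1}\otimes\cdots\otimes A^{i_{l-1}}$, and the sum over all index tuples $(i_1,\dots,i_l)$ is invariant under the induced relabeling of dummy indices. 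Either way one needs \emph{no} hypothesis on the structure of $\mc T$'s Kraus operators, since the product structure is supplied for free by the tensor power.

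Finally, the two commutation relations come from specializing the symmetry to $X=\mf s$ and $X=\mf s^\dagger$ together with unitarity of $\mf s$. Taking $X=\mf s$ and using $\mf s\,\mf s\,\mf s^\dagger=\mf s$ gives $\mc T^{\otimes l}[\mf s]=\mf s\,\mc T^{\otimes l}[\mf s]\,\mf s^\dagger$, hence $\mf s\,\mc T^{\otimes l}[\mf s]=\mc T^{\otimes l}[\mf s]\,\mf s$; taking $X=\mf s^\dagger$ and using $\mf s\,\mf s^\dagger=\1$ gives $\mc T^{\otimes l}[\mf s^\dagger]=\mf s\,\mc T^{\otimes l}[\mf s^\dagger]\,\mf s^\dagger$, and therefore $[\mf s,\mc T^{\otimes l}[\mf s^\dagger]]=0$. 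There is no genuine obstacle here — the statement is essentially a bookkeeping exercise — and the only place one can slip is keeping the direction of the cyclic permutation (and the identity $\mf s^\dagger=\mf s^{l-1}$) consistent on both sides of the equation.
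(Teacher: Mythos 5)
Your proof is correct and is essentially the paper's own argument spelled out in detail: the paper likewise reduces to product operators ("expanding $X$ into an operator basis in tensor-product form") and obtains the commutation relations by specializing to $X=\mf s,\mf s^\dagger$. The only quibble is that with the paper's convention $\mf s\ket{\phi_1}\otimes\cdots\otimes\ket{\phi_l}=\ket{\phi_2}\otimes\cdots\otimes\ket{\phi_1}$ one actually gets $\mf s X\mf s^\dagger = X_2\otimes\cdots\otimes X_l\otimes X_1$ rather than $X_l\otimes X_1\otimes\cdots\otimes X_{l-1}$, but since you apply your convention consistently on both sides and either direction is a cyclic permutation, the argument is unaffected.
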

A particularly useful implication of the previous Lemma is that $\mf s \mc T^{\otimes 2}[\mf s]$ is Hermitian, since in this case we have $\mf s= \mf s^\dagger$. 

\begin{lemma}\label{lemma:sEsnorm} Let $\mc E: L(\mc H) \rightarrow L(\mc H')$ be a sub-unital, completely positive map and $U\in L(\mc H), V\in L(\mc H')$ be unitaries. Then $\norm{V \mc E[U]}\leq 1$.
			In particular $\norm{\mc T^{\otimes l}[\mf s]\mf s^\dagger} \leq 1$. 
		\begin{proof} By unitary invariance of the operator norm we have
			\[\norm{V\mc E[U]}^2 = \norm{\mc E[U]}^2 = \max_\rho \Tr[\rho \mc E[U]^\dagger \mc E[U]],\]
			where the optimization is done over density matrices. We now use the Schwarz-inequality from Lemma~\ref{schwarz} to get
			$\mc E[U]^\dagger \mc E[U]\leq \mc E[U^\dagger U] \leq \1$. 
		\end{proof}

		\end{lemma}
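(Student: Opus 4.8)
The plan is to strip off the outer unitary $V$ for free and then reduce the bound to a single application of the operator Schwarz inequality. First I would use that multiplication by a unitary on either side leaves the operator norm invariant: $\norm{V\mc E[U]} = \norm{\mc E[U]}$ since $V$ is unitary. So it suffices to show $\norm{\mc E[U]}\leq 1$, equivalently $\norm{\mc E[U]^\dagger\mc E[U]}\leq 1$, and since $\mc E[U]^\dagger\mc E[U]$ is positive semidefinite this is the same as $\mc E[U]^\dagger\mc E[U]\leq \1$ (or, if one prefers, $\max_\rho \Tr[\rho\,\mc E[U]^\dagger\mc E[U]]\leq 1$ over density matrices $\rho$).

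Next I would invoke Lemma~\ref{schwarz}, which applies because $\mc E$ is completely positive and sub-unital. The one bookkeeping point is that a completely positive map is Hermiticity-preserving (immediate from its Kraus form), so $\mc E[U]^\dagger = \mc E[U^\dagger]$. Feeding $X = U^\dagger$ into Lemma~\ref{schwarz} then gives
\[
	\mc E[U]^\dagger\mc E[U] = \mc E[U^\dagger]\mc E[U] = A[X]A[X^\dagger] \leq A[XX^\dagger] = \mc E[U^\dagger U] = \mc E[\1],
\]
and sub-unitality, $\mc E[\1]\leq \1$, closes the estimate: $\mc E[U]^\dagger\mc E[U]\leq \1$, hence $\norm{\mc E[U]}\leq 1$ and therefore $\norm{V\mc E[U]}\leq 1$. (One could equally take $X = U$ and bound $\mc E[U]\mc E[U]^\dagger$, using $\norm{\mc E[U]} = \norm{\mc E[U]^\dagger}$; the conclusion is the same.)

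For the ``in particular'' statement I would simply specialize: in canonical form $\mc T$ is a completely positive, sub-unital map, hence so is $\mc T^{\otimes l}$ (a tensor product of completely positive, sub-unital maps is again completely positive and sub-unital, as $\mc T^{\otimes l}[\1] = (\mc T[\1])^{\otimes l}\leq \1$), while $\mf s$ and $\mf s^\dagger$ are unitary on $\mc H_B^{\otimes l}$. Applying the first part with $\mc E = \mc T^{\otimes l}$, $U = \mf s$, $V = \mf s^\dagger$ yields $\norm{\mc T^{\otimes l}[\mf s]\mf s^\dagger} = \norm{\mf s^\dagger\mc T^{\otimes l}[\mf s]} \leq 1$. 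There is no genuine obstacle here — the statement is a two-line corollary of Lemma~\ref{schwarz}; the only things to keep straight are the adjoint bookkeeping (choosing the argument of $\mc E$ so that the right-hand side of the Schwarz inequality collapses to $\mc E[\1]$) and the fact that left- or right-multiplication by a unitary preserves the operator norm, which is what both removes $V$ and lets $\mf s^\dagger$ be moved to the other side in the corollary.
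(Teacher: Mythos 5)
Your proposal is correct and follows essentially the same route as the paper: strip off the unitary by norm invariance, then apply the Schwarz inequality of Lemma~\ref{schwarz} with sub-unitality to get $\mc E[U]^\dagger\mc E[U]\leq\mc E[U^\dagger U]\leq\1$. The only difference is that you make explicit the Hermiticity-preservation bookkeeping ($\mc E[U]^\dagger=\mc E[U^\dagger]$) that the paper leaves implicit, which is a fair point of care but not a different argument.
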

The result of the previous Lemma together with the fact that $\mc T^{\otimes 2}[\mf s]\mf s$ is Hermitian has the following important implication:
\begin{lemma}\label{lemma:1D}
	Let $\mc T: L(\mc H)\rightarrow L(\mc H)$ be a unital CP-map with $\dim(\mc H)=D$ and $0<\rho \in L(\mc H\otimes \mc H)$ be a density matrix with full rank. Suppose that $\1$ is the unique fixed-point of $\mc T$. Then
	\begin{align}
	\Tr[\rho \mc T\otimes \mc T[\mf s]\mf s] = 1 \quad\Rightarrow\quad D=1.
	\end{align}
	\begin{proof}
		Let $\rho = \sum_i p_i \proj{i}$ be the spectral decomposition of $\rho$ and $r_i := \bra{i}\mc T[\mf s]\mf s \ket{i}$. From Lemma~\ref{lemma:sEsnorm} we know that $-1\leq r_i\leq 1$. 
		Then $\sum_i p_i r_i =1$ implies $r_i=1$ for all $i$. But then $\mc T\otimes \mc T[\mf s]\mf s = \1$ and $\mc T\otimes \mc T[\mf s]=\mf s$. Since $\1$ is the unique fixed-point of $\mc T\otimes \mc T$ by assumption, this implies $\mf s = \pm\1$. This is only possible if $D=1$, since for $D>1$ there exist both symmetric and anti-symmetric states in $\mc H\otimes \mc H$ and therefore $\mf s$ has eigenvalues $+1$ and $-1$. 
	\end{proof}
\end{lemma}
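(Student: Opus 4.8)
The plan is to introduce the operator $M := \mc T\otimes\mc T[\mf s]\,\mf s$ and to establish three things in turn: (i) $M$ is Hermitian with $\norm{M}\le 1$; (ii) the hypothesis $\Tr[\rho M]=1$ together with $\rho>0$ forces $M=\1$; (iii) this makes $\mf s$ a fixed point of $\mc T^{\otimes 2}$, which is impossible unless $D=1$. For (i) I would not do anything new: by Lemma~\ref{lemma:symmetry} (with $l=2$, where $\mf s=\mf s^\dagger$) we have $[\mf s,\mc T^{\otimes 2}[\mf s]]=0$, and since CP maps preserve Hermiticity, $\mc T^{\otimes 2}[\mf s]$ is Hermitian; hence $M$ is a product of two commuting Hermitian operators and is Hermitian — this is exactly the remark made just after Lemma~\ref{lemma:sEsnorm}. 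Moreover $\mc T$ unital implies $\mc T^{\otimes 2}$ is unital, hence sub-unital, so Lemma~\ref{lemma:sEsnorm} (with $U=V=\mf s$) gives $\norm{M}=\norm{\mc T^{\otimes 2}[\mf s]\mf s^\dagger}\le 1$. Thus the eigenvalues of $M$ all lie in $[-1,1]$.

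For (ii), write the spectral decomposition $\rho=\sum_i p_i\proj{i}$ with all $p_i>0$ (full rank) and $\sum_i p_i=1$. Then $1=\Tr[\rho M]=\sum_i p_i\bra{i}M\ket{i}$ is a convex combination of numbers $\bra{i}M\ket{i}\in[-1,1]$; since this equals the maximal possible value $1$, we must have $\bra{i}M\ket{i}=1$ for every $i$. Expanding each $\ket{i}$ in an eigenbasis of the Hermitian contraction $M$ shows $\ket{i}$ is then an eigenvector of $M$ with eigenvalue $1$; as the $\ket{i}$ span $\mc H\otimes\mc H$, this gives $M=\1$, i.e.\ $\mc T\otimes\mc T[\mf s]\,\mf s=\1$, and using $\mf s^2=\1$, $\mc T\otimes\mc T[\mf s]=\mf s$.

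For (iii), $\mf s$ is now a fixed point of the unital CP map $\mc T^{\otimes 2}$. Here I would invoke the uniqueness hypothesis: because $\1$ is the unique fixed point of $\mc T$ — and therefore, at least when $\mc T$ is primitive (so $\mc T^{\otimes 2}$ has a non-degenerate peripheral eigenvalue, which is the case in our application), also of $\mc T^{\otimes 2}$ — we get $\mf s\propto\1$; being a self-adjoint unitary, $\mf s=\pm\1$. But for $D\ge 2$ both the symmetric and the antisymmetric subspaces of $\mc H\otimes\mc H$ are non-zero, so $\mf s$ has both eigenvalues $+1$ and $-1$, a contradiction. Hence $D=1$.

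I expect the one delicate point to be precisely the step "$\1$ is the unique fixed point of $\mc T$ $\Rightarrow$ $\1$ is the unique fixed point of $\mc T^{\otimes 2}$": in general an irreducible channel can carry peripheral eigenvalues of the form $e^{2\pi i/p}$, which would produce extra fixed points of $\mc T^{\otimes 2}$, so one should either restrict to primitive $\mc T$ (as holds in the gapped setting where this lemma is used) or argue around it — for instance, by observing that a fixed self-adjoint unitary of a unital Schwarz map lies in its multiplicative domain, so the projectors $(\1\pm\mf s)/2$ are themselves fixed, and then ruling out a non-trivial fixed projector. Everything else is routine.
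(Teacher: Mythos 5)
Your proof is correct and follows essentially the same route as the paper's: spectral decomposition of $\rho$, Hermiticity and the norm bound on $\mc T\otimes\mc T[\mf s]\mf s$ from Lemmas~\ref{lemma:symmetry} and~\ref{lemma:sEsnorm}, saturation of the convex combination forcing $\mc T\otimes\mc T[\mf s]=\mf s$, and then $\mf s=\pm\1$. The subtlety you flag is real --- the paper's own proof silently upgrades ``$\1$ is the unique fixed point of $\mc T$'' to ``$\1$ is the unique fixed point of $\mc T\otimes\mc T$'', which can fail for irreducible but non-primitive $\mc T$ --- but in every place the lemma is invoked the relevant map ($\mc T_1$, or a gapped normalized transfer operator) is primitive, so your proposed restriction closes the gap without weakening any downstream result.
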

Finally, the following Lemma provides a simple relation between the singular values of a sub-unital completely positive map. 
\begin{lemma}\label{lemma:swap-singular-values} Let $\mc T:L(\mc H)\rightarrow L(\mc H)$ be a sub-unital, completely positive map with matrix-representation $T$ and $D=\dim(\mc H)$. Then 
			\[ \Tr[\mf s \mc T\otimes \mc T[\mf s]]=\sum_i s_i(\mc T)^2 \leq D^2.\]
		Furthermore, if $\1$ is the unique fixed-point of $\mc T$ and $\sum_i s_i^2(\mc T)=D^2$, then $D=1$.
		\begin{proof}	
			Pick a basis and write $\mf s = \sum_{i,j=1}^D |i\rangle\langle j|\otimes |j\rangle\langle i|$. Then
			\begin{align}
				\Tr[\mf s\mc T\otimes \mc T[\mf s]] &= \sum_{i,j} \Tr[\mf s \mc T[|i\rangle\langle j|]\otimes \mc T[|j\rangle\langle i|]] \\
				&=\sum_{i,j} \Tr[\mc T[|i\rangle\langle j|] \mc T[|j\rangle\langle i|]] \\
				&=\sum_{i,j} \Tr[ (|j\rangle\langle i|)^\dagger \mc T^*\mc T[|j\rangle\langle i|]] \\
				&= \Tr[T^* T] = \sum_i s_i(T)^2 = \sum_i s_i(\mc T)^2. 
			\end{align}
			where we used that the operators $|i\rangle\langle j|$ provide an orthonormal basis with respect to the Hilbert-Schmidt inner product and that the trace of an operator coincides with the sum of its eigenvalues. 
			On the other hand, we know that $\norm{\mf s \mc T\otimes \mc T[\mf s]}\leq 1$ by Lemma~\ref{lemma:sEsnorm}. 
			Hence $\Tr[\mf s \mc T\otimes \mc T[\mf s]]\leq \Tr[\1] \norm{\mf s \mc T\otimes \mc T[\mf s]} \leq D^2$.
			Finally, if $\1$ is the unique fixed-point of $\mc T$, then we can use Lemma~\ref{lemma:1D} with $\rho = \1/D^2$ to conclude that $\sum_i s_i(\mc T)^2 = D^2$ implies $D=1$.			
		\end{proof}
\end{lemma}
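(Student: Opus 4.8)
The plan is to prove the stated equality by a direct computation in a fixed basis, derive the bound $\le D^2$ from the operator-norm estimate of Lemma~\ref{lemma:sEsnorm}, and then read off the rigid case $D=1$ as an immediate consequence of Lemma~\ref{lemma:1D}.

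First I would fix an orthonormal basis $\{\ket{i}\}_{i=1}^D$ of $\mc H$ and write $\mf s = \sum_{i,j=1}^D \ketbra{i}{j}\otimes\ketbra{j}{i}$. Substituting this into $\Tr[\mf s\,\mc T\otimes\mc T[\mf s]]$ and using the elementary identity $\Tr[\mf s\,A\otimes B] = \Tr[AB]$ collapses the left-hand side to $\sum_{i,j}\Tr\big[\mc T[\ketbra{i}{j}]\,\mc T[\ketbra{j}{i}]\big]$. Since a completely positive map is Hermiticity-preserving, $\mc T[\ketbra{i}{j}] = \mc T[(\ketbra{j}{i})^\dagger] = \mc T[\ketbra{j}{i}]^\dagger$, so each summand equals $\Tr\big[\mc T[\ketbra{j}{i}]^\dagger\,\mc T[\ketbra{j}{i}]\big] = \langle \ketbra{j}{i},\, \mc T^*\mc T[\ketbra{j}{i}]\rangle$ in the Hilbert--Schmidt inner product. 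Because $\{\ketbra{j}{i}\}_{i,j}$ is a Hilbert--Schmidt-orthonormal basis of $L(\mc H)$, summing over $i,j$ gives $\Tr[\mc T^*\mc T] = \Tr[T^\dagger T] = \sum_i s_i(T)^2 = \sum_i s_i(\mc T)^2$, which is the claimed identity.

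Next, for the inequality I would apply Lemma~\ref{lemma:sEsnorm} to the sub-unital completely positive map $\mc E = \mc T\otimes\mc T$ with the unitaries $U=V=\mf s$ (recall $\mf s = \mf s^\dagger$), yielding $\norm{\mf s\,\mc T\otimes\mc T[\mf s]}\le 1$. Since this operator acts on the $D^2$-dimensional space $\mc H\otimes\mc H$, its trace satisfies $\big|\Tr[\mf s\,\mc T\otimes\mc T[\mf s]]\big|\le D^2\,\norm{\mf s\,\mc T\otimes\mc T[\mf s]}\le D^2$, so $\sum_i s_i(\mc T)^2\le D^2$. (Hermiticity of $\mf s\,\mc T\otimes\mc T[\mf s]$, noted after Lemma~\ref{lemma:symmetry}, is a handy consistency check but is not needed for this trace bound.)

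Finally, for the rigid case: if $\1$ is the unique fixed point of $\mc T$ then $\mc T$ is in particular unital, and $\rho = \1/D^2$ is a full-rank density matrix on $\mc H\otimes\mc H$. The hypothesis $\sum_i s_i(\mc T)^2 = D^2$ together with the identity just established gives $\Tr[\rho\,\mc T\otimes\mc T[\mf s]\,\mf s] = \tfrac{1}{D^2}\sum_i s_i(\mc T)^2 = 1$, and Lemma~\ref{lemma:1D} then forces $D=1$. The only real care needed is in the first step: one must remember to use Hermiticity-preservation of $\mc T$ to turn the product $\mc T[\ketbra{i}{j}]\,\mc T[\ketbra{j}{i}]$ into a manifestly non-negative Hilbert--Schmidt square and thereby recognize $\Tr[T^\dagger T]$; the inequality and the $D=1$ conclusion then follow directly from Lemmas~\ref{lemma:sEsnorm} and~\ref{lemma:1D}.
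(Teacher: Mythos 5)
Your proposal is correct and follows essentially the same route as the paper's proof: the same basis expansion of $\mf s$ with the identity $\Tr[\mf s\, A\otimes B]=\Tr[AB]$ and Hermiticity-preservation to recognize $\Tr[T^\dagger T]$, the same appeal to Lemma~\ref{lemma:sEsnorm} for the trace bound, and the same application of Lemma~\ref{lemma:1D} with $\rho=\1/D^2$ for the rigidity statement. Your explicit remarks on Hermiticity-preservation and on why $\mc T\otimes\mc T$ qualifies for Lemma~\ref{lemma:sEsnorm} are steps the paper leaves implicit, but the argument is the same.
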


\section{Extensivity of R\'enyi entropies}
\label{sec:translational-invariant}
We now give a precise statement of our first main result and discuss its proof. 
As before, we call the eigenvector $\ket{t_1}$ corresponding to the largest eigenvalue $t_1$ of $T$ the \emph{highest eigenvector}.
Furthermore, as above, we denote by $\mc T$ the CP-map corresponding to the transfer operator $T$.
We first express the R\'enyi-$l$ entropy densities in terms of the largest eigenvalue of the twisted transfer operator of order $l$:
\begin{proposition}\label{proposition:renyil}
Let $\hat\rho_k^{(n)}$ as above for a translational invariant MPS with gapped transfer operator $T$ and periodic boundary conditions. Then
	 \begin{equation}\label{eq:liminf}
		 \liminf_{n\rightarrow \infty} \frac{k}{n}S_l\Big(\rho_k^{(n)}\Big) = -\frac{1}{l-1}\log\big(\big|\tensor[^{}_l]{\hat{t}}{}^{(k)}_1\big|\big)
	 \end{equation}
	 where $\tensor[^{}_l]{\hat t}{}^{(k)}_1$ is the greatest eigenvalue of the twisted transfer operator $\tensor[^{}_l]{\hat T}{^{}_k}$. 
	 \begin{proof} See section~\ref{sec:proofliminf}.
	\end{proof}
\end{proposition}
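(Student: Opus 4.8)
The plan is to read off the exponential growth rates of the numerator and denominator in the exact identity \eqref{eq:trrhol}. Writing $p:=n/k$ (we may restrict to $k\mid n$, as in the text) and abbreviating $\hat T:=\tensor[^{}_l]{\hat T}{^{}_k}$, Definition~\ref{RS} combined with \eqref{eq:trrhol} gives
\begin{align}
	\frac{k}{n}\,S_l\big(\rho_k^{(n)}\big)=\frac{-1}{p\,(l-1)}\Big(\log\Tr[\hat T^p]-l\log\Tr[T^{kp}]\Big).
\end{align}
First I would deal with the denominator: since $T$ is normalized and gapped, $t_1=1$ is its unique eigenvalue of modulus one and is non-degenerate, so $\Tr[T^{kp}]=1+\sum_{i\geq 2}t_i^{kp}\to 1$ and hence $\tfrac1p\log\Tr[T^{kp}]\to 0$. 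Therefore
\begin{align}
	\liminf_{n\to\infty}\frac{k}{n}\,S_l\big(\rho_k^{(n)}\big)=\frac{-1}{l-1}\,\limsup_{p\to\infty}\frac1p\log\Tr[\hat T^p],
\end{align}
and everything reduces to showing $\limsup_{p}\frac1p\log\Tr[\hat T^p]=\log\big|\tensor[^{}_l]{\hat t}{}^{(k)}_1\big|$, where $\tensor[^{}_l]{\hat t}{}^{(k)}_1$ is the eigenvalue of $\hat T$ of largest modulus.

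The crucial point is that $\Tr[\hat T^p]>0$ for every $p$: by \eqref{eq:trrhol} it equals $\Tr[(\rho_k^{(n)})^l]\,\Tr[T^{kp}]^l$, and $\rho_k^{(n)}$ is a nonzero positive semidefinite operator while $\Tr[T^{kp}]=\braket{\psi}{\psi}>0$. I would then pass to the generating function $f(z):=\sum_{p\geq 1}\Tr[\hat T^p]\,z^p=\sum_i\frac{z\hat\mu_i}{1-z\hat\mu_i}$, where the $\hat\mu_i$ are the eigenvalues of $\hat T$ listed with algebraic multiplicity ($\hat T$ acts on $\mc H_B^{\otimes 2l}$, so there are at most $D^{2l}$ of them). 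This is a rational function whose singularities are exactly the poles at the points $1/\hat\mu_i$ with $\hat\mu_i\neq 0$, each a genuine pole since the residue at $1/\hat\mu$ is proportional to the multiplicity of $\hat\mu$ and distinct eigenvalues give distinct poles. Because $\Tr[\hat T^p]>0$ the operator $\hat T$ is not nilpotent, so $r(\hat T)>0$, and the radius of convergence of $f$ equals the distance from the origin to the nearest pole, namely $1/r(\hat T)$; by Cauchy--Hadamard this gives $\limsup_p\Tr[\hat T^p]^{1/p}=r(\hat T)$, i.e.\ $\limsup_p\frac1p\log\Tr[\hat T^p]=\log r(\hat T)=\log\big|\tensor[^{}_l]{\hat t}{}^{(k)}_1\big|$. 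Since $f$ has non-negative coefficients, Pringsheim's theorem additionally shows that $1/r(\hat T)$ is itself a singularity, hence $=1/\hat\mu_j$ for a \emph{positive real} eigenvalue $\hat\mu_j=r(\hat T)$, which is what makes ``the'' greatest eigenvalue $\tensor[^{}_l]{\hat t}{}^{(k)}_1$ well-defined. Combining this with the reduction of the first paragraph proves the proposition.

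The step I expect to be the main obstacle is precisely this control of $\Tr[\hat T^p]$ by $r(\hat T)^p$. The twisted transfer operator is generically non-normal and not diagonalizable, and a priori its maximal-modulus eigenvalues could be complex and partially cancel inside the trace, so that $\Tr[\hat T^p]$ decays strictly faster than $r(\hat T)^p$ along many $p$ — which would make the claimed identity fail. It is the positivity $\Tr[\hat T^p]>0$, inherited from $\rho_k^{(n)}\geq 0$ together with the gappedness of $T$, that excludes this. Alternatively, one can observe that $\hat T$ is, up to a similarity transformation, the matrix representation of a completely positive map — the transfer operator of a matrix-product-operator representation of $\rho_k^{(n)}$ — and invoke Perron--Frobenius theory for CP maps to obtain both a non-negative eigenvalue equal to the spectral radius and the cyclic structure of the peripheral spectrum, which yields a subsequence along which $\tfrac1p\log\Tr[\hat T^p]\to\log r(\hat T)$.
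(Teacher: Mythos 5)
Your proof is correct and follows essentially the same route as the paper's: the gap removes the denominator $\Tr[T^n]^l$ (the paper's Lemma~\ref{trace_ineq}), and the identification $\limsup_p\frac1p\log\big|\Tr[\hat T^p]\big|=\log\big|\tensor[^{}_l]{\hat{t}}{}^{(k)}_1\big|$ is obtained, exactly as in the paper's Lemma~\ref{lim_sup}, from the meromorphic generating function with poles at the reciprocal eigenvalues together with Cauchy--Hadamard. Your additional use of $\Tr[\hat T^p]>0$ and Pringsheim's theorem to conclude that the dominant eigenvalue is real and positive is a correct bonus not needed for the statement as written; the paper establishes that fact separately in Lemma~\ref{one_eigenvalue} via a simultaneous Dirichlet-approximation argument.
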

The proposition tells us that the R\'enyi entropy densities are entirely determined by the largest eigenvalue of the twisted transfer operators. 
A particularly simple case is given if we also take the limits $k\rightarrow \infty$ and the block size $m\rightarrow \infty$. From Lemma~\ref{lemma:twistedTpowers}, we then find
\begin{corollary}
In the limit $k\rightarrow \infty$ and $m\rightarrow \infty$, the entropy per block of $m$ spins fulfills
	 \begin{equation}\label{eq:liminfasymptotic}
		 \lim_{m\rightarrow \infty}	 \lim_{k\rightarrow \infty}	 \liminf_{n\rightarrow \infty} \frac{k}{n}S_l\Big(\rho_k^{(n)}\Big) = 2 S_l(\Lambda_1).
	 \end{equation}
\end{corollary}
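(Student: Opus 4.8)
The plan is to read this identity off from Proposition~\ref{proposition:renyil} applied to the \emph{blocked} matrix product state, combined with Lemma~\ref{lemma:twistedTpowers}, so that essentially no new computation is required---only a continuity argument for the dominant eigenvalue and a little bookkeeping.

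First I would make precise what ``entropy per block of $m$ spins'' means: regrouping $m$ consecutive sites into a single super-spin (Fig.~\ref{fig:renormalization}) produces a new translational invariant MPS with periodic boundary conditions whose transfer operator is $T^m$, and the claimed quantity is the entropy density of the subsystem consisting of every $k$-th super-spin (so in $\frac{k}{n}S_l(\rho_k^{(n)})$ the integer $n$ now counts super-spins). The first step is therefore to check that Proposition~\ref{proposition:renyil} applies to this blocked MPS, which reduces to verifying that $T^m$ is again gapped: its eigenvalues are $t_i^m$, so $|t_2^m| = |t_2|^m < 1 = t_1^m$, and no further eigenvalue of modulus one can appear (that would force $|t_i|=1$, contradicting the gap of $T$). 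Moreover the dominant fixed-point data is unchanged, $\mc T^m[P_1]=P_1$ and $\mc T^{m*}[\Lambda_1]=\Lambda_1$, both still unique by gappedness. Hence Proposition~\ref{proposition:renyil} applied to the blocked MPS gives
$$\liminf_{n\rightarrow\infty}\frac{k}{n}S_l\big(\rho_k^{(n)}\big) = -\frac{1}{l-1}\log\big|\hat t_{k,m}\big|,$$
where $\hat t_{k,m}$ is the largest-modulus eigenvalue of the blocked twisted transfer operator $\tensor[^{}_l]{\hat T}{^{m}_k}$, equivalently of the Liouville-representation map $\tensor[^{}_l]{\mc T}{^{m}_k}$ appearing in Lemma~\ref{lemma:twistedTpowers}.

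Next I would take the two remaining limits of $\hat t_{k,m}$. For fixed $m$, as $k\rightarrow\infty$ Lemma~\ref{lemma:powersprojection} gives $\mc T^{m(k-1)}\rightarrow\mc P_{\Lambda_1}$ in operator norm (exponentially, at a rate controlled by $\Delta$), so $\tensor[^{}_l]{\mc T}{^{m}_k}$ converges to the rank-one map $\mc F_m[X]:=\mc P_{\Lambda_1}^{\otimes l}\big[\mc T^{m\otimes l}[X\mf s]\mf s^\dagger\big]$, whose only possibly-nonzero eigenvalue is the scalar $\mu_m:=\Tr\big[\Lambda_1^{\otimes l}\mc T^{m\otimes l}[P_1^{\otimes l}\mf s]\mf s^\dagger\big]$, attained on $P_1^{\otimes l}$. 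Sending $m\rightarrow\infty$ and applying Lemma~\ref{lemma:powersprojection} once more gives $\mu_m\rightarrow\Tr[\Lambda_1^l]^2$---which is exactly the content of Lemma~\ref{lemma:twistedTpowers}---and this limit is strictly positive since $\Lambda_1>0$ on $\mc H_1$. Because the limiting maps possess a simple, strictly positive dominant eigenvalue isolated from the rest of their spectrum (which is $\{0\}$), continuity of the eigenvalues under norm convergence lets me conclude $\lim_{m\rightarrow\infty}\lim_{k\rightarrow\infty}\hat t_{k,m}=\Tr[\Lambda_1^l]^2$, with exponential rate inherited from the two Lemmata. Substituting into the displayed formula and using $\log\Tr[\Lambda_1^l]=(1-l)S_l(\Lambda_1)$ yields
$$\lim_{m\rightarrow\infty}\lim_{k\rightarrow\infty}\liminf_{n\rightarrow\infty}\frac{k}{n}S_l\big(\rho_k^{(n)}\big) = -\frac{1}{l-1}\log\Tr[\Lambda_1^l]^2 = 2S_l(\Lambda_1).$$

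The main obstacle I anticipate is precisely the passage from ``the limiting operator has dominant eigenvalue $\Tr[\Lambda_1^l]^2$'' (Lemma~\ref{lemma:twistedTpowers}) to ``the dominant eigenvalue $\hat t_{k,m}$ of the approximating operators converges to $\Tr[\Lambda_1^l]^2$''. For normal operators this would be immediate, but the twisted transfer operators are in general non-normal, so it has to be argued through continuity of the roots of the characteristic polynomial, together with the facts that the limiting dominant eigenvalue is simple, isolated, and---crucially---strictly positive, which is what makes it the largest in modulus for all sufficiently large $k$ and $m$. Everything else (checking that $T^m$ inherits gappedness and the canonical-form data $P_1,\Lambda_1$, and the elementary identity relating $\Tr[\Lambda_1^l]$ to $S_l(\Lambda_1)$) is routine.
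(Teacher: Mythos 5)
Your proposal is correct and follows essentially the same route as the paper, which obtains the corollary directly by combining Proposition~\ref{proposition:renyil} (entropy density equals $-\tfrac{1}{l-1}\log|\hat t_1|$ for the blocked chain with transfer operator $T^m$) with Lemma~\ref{lemma:twistedTpowers} (the limiting twisted transfer operator is rank one with eigenvalue $\Tr[\Lambda_1^l]^2 = \e^{-2(l-1)S_l(\Lambda_1)}$). The only difference is that you explicitly supply the eigenvalue-continuity argument for the non-normal twisted transfer operators (simple, isolated, strictly positive limiting eigenvalue), a step the paper leaves implicit; this is a welcome addition rather than a deviation.
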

The above calculation can be seen as a generalization of Theorem 6 in Ref~\cite{Perez-Garcia2007}, where it was shown that the reduced state of half the total system has the same spectrum as $\Lambda_1\otimes\Lambda_1$ in the thermodynamic limit. 
The corollary shows that in the limit $m,k\rightarrow\infty$ the (liminf of the) entropy density vanishes if and only if $\Lambda_1$ is a pure state, which is true if and only if $\dim(\mc H_1)=1$, which implies that state describing the blocks of $m$ spins approaches a product state in the thermodynamic limit. 
The following theorem, which is one of our main results, essentially reaches the same conclusion without taking the limits $k\rightarrow \infty$ and $m\rightarrow \infty$. 
\begin{theorem}\label{thm:eq_rel_null}
	Let $\hat\rho_k^{(n)}$ as above for a translational invariant MPS with gapped transfer operator $T$ and periodic boundary conditions. Then
	 \begin{equation}\label{s2_liminf}
		 \liminf_{n\rightarrow \infty} \frac{k}{n}S_2\Big( \rho_k^{(n)}\Big) = -\log\big(\big|\hat{t}^{(k)}_1\big|\big)
	 \end{equation}
	 where $\hat t_1^{(k)}$ is the greatest eigenvalue of the twisted transfer operator $\hat T_k$. Furthermore the following are equivalent:
	 \begin{enumerate}[i)]
		 \item\label{thm1:1} The highest eigenvector of $T$ is a product state. Equivalently, the eigenoperator of $\mc T$ with eigenvalue $t_1$ has rank 1.
		 \item\label{thm1:2} For all $k_0 \in \mathbb{N}$ there exists $k\geq k_0$ such that $\liminf\limits_{n\rightarrow \infty} \frac{k}{n}S_2(\rho_k^{(n)}) = 0$.
         	 \item\label{thm1:3} The Rényi-$\alpha$ entropy of $ \rho_k^{(n)}$ approaches $0$ in the thermodynamic limit for $\alpha > 1$ for all $k\geq 2$.
       		\item\label{thm1:4} The reduced density matrix of any fixed finite subsystem approaches a (site-wise) product state in the thermodynamic limit.
	 \end{enumerate}
\end{theorem}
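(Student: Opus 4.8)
The identity~\eqref{s2_liminf} is just the case $l=2$ of Proposition~\ref{proposition:renyil}, so nothing new is needed there. For the list of equivalences my plan is to prove the cycle \ref{thm1:3}~$\Rightarrow$~\ref{thm1:2}~$\Rightarrow$~\ref{thm1:1}~$\Rightarrow$~\ref{thm1:3}, supplemented by \ref{thm1:1}~$\Leftrightarrow$~\ref{thm1:4}. The implication \ref{thm1:3}~$\Rightarrow$~\ref{thm1:2} is immediate: if $S_2(\rho_k^{(n)})\to 0$ then $\tfrac{k}{n}S_2(\rho_k^{(n)})\to 0$, and as this quantity is non-negative its $\liminf$ equals $0$ for \emph{every} $k$, in particular for arbitrarily large $k$. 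Throughout I will use that, as recorded around~\eqref{eq:t1}, statement~\ref{thm1:1} is equivalent to $d_1:=\dim(\mc H_1)=1$.

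The core of the theorem is \ref{thm1:2}~$\Rightarrow$~\ref{thm1:1}, which I would prove by contraposition: assuming $d_1\geq 2$, I want $|\hat{t}^{(k)}_1|$ to remain bounded away from $1$ for all large $k$, which by~\eqref{s2_liminf} rules out~\ref{thm1:2}. Write $\tensor[^{}_2]{\mc T}{^{}_k}[X]=(\mc T^{\otimes 2})^{k-1}\bigl[\mc T^{\otimes 2}[X\mf s]\mf s^\dagger\bigr]$. Since $\mc T^{k-1}\to\mc P_{\Lambda_1}$ (Lemma~\ref{lemma:powersprojection}), we have $(\mc T^{\otimes 2})^{k-1}=(\mc T^{k-1})^{\otimes 2}\to\mc P_{\Lambda_1}^{\otimes 2}$, so $\tensor[^{}_2]{\mc T}{^{}_k}$ converges in operator norm, as $k\to\infty$, to the rank-one super-operator $X\mapsto (P_1\otimes P_1)\,\Tr\bigl[\Lambda_1\otimes\Lambda_1\,\mc T^{\otimes 2}[X\mf s]\mf s^\dagger\bigr]$. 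Its unique non-zero eigenvalue, obtained by evaluating on $P_1\otimes P_1$, simplifies --- using $[\mf s,P_1\otimes P_1]=0$, $\mf s=\mf s^\dagger$, the block identity $P_1\,\mc T[\,\cdot\,]\,P_1=\mc T_1[P_1\,\cdot\,P_1]$, and that $\Lambda_1$ is supported on $\mc H_1$ --- to $L:=\Tr\bigl[\Lambda_1\otimes\Lambda_1\,\mc T_1^{\otimes 2}[\mf s_1]\,\mf s_1\bigr]$, with all operators acting on $\mc H_1^{\otimes 2}$ and $\mf s_1$ the swap on $\mc H_1^{\otimes 2}$. On $\mc H_1$ the block map $\mc T_1$ is unital with unique fixed point $\1_{\mc H_1}$ (Proposition~\ref{can_form}) and $\Lambda_1$ is a full-rank density matrix, so Lemma~\ref{lemma:sEsnorm} yields $|L|\leq 1$ and Lemma~\ref{lemma:1D} (primitivity excluding the case $L=-1$) yields $L=1$ only if $d_1=1$. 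Since the dimension bound $S_2(\rho_k^{(n)})\leq (n/k)\log d$ forces $|\hat{t}^{(k)}_1|\geq 1/d$, the sequence $\hat{t}^{(k)}_1$ must track this non-zero eigenvalue of the limit, so $|\hat{t}^{(k)}_1|\to|L|<1$ whenever $d_1\geq 2$, which is what we wanted.

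For \ref{thm1:1}~$\Rightarrow$~\ref{thm1:3} and \ref{thm1:1}~$\Rightarrow$~\ref{thm1:4} I would use the canonical form: if $d_1=1$ then $A^i$ restricted to $\mc H_1$ is a scalar $c^i$ with $\sum_i|c^i|^2=1$ (unitality of $\mc T_1$), and every other block carries $\lambda_j<1$; expanding the periodic trace in~\eqref{MPS_def} block-wise yields $\ket\psi=\ket{c}^{\otimes n}+\ket{\mathrm{err}}$ with $\ket{c}=\sum_i c^i\ket{i}$ and $\norm{\ket{\mathrm{err}}}=O(\lambda_2^n)$, the contribution of each subleading block having bounded norm because its transfer operator is primitive with spectral radius one. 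Hence the normalized state of the whole chain converges in trace norm to the pure product state $\bigl(\proj{c}\bigr)^{\otimes n}$ with an error $O(\lambda_2^n)$ that is uniform in $n$; since the partial trace is contractive, $\norm{\rho_k^{(n)}-\bigl(\proj{c}\bigr)^{\otimes n/k}}_1=O(\lambda_2^n)$ for every $k$, so $\Tr[(\rho_k^{(n)})^2]\to 1$ and $S_2(\rho_k^{(n)})\to 0$. Then $\norm{\rho_k^{(n)}}\geq\Tr[(\rho_k^{(n)})^2]\to 1$ gives $S_\infty(\rho_k^{(n)})\to 0$, and the first inequality in~\eqref{norm_behaviour} gives $S_\alpha(\rho_k^{(n)})\leq\tfrac{\alpha}{\alpha-1}S_\infty(\rho_k^{(n)})\to 0$ for all $\alpha>1$, which is~\ref{thm1:3}; the same convergence of the global state gives~\ref{thm1:4}. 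Conversely, for \ref{thm1:4}~$\Rightarrow$~\ref{thm1:1}: if the reduced state of every block of $m$ consecutive spins converges to a site-wise product state $\sigma^{\otimes m}$, then $m\,S(\sigma)=\lim_n S(\rho_{[1,m]}^{(n)})\leq 2\log D$ by the (R\'enyi) Area Law for MPS, for every $m$, which forces $S(\sigma)=0$; thus the thermodynamic state is a pure product state, and since for a single primitive block the rank of the reduced state on $m$ consecutive sites saturates at $d_1^2$ for large $m$, purity for all $m$ forces $d_1=1$.

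The delicate point is \ref{thm1:2}~$\Rightarrow$~\ref{thm1:1}: one has to show that the $k\to\infty$ limit of the \emph{leading} eigenvalue $\hat{t}^{(k)}_1$ --- a quantity attached to the non-normal operator $\hat T_k$ --- exists and collapses to the clean block-$1$ expression $L$, which calls for careful bookkeeping of the projector $P_1\otimes P_1$ as it is pushed through $\mc T^{\otimes 2}$ and the swap, and then that $L<1$ unless $d_1=1$; the subtlety here is that $\mc T$ itself is only sub-unital, so Lemma~\ref{lemma:1D} can be invoked only after descending to the block $\mc H_1$, on which the transfer operator is genuinely unital. The remaining implications are comparatively routine once the exponentially fast convergence $\ket\psi\to\ket{c}^{\otimes n}$ in the product-state case has been established.
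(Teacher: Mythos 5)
Your proposal is correct, and it reaches the hard implication \ref{thm1:2}~$\Rightarrow$~\ref{thm1:1} by a genuinely different route than the paper. The paper works at \emph{finite} $k$ with the eigenoperator $X_k$ of $\hat{\mc T}_k$ in Liouville space: it first pins down $\hat t_1^{(k)}=1$ and its uniqueness (Lemma~\ref{one_eigenvalue}), uses the Schwarz inequality of Lemma~\ref{schwarz} together with the canonical form to force $\mathrm{supp}(X_k)\subseteq \mc H_1\otimes\mc H_1$, exploits uniqueness to make $X_k$ Hermitian, shows $X_k=P_{(1,1)}+\mc O(\e^{-k})$, and only then extracts $\Tr[\Lambda_{(1,1)}\mc E_{(1,1)}[\mf s]\mf s]=1$ and applies Lemma~\ref{lemma:1D}. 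You instead take the contrapositive, observe that $\hat{\mc T}_k$ converges in norm to a rank-one super-operator whose sole non-zero eigenvalue is exactly the paper's $\hat t_1^{(\infty)}=L$ from Lemma~\ref{lemma:kinfty}, and invoke continuity of the spectral radius to get $|\hat t_1^{(k)}|\to|L|<1$ when $d_1\geq 2$; this bypasses the Schwarz-inequality support analysis and the hermiticity/uniqueness bookkeeping entirely, at the cost of yielding less finite-$k$ information (the paper reuses the structure of $X_k$ in its Section~\ref{sec:lower-bounds}). Your treatment of \ref{thm1:1}~$\Rightarrow$~\ref{thm1:3},\ref{thm1:4} via the explicit block expansion $\ket\psi=\ket{c}^{\otimes n}+\mc O(\lambda_2^n)$ is also a valid alternative to the paper's argument $U\ket{t_1t_1}=\ket{t_1t_1}$ plus Lemma~\ref{trace_ineq}, and is arguably more transparent. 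Two small caveats: your \ref{thm1:4}~$\Rightarrow$~\ref{thm1:1} leans on the external fact that the rank of a large connected block saturates at $d_1^2$ (Theorem~6 of \cite{Perez-Garcia2007}), whereas the paper gives a self-contained single-site computation in Proposition~\ref{pure_site}; and the exclusion of $L=-1$ deserves one explicit line (either $L\geq 0$ from the Kraus decomposition, since $L=\sum_{ij}|\Tr[\Lambda_1 A^i_{(1)}A^{j\dagger}_{(1)}]|^2$, or the observation that $\mc T_1^{\otimes 2}[\mf s]=-\mf s$ would give $\mc T_1^{\otimes 2}$ a peripheral eigenvalue $-1$, contradicting primitivity). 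Neither is a gap.
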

The consequences of this theorem are quite clear. We have two distinct situations for translation invariant MPS states with gapped transfer operator:
\begin{enumerate}
	\item The highest eigenvector of the transfer operator is a product state and the total state weakly approaches a product state of each site, in the sense that for any local observable $A$ 
		\begin{align}
			\lim_{n\rightarrow\infty} \bra{\psi}A\ket{\psi} = \lim_{n\rightarrow\infty} \Tr[\ket{\phi}\bra{\phi}^{\otimes n} A]
		\end{align}
		for some $\ket{\phi}$.
	\item The highest eigenvector of $T$ is not a product state and therefore there must exist a finite value $k_0$ such that 
		\begin{align}
			\liminf\limits_{n\rightarrow \infty} \frac{k}{n}S_2(\rho_k^{(n)}) > 0
		\end{align}
		for all $k\geq k_0$, and by \eqref{one_s_to_rule_them_all} all Rényi entropies of $\rho_k^{(n)}$ become extensive. 
\end{enumerate}
It should be clear that the second situation is the generic one. That is, if we choose the entries of the MPS-tensor $A$ at random,  then with unit probability the entropy density is strictly positive.

For readers aware of the formalism of MPS or finitely correlated states, the implication \ref{thm1:1} $\Rightarrow$ \ref{thm1:4} should not be surprising: Statement \ref{thm1:1} says that the state in the thermodynamic limit can be obtained using a MPS with bond-dimension $D=1$. But states with bond-dimension $D=1$ are product states. 
The truly interesting statement (and the one difficult to prove) therefore is statement \ref{thm1:2} a vanishing R\'enyi entropy-density for $\alpha>1$ implies that this is the case. 
A priori, a vanishing entropy-density could also be achieved with R\'enyi entropies $S_\alpha$ growing as $n^\kappa$ with $\kappa<1$. 
Interestingly, the theorem implies that if the transfer operator is gapped, this situation cannot occur: either the entropies converge to zero or they grow extensively.

Before coming to the proof of the Theorem, let us show a simple example illustrating the necessity of the assumption that the transfer operator is gapped.  Let us take the  MPS defined by
\begin{equation}\label{ex_A}
	A = \ket{0}_S\ket{0}\bra{0}_B + \sqrt{\beta}\ket{1}_S\ket{1}\bra{1}_B.
\end{equation}
for some $1\geq \beta \geq 0$. Therefore the transfer operator is
\begin{equation}
	T = \ket{00}\bra{00} + \alpha \ket{11}\bra{11}.
\end{equation}
The twisted transfer operator is given by
\begin{equation}
	\hat T_2 = \ket{0}\bra{0}^{\otimes 4} + \beta^{2k} \ket{1}\bra{1}^{\otimes 4}.
\end{equation}
Since $A^1A^0 = A^0A^1 = 0$, the state corresponding to \eqref{ex_A} is
\begin{equation}
	\ket{\Psi} = \ket{00...0} + \beta^{n/2}\ket{11...1},
\end{equation}
with $n$ the number of spins. 
For $\beta < 1$ the highest eigenvector of the transfer operator is $\ket{00}$. 
As Theorem~\ref{thm:eq_rel_null} predicts, for any local observable $A$ it is clear that $\lim_{n\rightarrow\infty}\Tr[\ket{\psi}\bra{\psi}A] = \lim_{n\rightarrow\infty}\Tr[\ket{0}\bra{0}^{\otimes n}A]$. At the same time, we have $\hat t_1=1$ and the entropy of any $\hat \rho_k$ approaches zero in the thermodynamic limit. 
On the other hand, if $\beta = 1$ we obtain the GHZ state and the transfer operator is not gapped any more. 
There is no unique highest eigenvector of $T$. 
It could be chosen to be $\ket{00}$ which is a product state, however any reduced density matrix of the GHZ state has all R\'enyi entropies equal to $\log(2)$. 
Therefore the predictions of Theorem~\ref{thm:eq_rel_null} break down. Thus, the gap assumption is necessary for the theorem.

\subsection{Proof of Proposition~\ref{proposition:renyil}}
\label{sec:proofliminf}
We start by stating the two following Lemmas, whose proof can be found in Appendix~\ref{app:lemmas}.
\begin{lemma}
	\label{trace_ineq}
	Let $A$ be an endomorphism on $\mathcal H \cong \mathbb C^d$ such that its greatest eigenvalue fulfills $|a_1| \leq 1$. 
	Then $\lim_{n\rightarrow\infty}\log|\Tr[A^n]| = 0$ if and only if $A$ has greatest eigenvalue $|a_1| = 1$ and $|a_2| < 1$,
	where $a_2$ is the second largest eigenvalue of $A$.
\end{lemma}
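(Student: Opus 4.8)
The plan is to reduce everything to the identity $\Tr[A^n] = \sum_i a_i^n$, where the sum runs over the eigenvalues of $A$ counted with algebraic multiplicity. This holds because the characteristic polynomial of $A^n$ is $\prod_i(x - a_i^n)$ (equivalently, bring $A$ to triangular/Jordan form; the nilpotent part contributes nothing to the trace of a power), so the eigenvalues of $A^n$ are the $a_i^n$ and the trace is their sum. By hypothesis every $|a_i|\le 1$, and $d=\dim\mathcal H$ is finite.

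For the ``if'' direction, suppose $|a_1| = 1$ and $|a_2| < 1$. Then $a_1$ is the unique eigenvalue of modulus one and it is simple, so $\Tr[A^n] = a_1^n + r_n$ with $|r_n| \le (d-1)\,|a_2|^n \to 0$. The reverse triangle inequality gives $\bigl|\,|\Tr[A^n]| - 1\,\bigr| \le (d-1)|a_2|^n \to 0$, hence $|\Tr[A^n]| \to 1$ and $\log|\Tr[A^n]| \to 0$.

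For the ``only if'' direction I would use the elementary Cesàro fact that for any $\zeta$ with $|\zeta| \le 1$ one has $\frac1N\sum_{n=1}^N \zeta^n \to 1$ if $\zeta = 1$ and $\to 0$ otherwise. Writing $|\Tr[A^n]|^2 = \sum_{i,j} a_i^n \overline{a_j}^{\,n} = \sum_{i,j}(a_i\overline{a_j})^n$ and averaging over $n$, the only pairs that survive are those with $a_i\overline{a_j} = 1$, i.e.\ $|a_i| = |a_j| = 1$ and $a_i = a_j$; therefore
\[
\frac1N\sum_{n=1}^N |\Tr[A^n]|^2 \;\longrightarrow\; \sum_{\mu:\,|\mu|=1} m_\mu^2 \qquad (N\to\infty),
\]
where $m_\mu$ is the algebraic multiplicity of the unimodular eigenvalue $\mu$. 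Now if $\log|\Tr[A^n]| \to 0$, then $|\Tr[A^n]|^2 \to 1$, so its Cesàro average equals $1$, forcing $\sum_{|\mu|=1} m_\mu^2 = 1$. Since each $m_\mu \ge 1$, this means there is exactly one eigenvalue on the unit circle and it is simple — which is precisely $|a_1| = 1$ with $|a_2| < 1$ (the alternative $|a_1| < 1$ is excluded, as it would give Cesàro average $0 \ne 1$).

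The only mildly delicate point is this ``only if'' direction: when several eigenvalues lie on the unit circle, $\Tr[A^n]$ is an almost-periodic sequence that in general does not converge, so one cannot argue pointwise. The Cesàro-average trick (equivalently, a pigeonhole argument on the torus $\mathbb T^s$ producing a subsequence $n_k$ with $\Tr[A^{n_k}] \to \sum_{|\mu|=1} m_\mu \ge 2$) sidesteps this; everything else is routine.
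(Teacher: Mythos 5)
Your proof is correct, and the ``only if'' direction takes a genuinely different route from the paper. The paper argues by contrapositive and, in the case of several unimodular eigenvalues, invokes a simultaneous Dirichlet approximation result (Lemma~\ref{unit_circle_powers}) to manufacture a subsequence $n_k$ along which every unimodular $a_i^{n_k}$ approaches $1$, so that $\limsup_n|\Tr[A^n]|\geq 2$; you instead compute the Ces\`aro average of $|\Tr[A^n]|^2=\sum_{i,j}(a_i\overline{a_j})^n$ and observe that it converges to $\sum_{|\mu|=1}m_\mu^2$, which must equal $1$ if $|\Tr[A^n]|\to 1$, forcing a single simple unimodular eigenvalue. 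Both arguments are sound and correctly handle the fact that $\Tr[A^n]$ need not converge pointwise when several eigenvalues sit on the unit circle (you flag this explicitly). Your second-moment argument is the more elementary and self-contained of the two, needing only the geometric-series bound $\bigl|\frac1N\sum_{n\le N}\zeta^n\bigr|\le \frac{2}{N|1-\zeta|}$ rather than Diophantine approximation, and it delivers the multiplicity information $\sum_\mu m_\mu^2=1$ in one stroke; the paper's Dirichlet lemma is heavier machinery but is not wasted, since it is reused in the proof of Lemma~\ref{one_eigenvalue} to control the spectrum of the twisted transfer operator, where a positivity constraint ($\Tr[\hat T_k^{n/k}]>0$ for \emph{all} $n$) rather than a squared modulus is the quantity being exploited, and a Ces\`aro average would not substitute as directly there. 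The ``if'' direction is the same in both proofs.
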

\begin{lemma}
	\label{lim_sup}
	Let $\{\alpha_i\}_{i=1}^n$ be a collection of complex numbers such that $r = \max\limits_{i=1,...,n} |\alpha_i| >0$. 
	Then $$\limsup_{k\rightarrow\infty} \left|\sum_{i=1}^{n} \alpha_i^k \right|^{1/k} = r.$$
\end{lemma}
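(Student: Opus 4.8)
The plan is to prove the two bounds $\limsup_{k\rightarrow\infty}\big|\sum_i\alpha_i^k\big|^{1/k}\le r$ and $\ge r$ separately. The upper bound is immediate from the triangle inequality: $\big|\sum_{i=1}^n\alpha_i^k\big|\le\sum_i|\alpha_i|^k\le n\,r^k$, so $\big|\sum_i\alpha_i^k\big|^{1/k}\le n^{1/k}r\to r$. All the content lies in the lower bound, and the difficulty there is purely one of \emph{cancellation}: the phases of the maximal-modulus terms may conspire to make $\sum_i\alpha_i^k$ anomalously small, or even zero, for particular $k$, so no termwise lower bound is available. The point to exploit is that such cancellation cannot persist for all $k$, and the $\limsup$ is precisely what detects the generic, non-cancelling behaviour.

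To make this precise I would first separate terms by modulus. Relabel so that $|\alpha_1|=\cdots=|\alpha_m|=r$ and $|\alpha_i|<r$ for $i>m$, write $\alpha_j=r\,\e^{\mathrm i\theta_j}$ for $j\le m$, and set $\rho:=\max_{i>m}|\alpha_i|<r$ (the tail being absent when $m=n$). Factoring out $r^k$ gives
\[
\sum_{i=1}^n\alpha_i^k=r^k S_k+R_k,\qquad S_k:=\sum_{j=1}^m\e^{\mathrm i k\theta_j},\quad |R_k|\le n\rho^k,
\]
whence $\big|\sum_i\alpha_i^k\big|\ge r^k|S_k|-n\rho^k$. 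Everything thus reduces to showing that $|S_k|$ stays bounded away from $0$ along a subsequence, i.e.\ that $\limsup_k|S_k|>0$.

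The key step, and the main obstacle, is establishing $\limsup_k|S_k|>0$ in the presence of cancellation. I would do this by a Cesàro (mean-value) argument rather than by recurrence on the torus, since the former avoids delicate syndeticity estimates. Expanding $|S_k|^2=\sum_{j,l}\e^{\mathrm i k(\theta_j-\theta_l)}$ and averaging over $k$, the elementary identity $\frac1N\sum_{k=1}^N\e^{\mathrm i k\phi}\to\mathbf{1}[\phi\equiv0\ (\mathrm{mod}\ 2\pi)]$ yields
\[
\lim_{N\rightarrow\infty}\frac1N\sum_{k=1}^N|S_k|^2=\#\{(j,l):\theta_j\equiv\theta_l\ (\mathrm{mod}\ 2\pi)\}\ge m>0,
\]
the diagonal terms alone contributing $m$. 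Since a nonnegative sequence whose Cesàro mean converges to $L$ must satisfy $\limsup\ge L$, we obtain $\limsup_k|S_k|^2\ge m$, and hence a subsequence $k_i\rightarrow\infty$ with $|S_{k_i}|\ge c$ for the fixed constant $c:=\sqrt{m}/2>0$.

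Finally I would combine the pieces along this subsequence:
\[
\Big|\sum_i\alpha_i^{k_i}\Big|\ge c\,r^{k_i}-n\rho^{k_i}=r^{k_i}\big(c-n(\rho/r)^{k_i}\big)\ge\tfrac{c}{2}\,r^{k_i}
\]
for all large $i$, because $\rho/r<1$. Taking $k_i$-th roots gives $\big|\sum_i\alpha_i^{k_i}\big|^{1/k_i}\ge(c/2)^{1/k_i}\,r\to r$, so $\limsup_k\big|\sum_i\alpha_i^k\big|^{1/k}\ge r$. Together with the upper bound this proves equality. (The degenerate case $m=n$ is handled identically, dropping the $R_k$ term.)
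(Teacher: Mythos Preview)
Your proof is correct and takes a genuinely different route from the paper's. The paper argues via complex analysis: it introduces the rational function $f(z)=\sum_{i=1}^n\frac{1}{1-z\alpha_i}$, observes that its Taylor coefficients at $0$ are exactly $\sum_i\alpha_i^k$, notes that the nearest pole sits at distance $1/r$ from the origin, and then reads off the result from the Cauchy--Hadamard formula for the radius of convergence. Your argument is entirely elementary: the upper bound is the trivial triangle-inequality estimate, and for the lower bound you isolate the maximal-modulus terms and use a second-moment Ces\`aro average to force $|S_k|$ to be bounded below along a subsequence, then absorb the subdominant tail $R_k$.

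Both proofs are short; the paper's is slicker once one recalls Cauchy--Hadamard, while yours is fully self-contained and avoids any analytic-function machinery. Your Ces\`aro step is a nice way to sidestep the cancellation problem without appealing to simultaneous Diophantine approximation on the torus, and it would adapt readily to related averaging questions. One minor remark: you could slightly shorten the argument by noting that $\limsup_k|S_k|^2\ge m$ already gives $|S_{k_i}|\ge c$ for any fixed $c<\sqrt m$, so the specific choice $c=\sqrt m/2$ is not essential.
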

With these Lemmas at hand, we can now show~\eqref{eq:liminf}.
By \eqref{eq:trrhol} we have 
\begin{align}
	\Tr[(\rho_k^{(n)})^l] = \frac{\Tr[(\, \tensor[^{}_l]{\hat T}{^{}_k})^{n/k}]}{\Tr[T^n]^l}.
\end{align}
Since $T$ is gapped and we normalize, we can assume without loss of generality that $t_1 = 1$ and $|t_2| < 1$. 
Therefore, by Lemma~\ref{trace_ineq}, $\lim_{n\rightarrow\infty} \log\left|\Tr[T^n]\right| = 0$, which implies
	\begin{align} 
		\lim_{n\rightarrow\infty}S_l(\rho_k^{(n)}) &= \lim_{n\rightarrow\infty}-\frac{1}{l-1}\log\left(  \frac{\Tr[(\, \tensor[^{}_l]{\hat T}{^{}_k})^{n/k}]}{\Tr[T^n]^l}\right)\nonumber\\
		&= \lim_{n\rightarrow\infty}-\frac{1}{l-1}\log \Tr[(\, \tensor[^{}_l]{\hat T}{^{}_k})^{n/k}]~,\label{s2_lim}
	\end{align}
where we used the fact that $\Tr[T^n] > 0$.
We can now compute the entropy density in the thermodynamic limit. 
To simplify notation, we will simply write $\hat t_i := \tensor[^{}_l]{\hat t}{}^{(k)}_i$ for the eigenvalues of $\tensor[^{}_l]{\hat T}{^{}_k}$, sorted  with non-increasing absolute value. 
We then take the trace in an orthonormal basis such that $\hat t_i = \bra{i}\tensor[^{}_l]{\hat T}{^{}_k} \ket{i}$. This is always possible using the Schur-decomposition. We then obtain:
\begin{align}
\lim_{n\rightarrow \infty}\frac{k}{n} S_l\left(\rho_k^{(n)}\right)  &=
	\lim_{n\rightarrow \infty}\frac{-k}{n}\frac{1}{l-1}\log\left| \sum_{i=1}^{D^4} \hat{t}_i^{n/k} \right| \\
	&=	\lim_{n\rightarrow \infty} -k \frac{1}{l-1}\log\left| \sum_{i=1}^{D^4} \alpha_i^{n} \right|^{\frac{1}{n}}
\end{align}
with $\alpha_i = \hat{t}_i^{1/k}$. Then, by Lemma~\ref{lim_sup} we obtain
	\begin{align}\label{lim_inf_s2_formula}
		\liminf_{n\rightarrow \infty}\frac{k}{n} S_l(\rho_k^{(n)}) & = -k \frac{1}{l-1}\limsup_{n\rightarrow \infty} \log\left| \sum_{i=1}^{D^4} \alpha_i^{n} \right|^{\frac{1}{n}}\\
			& = -k\frac{1}{l-1}\log|\alpha_1| \\
			& = -\frac{1}{l-1}\log|\hat{t}_1|,
	\end{align}
showing \eqref{eq:liminf} and \eqref{s2_liminf}.
It is worth to noticing that this calculation proves $|\hat t_1| \leq 1 = t_1$, because otherwise we would have negative entropy densities.

\subsection{Proof of Theorem~\ref{thm:eq_rel_null}}
We will separate the proof of Theorem~\ref{thm:eq_rel_null} in several parts. 
The main difficulty of the proof consists of proving the equivalence of the statements \ref{thm1:1}--\ref{thm1:3},  which is done in section~\ref{sec:1to3}. 
Section~\ref{sec:1to4} then proves the equivalence of statement \ref{thm1:1} with statement \ref{thm1:4}. 
Throughout, remember that we consider $\hat T_k = \tensor[^{}_2]{\hat T}{}_k$. For simplicity of notation, we will label the eigenvalues of this operator with $\hat t_i$, such that $|\hat t_1|\geq |\hat t_2| \geq \cdots$.

\subsubsection{Equivalence of \ref{thm1:1} -- \ref{thm1:3}}
\label{sec:1to3}
We now prove the equivalence of statements \ref{thm1:1} -- \ref{thm1:3}. 
We first show that the largest eigenvalue of the twisted transfer operator $\hat T_k$ has at most one eigenvalue of magnitude $1$.
\begin{lemma}\label{one_eigenvalue}
	Assume that the transfer operator is gapped. Then the twisted transfer operator $\hat T_k$ has at most one eigenvalue of magnitude $1$, and if $|\hat t_1| = 1$ then $\hat t_1 = 1$.
\end{lemma}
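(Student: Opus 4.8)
The plan is to work in the Liouville picture, where $\hat T_k$ appears as the superoperator $\hat{\mc T}_k[X] = (\mc T^{\otimes 2})^{k-1}\big[\mc T^{\otimes 2}[X\mf s]\mf s^\dagger\big]$ acting on $D^2\times D^2$ matrices and having the same eigenvalues $\hat t_i$ as $\hat T_k$, and to first establish that $\hat{\mc T}_k$ is a contraction in operator norm. Since $\mc T$ is sub-unital and completely positive, so are $\mc T^{k-1}$ and hence $\mc T^{\otimes 2}$ and $(\mc T^{\otimes 2})^{k-1}=(\mc T^{k-1})^{\otimes 2}$; and for any sub-unital CP map $\mc E$ and any $Y$ with $\norm{Y}\le 1$ the Schwarz inequality of Lemma~\ref{schwarz} together with positivity gives $\mc E[Y]\mc E[Y]^\dagger = \mc E[Y]\mc E[Y^\dagger]\le \mc E[YY^\dagger]\le \mc E[\1]\le \1$ (using that CP maps preserve Hermiticity), so $\norm{\mc E[Y]}\le \norm{Y}$. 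As right multiplication by the unitary $\mf s$ preserves the operator norm, this yields $\norm{\hat{\mc T}_k[X]}\le \norm{\mc T^{\otimes 2}[X\mf s]}\le \norm{X\mf s}=\norm{X}$, i.e. $\norm{\hat T_k^p}\le 1$ for every $p\in\mathbb N$.

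From uniform boundedness of its powers it follows that $\hat T_k$ has no non-trivial Jordan block for any eigenvalue of modulus $1$, so $\mathbb C^{D^4}$ splits $\hat T_k$-invariantly as $V_{\mathrm{per}}\oplus V'$, with $M:=\hat T_k|_{V_{\mathrm{per}}}$ diagonalizable with all eigenvalues of modulus $1$ — precisely the $\hat t_i$ with $|\hat t_i|=1$, say $m$ of them — and with $\hat T_k|_{V'}$ of spectral radius $<1$. Hence $\Tr[\hat T_k^p]=\Tr[M^p]+\varepsilon_p$ with $\varepsilon_p\to 0$. Writing $M=S\,\mathrm{diag}(e^{\mathrm i\theta_1},\dots,e^{\mathrm i\theta_m})\,S^{-1}$ and using that $(1,\dots,1)$ lies in the closure of the cyclic subgroup of $\mathbb T^m$ generated by $(e^{\mathrm i\theta_1},\dots,e^{\mathrm i\theta_m})$, there is a subsequence $p_\ell\to\infty$ with $M^{p_\ell}\to \1_{V_{\mathrm{per}}}$, so $\Tr[\hat T_k^{p_\ell}]\to m$.

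Now I bring in positivity. By \eqref{eq:fundamental}, $\Tr[\hat T_k^{p}] = \Tr[(\rho_k^{(kp)})^2]\,\Tr[T^{kp}]^2$; the first factor lies in $(0,1]$ because $\rho_k^{(kp)}$ is a genuine density matrix, and $\Tr[T^{kp}] = 1+\sum_{i\ge 2} t_i^{kp}\to 1$ because $T$ is gapped and normalized. Hence $\Tr[\hat T_k^p]>0$ for all $p$ and $\limsup_p\Tr[\hat T_k^p]\le 1$. Comparing with $\Tr[\hat T_k^{p_\ell}]\to m$ forces $m\le 1$, which is the first assertion. If moreover $|\hat t_1|=1$ then $m=1$, $\hat t_1=e^{\mathrm i\theta}$ and $\Tr[\hat T_k^p]=e^{\mathrm i p\theta}+\varepsilon_p$ is a positive real; taking imaginary parts gives $|\sin(p\theta)|=|\mathrm{Im}\,\varepsilon_p|\to 0$, while $\cos(p\theta)=\Tr[\hat T_k^p]-\mathrm{Re}\,\varepsilon_p>-|\varepsilon_p|$ gives $\liminf_p\cos(p\theta)\ge 0$; together these force $\cos(p\theta)\to 1$, i.e. $\mathrm{dist}(p\theta,2\pi\mathbb Z)\to 0$. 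Since $\mathrm{dist}(\theta,2\pi\mathbb Z)\le \mathrm{dist}((p+1)\theta,2\pi\mathbb Z)+\mathrm{dist}(p\theta,2\pi\mathbb Z)\to 0$, we conclude $\theta\in 2\pi\mathbb Z$, i.e. $\hat t_1=1$.

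The main obstacle is the first step: recognising $\hat T_k$ as an operator-norm contraction via the Schwarz inequality for the sub-unital transfer map, in spite of the swap insertions $\mf s,\mf s^\dagger$ that destroy complete positivity of the composite superoperator. This is what makes the peripheral spectrum semisimple and therefore visible in the scalar sequence $\Tr[\hat T_k^p]$; once that is in place the rest is just matching this spectral information against the strict positivity and the unit limit of $\Tr[\hat T_k^p]$, the only minor subtlety being the elementary rotation-number argument at the end.
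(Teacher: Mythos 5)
Your argument is correct and follows essentially the same route as the paper: compare $\Tr[\hat T_k^p]=\Tr[(\rho_k^{(kp)})^2]\,\Tr[T^{kp}]^2$, which is strictly positive and bounded by $1+o(1)$, against a subsequence of powers along which all peripheral eigenvalues are simultaneously driven to $1$ (the paper's Lemma~\ref{unit_circle_powers}). The only inessential difference is your operator-norm contraction step: it is correct but superfluous, since $\Tr[\hat T_k^p]=\sum_i \hat t_i^{\,p}$ holds for any matrix irrespective of Jordan structure, so the peripheral contribution already tends to $m$ along the approximating subsequence without any semisimplicity input.
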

\begin{proof}
	Considering that by definition
	\begin{equation}
		\Tr[(\hat \rho_k^{(n)})^2] = \frac{\Tr[\hat T_k^{n/k}]}{\Tr[T^n]^2}~,
	\end{equation}
	we can notice that for large values of $n$ the denominator on the RHS is exponentially close to $1$ (since $T$ is gapped and $t_1=1$) and always has a real positive value. 
	But then the numerator has to be a real and (strictly) positive number because $0 < \Tr[(\hat \rho_k^{(n)})^2] \leq 1$ for all values of $n$ and $k$. 
	By Lemma~\ref{unit_circle_powers} in the appendix, for any $n\in \mathbb{N}$ and any finite collection $\{z_i\}_i$ of numbers on the unit circle, and for all $\epsilon > 0$ one can find $l\in \mathbb{N}$ with $l\geq n$ such that $|(z_i)^l - 1| < \epsilon$ for all $i$.
	
	By taking $\{z_i\}_i = \{\hat t_i\; : \;|\hat t_i| = 1\}$, the condition $\Tr[(\hat \rho_k^{(n)})^2] \leq 1$  is infringed unless $\left|\{z_i\}_i\right| = 1$. 
	This allows us to conclude that $|\hat t_2| < 1$. 
	Combining this with the consideration that $\Tr[\hat T_k^{n/k}] > 0$ for all $n$ and $k$ we can conclude that if $|\hat t_1| = 1$ then it is unique and $\hat t_1=1$.
\end{proof}

We are now in position to state and prove following proposition, which allows us to prove the equivalence of statements~\ref{thm1:1}--\ref{thm1:3} of Theorem~\ref{thm:eq_rel_null}.
\begin{proposition}\label{eq_rel_tech}
	Let $\rho_k^{(n)}$ and $T$ be as in Theorem~\ref{thm:eq_rel_null}. Then the following are equivalent:
	\begin{enumerate}
		\item The highest eigenvector $\ket{t_1}$ of $T$ is a product state.
		\item For all $k_0 \in \mathbb{N}$ there exists $k\geq k_0$ such that $\liminf\limits_{n\rightarrow \infty} \frac{k}{n}\s_2(\hat \rho_k^{(n)}) = 0$,
		\item $\liminf\limits_{n\rightarrow \infty} \frac{k}{n}\s_2(\hat \rho_k^{(n)}) = 0$ for all $k\geq 2$,
		\item $\lim\limits_{n\rightarrow \infty} \s_2(\hat \rho_k^{(n)}) = 0$ for all $k\geq 2$.
	\end{enumerate}
\end{proposition}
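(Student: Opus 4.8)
The plan is to funnel all four statements through the quantity $\hat t_1^{(k)}$, the eigenvalue of largest modulus of the twisted transfer operator $\hat T_k=\tensor[^{}_2]{\hat T}{^{}_k}$, and then run the cycle $(1)\Rightarrow(4)\Rightarrow(3)\Rightarrow(2)\Rightarrow(1)$. By Proposition~\ref{proposition:renyil}, $\liminf_n\frac{k}{n}S_2(\rho_k^{(n)})=-\log|\hat t_1^{(k)}|$, so $(2)$ says that $|\hat t_1^{(k)}|=1$ for arbitrarily large $k$ and $(3)$ says that $|\hat t_1^{(k)}|=1$ for every $k\geq 2$; by Lemma~\ref{one_eigenvalue} either condition forces $\hat t_1^{(k)}=1$ and makes it the unique modulus-one eigenvalue of $\hat T_k$. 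In that situation Lemma~\ref{trace_ineq} gives $\Tr[\hat T_k^{n/k}]\to 1$ (and $\Tr[T^n]\to 1$, since $T$ is gapped with $t_1=1$), hence $\Tr[(\rho_k^{(n)})^2]\to 1$ and $S_2(\rho_k^{(n)})\to 0$; conversely $S_2(\rho_k^{(n)})\to 0$ trivially yields $\liminf_n\frac{k}{n}S_2=0$. This already delivers $(3)\Leftrightarrow(4)$ and $(3)\Rightarrow(2)$, leaving $(1)\Rightarrow(3)$ and $(2)\Rightarrow(1)$ as the real content.

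For $(1)\Rightarrow(3)$: if $\ket{t_1}$ is a product state then by \eqref{eq:t1} the top block is one-dimensional, so $P_1$ has rank one and $\mf s$ fixes the product vector spanning $\mc H_1\otimes\mc H_1$, i.e. $(P_1\otimes P_1)\mf s=P_1\otimes P_1$. Writing $\hat T_k$ in Liouville form, $\tensor[^{}_2]{\mc T}{^{}_k}[X]=(\mc T^{\otimes 2})^{k-1}\big[\mc T^{\otimes 2}[X\mf s]\mf s\big]$, and using $\mc T[P_1]=P_1$, a direct computation gives $\tensor[^{}_2]{\mc T}{^{}_k}[P_1\otimes P_1]=P_1\otimes P_1$. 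Thus $\hat T_k$ has the nonzero fixed point $P_1\otimes P_1$, and since $|\hat t_1^{(k)}|\leq 1$ (established in the proof of Proposition~\ref{proposition:renyil}) we get $\hat t_1^{(k)}=1$ for all $k$, which is $(3)$.

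The hard direction is $(2)\Rightarrow(1)$, where I would exploit the limit $k\to\infty$. Since $T$ is gapped, $\mc T^m\to\mc P_{\Lambda_1}$ exponentially (Lemma~\ref{lemma:powersprojection}), so $(\mc T^{\otimes 2})^{k-1}\to\mc P_{\Lambda_1}^{\otimes 2}$ exponentially, and therefore
\begin{align}
\tensor[^{}_2]{\mc T}{^{}_k}[X]\ \xrightarrow{\ k\to\infty\ }\ \mc L[X]:=(P_1\otimes P_1)\,\Tr\big[(\Lambda_1\otimes\Lambda_1)\,\mc T^{\otimes 2}[X\mf s]\,\mf s\big].
\end{align}
The super-operator $\mc L$ has rank at most one, with sole possibly-nonzero eigenvalue $c:=\Tr[(\Lambda_1\otimes\Lambda_1)\,\mc T^{\otimes 2}[(P_1\otimes P_1)\mf s]\,\mf s]$ and eigenvector $P_1\otimes P_1$. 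By continuity of the spectrum, the eigenvalues of $\tensor[^{}_2]{\mc T}{^{}_k}$ converge to those of $\mc L$; under $(2)$ we have $\hat t_1^{(k)}=1$ along a sequence $k\to\infty$, which rules out $c=0$ and forces $\hat t_1^{(k)}\to c=1$. It then remains to identify $c$ as an object living on the top block only: $\mf s_1:=(P_1\otimes P_1)\mf s$ is the swap on $\mc H_1\otimes\mc H_1$ with $-P_1\otimes P_1\leq\mf s_1\leq P_1\otimes P_1$, so positivity of $\mc T^{\otimes 2}$ together with $\mc T[P_1]=P_1$ pins $\mc T^{\otimes 2}[\mf s_1]$ to be supported on $\mc H_1^{\otimes 2}$; since $\mc T$ restricted to operators supported on $\mc H_1$ is exactly the primitive unital map $\mc T_1$ of the canonical form (recall $\lambda_1=1$), one gets $c=\Tr[(\tilde\Lambda\otimes\tilde\Lambda)\,\mc T_1^{\otimes 2}[\mf s_1]\,\mf s_1]$ with $\tilde\Lambda=\Lambda_1|_{\mc H_1}$ a full-rank density matrix on $\mc H_1$. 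Now Lemma~\ref{lemma:1D}, applied with $\mc T=\mc T_1$ (unital, $\1$ its unique fixed point), $\rho=\tilde\Lambda\otimes\tilde\Lambda>0$, and $c=1$, yields $\dim\mc H_1=1$, i.e. $P_1$ has rank one, and by \eqref{eq:t1} the highest eigenvector $\ket{t_1}$ is a product state.

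I expect the main obstacles to be the two pieces just flagged in $(2)\Rightarrow(1)$: the spectral-continuity step, where one must use that $\hat T_k$ is generally non-normal so that only its largest eigenvalue survives in the $k\to\infty$ limit while $\mc L$ is merely rank one; and the bookkeeping that collapses $\mc T^{\otimes 2}[\mf s_1]$, $\Lambda_1\otimes\Lambda_1$ and the residual swap onto the block $\mc H_1\otimes\mc H_1$, which is precisely what makes Lemma~\ref{lemma:1D} applicable in its stated form.
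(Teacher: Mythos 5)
Your proof is correct, and for the key implication $(2)\Rightarrow(1)$ it takes a genuinely different route from the paper. The paper analyses the finite-$k$ eigenoperator $X$ of $\hat{\mc T}_k$ directly: it uses the Schwarz inequality for the sub-unital map $\mc E=\mc T\otimes\mc T$ to derive $\Tr[\Lambda_\beta XX^\dagger]\leq\lambda_\beta^{2k}\Tr[\Lambda_\beta XX^\dagger]$ and thereby localize the support of $X$ to $\mc H_1\otimes\mc H_1$, then invokes uniqueness of the eigenoperator (Lemma~\ref{one_eigenvalue}) to make $X$ Hermitian, and uses primitivity to show $X=P_{(1,1)}+\mc O(\e^{-k})$ before extracting $\Tr[\Lambda_{(1,1)}\mc E_{(1,1)}[\mf s]\mf s]=1$ in the limit. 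You instead pass immediately to the $k\rightarrow\infty$ limit of the super-operator itself, note that the limit $\mc L$ is rank one with sole possibly-nonzero eigenvalue $c=\hat t_1^{(\infty)}$ --- this is precisely the computation the paper performs only later, in Lemma~\ref{lemma:kinfty}, for the lower bounds --- and use continuity of the spectrum in finite dimensions to force $c=1$ from the persistence of the eigenvalue $1$ along a subsequence $k_j\rightarrow\infty$. Both routes terminate in Lemma~\ref{lemma:1D} and the rank-one criterion for $\ket{t_1}$ from \eqref{eq:t1}. Your version is shorter and sidesteps the Schwarz-inequality bookkeeping for this proposition; the paper's version yields the additional information that the finite-$k$ eigenoperator is already exponentially close to $P_{(1,1)}$, which is not needed for the equivalence itself. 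Two of your steps deserve an explicit sentence if written out: (i) that $\mc T^{\otimes 2}[\mf s_1]$ is supported on $\mc H_1^{\otimes 2}$ follows from $0\leq\mc T^{\otimes 2}[Y]\leq\norm{Y}\,\mc T^{\otimes 2}[P_1\otimes P_1]=\norm{Y}\,P_1\otimes P_1$ for positive $Y$ supported there (positivity plus $\mc T[P_1]=P_1$ does suffice here), and (ii) that $\mc T$ restricted to operators on $\mc H_1$ coincides with the primitive unital map $\mc T_1$ uses the block-diagonal Kraus structure of Proposition~\ref{can_form} with $\lambda_1=1$, not positivity alone. The remaining implications ($(1)\Rightarrow(3)$ via the fixed point $P_1\otimes P_1$ of $\hat{\mc T}_k$, and $(3)\Leftrightarrow(4)$ via Lemmas~\ref{one_eigenvalue} and~\ref{trace_ineq} together with Proposition~\ref{proposition:renyil}) match the paper's argument in substance.
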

Before we prove the proposition. Let us use it to prove the equivalence of statements~\ref{thm1:1}--\ref{thm1:3} of Theorem~\ref{thm:eq_rel_null}. 
The step \ref{thm1:3}$\Rightarrow$\ref{thm1:2} follows directly from the definitions. 
The step \ref{thm1:2}$\Rightarrow$\ref{thm1:1} is one of the statements of Proposition~\ref{eq_rel_tech}.
For the step \ref{thm1:1}$\Rightarrow$\ref{thm1:3}, first observe that Proposition~\ref{eq_rel_tech} implies that $\lim_{n\rightarrow \infty} S_2(\rho^{(n)}_k) =0$ for all $k\geq 2$.  By \eqref{norm_behaviour}, together with non-negativity and monotonicity of Rényi entropies, we have
\begin{equation}
	\s_2(\rho_k^{(n)}) \geq \s_\infty(\rho_k^{(n)}) \geq \frac{\alpha-1}{\alpha}\s_\alpha(\rho_k^{(n)}) \geq 0
\end{equation}
for all $\alpha > 1$. Since the l.h.s converges to $0$ in the thermodynamic limit, we obtain the desired result.

\begin{proof}[Proof of Proposition~\ref{eq_rel_tech}]
	($\textit{4.} \Rightarrow \textit{3.} \Rightarrow \textit{2.}$) These relations follow trivially from the definition of the statements.
	
	($\textit{1.} \Rightarrow \textit{4.}$) By assuming statement $(\textit{1.})$ we have $\ket{t_1} = \ket{\phi}\ket{\varphi}$, therefore 
	\begin{align}\label{t1_prod_evU}
		U\ket{t_1}_{12}\ket{t_1}_{34} &= \left(\mathbbm{1}_{13}\otimes\mf{s}_{24}\right)\ket{\phi}_1\ket{\varphi}_2\ket{\phi}_3\ket{\varphi}_4 \nonumber \\
		&=\ket{\phi}_1\ket{\varphi}_2\ket{\phi}_3\ket{\varphi}_4 = \ket{t_1}_{12}\ket{t_1}_{34},
	\end{align}
	where the indices indicate to which copy of $\mathcal{H}_B$ each vector belongs.
	But then $\hat T_k \ket{t_1}\ket{t_1} = \ket{t_1}\ket{t_1}$, meaning that $\hat T_k$ has $1$ as an eigenvalue and $\ket{t_1 t_1}$ is its corresponding eigenvector. 
	Since entropy is non-negative, by \eqref{s2_liminf} we have $|\hat t_1| \leq 1$, therefore we have $\hat t_1 = 1$. 
	By Lemma~\ref{one_eigenvalue}, $|\hat t_2| < 1$ for all $k\geq 2$, hence we can apply  Lemma~\ref{trace_ineq} to \eqref{s2_lim} and obtain the desired result.

($\textit{2.} \Rightarrow \textit{1.}$) This step is quite involved. Let us therefore first give a broad outline of the proof.
	We start by assuming that $|\hat t_1|=1$ and study the corresponding eigenvalue equation in Liouville space, in which $T$ is represented by the CP-map $\mc T$ ({\bfseries Step 1}).
	From this equation and using the Schwarz inequality, we first show that the highest eigenoperator $X$ of $\hat{\mc T}_k$ (which corresponds to the highest eigenvector of $\hat T_k$) has to be supported on $\mc H_1\otimes \mc H_1$, which allows us to reduce the analysis to the case of a single block, on which $\mc T$ is given by a primitive, unital CP-map ({\bfseries Step 2}). 
	Making use of the gap of $T$ and the asymptotic behaviour as $k\rightarrow \infty$, we can then show that $X=P_1\otimes P_1$ up to exponentially small corrections in $k$ ({\bfseries Step 3}). This in turn lets us show that $\mf{s}|_{\mc H_1\otimes \mc H_1}=P_1\otimes P_1\geq 0$, which is only possible if $\mc H_1$ is one-dimensional ({\bfseries Step 4}). Finally, we conclude that the highest eigenvector of $T$ is a product-state ({\bfseries Step 5}). Since $T$ is gapped we assume without loss of generality that $1 = \lambda_1 > \lambda_2 \geq \lambda_3 \geq \cdots$. \\ 

	{\bfseries Step 1.} As indicated already, the proof relies on the Liouville representation, which can be summed up as:
\begin{equation}\label{Liouville}
		\ket{\alpha \beta} \leftrightarrow \ket{\alpha}\overline{\bra{\beta}},
\end{equation}
where $\leftrightarrow$ signifies that this operation is a bijection and $\overline{\bra{\beta}} = \ket{\beta}^T$. Furthermore, this operation commutes with tensor products. Importantly, the Liouville representation preserves composition of linear maps. For the objects relevant for our discussion, we have:
\begin{equation}\label{new_ops}
		\begin{split}
			T = \sum_i A^i \otimes \bar A_i \qquad  &\leftrightarrow\qquad \mc T[\,\cdot\,] = \sum_i A^i\,\cdot\,A_i^{\dagger}~,\\
			U = \mathbbm{1}_{13}\otimes \mf{s}_{24}\qquad &\leftrightarrow\qquad \mc{U}[\,\cdot\,] = \,\cdot\,\,\mf{s}.
		\end{split}
	\end{equation}
	By \eqref{s2_liminf} statement $(\textit{2.})$ is equivalent to $|\hat t_1| = 1$. By Lemma~\ref{one_eigenvalue} we conclude that $\hat t_1 = 1$ and is unique.
	This allows us to reformulate the assumption as: ``for all $k_0\in \mathbb{N}$ there exists $k\geq k_0$ such that the equation 
	\begin{equation}\label{proof_eq_vec}
		(T\otimes T)^{k-1}U(T\otimes T) U\ket{x} = \ket{x} 
	\end{equation}
	has a unique solution up to a scalar factor''.
	
	By defining $\mc E = \mc T\otimes\mc T$, we can write	\eqref{proof_eq_vec} in the Liouville representation as
	\begin{equation}\label{proof_eq_op}
		\mathcal{E}^{k-1}\circ\mc U\circ\mathcal{E}\circ\mc U\left[X\right] 
		= \mathcal{E}^{k-1}\left[\mathcal{E}[X\mf{s}]\mf{s}\right] = X
	\end{equation}
	with $X$ a linear operator on $\mathcal{H}_B^{\otimes 2}$. 
	Statement $(\textit{2.})$ implies that for all $k_0\in \mathbb{N}$ there exists $k\geq k_0$ such that there exists a unique $X$ (up to a multiplicative factor) solving \eqref{proof_eq_op}. In principle $X$ depends on $k$, but we suppress this dependence in our notation. 

	{\bfseries Step 2.} We now make use of the canonical form in Proposition~\ref{can_form}. Since $\mc E = \mc T\otimes \mc T$ is sub-unital, it fulfills the Schwarz inequality of Lemma~\ref{schwarz}.
	For some doublet $\beta = (i,j)$ we now define $\Lambda_\beta = \Lambda_i \otimes \Lambda_j$, which is a full rank operator over $\mathcal{H}_\beta = \mathcal{H}_i\otimes\mathcal{H}_j$ and is a fixed point of the (Hilbert-Schmidt) adjoint $\mc E_\beta^*$ of $\mathcal{E}_\beta = \mc T_i\otimes\mc T_j$. 
	Using \eqref{proof_eq_op} and that CP-maps preserve hermiticity, we now observe
	\begin{align}
			\Tr\left[\Lambda_\beta XX^{\dagger}\right] & = \Tr\left[\Lambda_\beta \mathcal{E}^{k-1}\left[\mathcal{E}[X\mf{s}]\mf{s}\right] (\mathcal{E}^{k-1}\left[\mathcal{E}[X\mf{s}]\mf{s}\right])^{\dagger}\right] \nonumber \\
			& = \Tr\left[\Lambda_\beta \mathcal{E}^{k-1}\left[\mathcal{E}[X\mf{s}]\mf{s}\right] \mathcal{E}^{k-1}\left[(\mathcal{E}[X\mf{s}]\mf{s})^{\dagger}\right]\right].\nonumber
	\end{align}
	By repeated use of $\mf s^2=\1$ and the Schwarz inequality this implies 
	\begin{equation}\begin{split}
			\Tr\left[\Lambda_\beta XX^{\dagger}\right] & \leq \Tr\left[\Lambda_\beta \mathcal{E}^{k-1}\left[\mathcal{E}[X\mf{s}]\mf{s} (\mathcal{E}[X\mf{s}]\mf{s})^{\dagger}\right]\right] \\
			& = \Tr\left[\Lambda_\beta \mathcal{E}^{k-1}\left[\mathcal{E}[X\mf{s}] \mathcal{E}[(X\mf{s})^{\dagger}]\right]\right] \\
			& \leq \Tr\left[\Lambda_\beta \mathcal{E}^{k}\left[X\mf{s} (X\mf{s})^{\dagger}\right]\right] \\
			& = \Tr\left[\mathcal{E}^{*k}\left[\Lambda_\beta\right] XX^{\dagger}\right].
	\end{split}\end{equation}
	We can now use again the structure of the canonical form, which implies
	\begin{align}
	{\mc E^*}^k [\Lambda_\beta] = \lambda_\beta^{2k} {\mc E_\beta^*}^k[\Lambda_\beta] = \lambda_\beta^{2k} \Lambda_\beta.
	\end{align}
	Thus, we finally find
	\begin{align}\label{eq_restriction}			
	\Tr\left[\Lambda_\beta XX^{\dagger}\right]
		&\leq \lambda_\beta^{2k} \Tr\left[\Lambda_\beta XX^{\dagger}\right].
	\end{align}
	As we have seen above, for $\beta \neq (1,1)$ we have $\lambda_\beta < 1$. In those cases \eqref{eq_restriction} gives a contradiction unless $P_\beta XX^{\dagger} P_\beta = 0$ (with $P_\beta$ the projector onto $\mathcal{H}_\beta$) since $\Lambda_\beta$ is full rank on $\mathcal{H}_\beta$ and vanishes on the complement. Therefore we can restrict the support and image of $XX^{\dagger}$ to be in $\mathcal{H}_1\otimes\mathcal{H}_1$. This implies that the left singular vectors of $X$, thus its image, are contained in $\mathcal{H}_1\otimes\mathcal{H}_1$.
From Lemma~\ref{lemma:symmetry}, we know that $\mathcal{E}[\mf{s}Y\mf{s}] = \mf{s}\mathcal{E}[Y]\mf{s}$ for all $Y$. Together with $\mf s^2=\1$ and \eqref{proof_eq_op}, we then find that $X^\dagger$ fulfills the same equation as $X$: 
	\begin{equation}\label{proof_eq_op_dagger}
		X^{\dagger}
		= \mathcal{E}^{k-1}\left[\mathcal{E}[X^{\dagger}\mf{s}]\mf{s}\right].
	\end{equation}
	Since, by assumption, the solution of this equation is unique up to a scalar factor we have $X \propto X^{\dagger}$. We can thus assume $X$ to be hermitian:
	\begin{equation}\label{hermiticity}
		X = X^{\dagger}.
	\end{equation}
	Since $X$ is hermitian, its image is the same as its support. Therefore \eqref{eq_restriction} and \eqref{hermiticity} together imply that the image and the support of $X$ is contained in $\mathcal{H}_1\otimes\mathcal{H}_1$. Therefore we can rewrite \eqref{proof_eq_op} as 
	\begin{equation}\label{proof_eq_op_restricted}
		\mathcal{E}_{(1,1)}^{k-1}\left[\mathcal{E}_{(1,1)}[X\mf{s}]\mf{s}\right] = X
	\end{equation}
	with $\mathcal{E}_{(1,1)} = \mc T_1\otimes\mc T_1$ (and $\mf s$ is naturally restricted to $\mc H_1\otimes \mc H_1$).

	{\bfseries Step 3.} 
	The map $\mc T_1$ is primitive. For large values of $k$ we therefore have:
	\begin{equation}\label{big_k_approx}
		\mc T_1^k[Y] = \mc P_{\Lambda_1}[Y] + \mc O(e^{-k}) = P_1\Tr[\Lambda_1 Y] + \mathcal{O}(e^{-k}),
	\end{equation}
	where $\Lambda_1$ is normalised such that $\Tr[\Lambda_1] = 1$. The term $\mc{O}(\e^{-k})$ denotes a matrix of norm upper bounded by $K \e^{-a k}$, where $a,K$ are constants (that depend on the bond-dimension $d_1$). 

	Applying \eqref{big_k_approx} to \eqref{proof_eq_op_restricted}, we get
	\begin{equation}\label{proof_eq_approx}
		X = P_{(1,1)}\, \Tr\left[\Lambda_{(1,1)} \mathcal{E}_{(1,1)}[X\mf{s}]\mf{s}\right] + \mathcal{O}(\e^{-k})
	\end{equation}
	with $\Lambda_{(1,1)} = \Lambda_1\otimes\Lambda_1$ and $P_{(1,1)}$ denotes the orthogonal projector onto $\mathcal{H}_1\otimes\mathcal{H}_1$. 	Imposing $X$ to be of norm $1$ and taking norms on both sides of \eqref{proof_eq_approx} we get
	\begin{equation}\label{proof_ineq_**}
		1 - \mathcal{O}(e^{-k}) \leq \left|\Tr\left[\Lambda_{(1,1)} \mathcal{E}_{(1,1)}[X\mf{s}]\mf{s}\right]\right|\leq 1 + \mathcal{O}(e^{-k}).
	\end{equation}
	Thus from $\eqref{proof_eq_approx}$ we find 
	\begin{equation}\label{X_approx}
		X = \pm P_{(1,1)} + \mathcal{O}(e^{-k}).
	\end{equation}
	Since we can change $X$ by a factor of $\pm 1$, we can deliberately choose 
	the positive sign.
	
	{\bfseries Step 4.} We can now re-insert \eqref{X_approx} into \eqref{proof_ineq_**}, and, using $\mathcal{E}_{(1,1)}[P_{(1,1)}Y] = \mathcal{E}_{(1,1)}[Y]$, we obtain
	\begin{equation}\label{pre_lim_k}
		\Tr\left[\Lambda_{(1,1)} \mathcal{E}_{(1,1)}[\mf{s}]\mf{s}\right]= 1\pm \mathcal{O}(e^{-k}).
	\end{equation}
	While $X$ depends on $k$, $\Tr\left[\Lambda_{(1,1)} \mathcal{E}_{(1,1)}[\mf{s}]\mf{s}\right]$ is independent of $k$ and for all $k_0\in \mathbb{N}$ there exists $k\geq k_0$ that allows us to get to \eqref{pre_lim_k} by assumption. We can therefore take the limit $k\rightarrow\infty$ and find
	\begin{equation}\label{proof_eq_***}
		\Tr\left[\Lambda_{(1,1)} \mathcal{E}_{(1,1)}[\mf{s}]\mf{s}\right] = 1~.
	\end{equation}
	Since $\Lambda_{(1,1)}>0$ on $\mc H_{(1,1)}$, Lemma~\ref{lemma:1D} then implies $\dim(\mc H_1)=1$. 
	
	{\bfseries Step 5.} Since $\mathcal{H}_1\otimes\mathcal{H}_1$ is one-dimensional, we also have $X = X|_{\mc H_1\otimes \mc H_1}= P_{(1,1)}$ and $X$ is also a fixed point of $\mathcal{E}$, which in vectorial representation can be written as 
	\begin{equation}\label{t_x_eq}
		(T\otimes T)\ket{x} = \ket{x}.
	\end{equation}
	Simultaneously we also have $\mc U[X] = X\mf{s} = X$, which means 
	\begin{equation}\label{u_x_eq}
		U\ket{x} = \ket{x}.
	\end{equation}
	By \eqref{t_x_eq} we necessarily have $\ket{x} = \ket{t_1}\ket{t_1}$. Then using the Schmidt decomposition we have 
	\begin{equation}\label{decomp_t1}
		\ket{t_1} = \sum_{i}\alpha_i\ket{\chi_i}\ket{\eta_i}
	\end{equation}
	with $\sum_i \alpha_i^2 = 1$ and $\alpha_i \geq 0$. This implies 
	\begin{align} \label{t1t1_U_img}
		\bra{t_1 t_1}U\ket{t_1 t_1} 
			& = \sum_i \alpha_i^4.
	\end{align}
	From \eqref{u_x_eq} we also have $\bra{t_1 t_1}U\ket{t_1 t_1} = 1$. But then we have
	\begin{equation}\label{product_eq_final}
		1 = \sum_i \alpha_i^2 = \sum_i \alpha_i^4~.
	\end{equation}
	The only way \eqref{product_eq_final} can be satisfied with $\alpha_i \geq 0$ is if there is a unique $j$ such that $\alpha_j = 1$ and $\alpha_i = 0$ for $i\neq j$. But then by \eqref{decomp_t1} we have $\ket{t_1} = \ket{\chi_j}\ket{\eta_j}$, thus obtaining statement $(\textit{1.})$ and completing the proof of the proposition.
\end{proof}

\subsubsection{Equivalence of statements \ref{thm1:1} and \ref{thm1:4}}
\label{sec:1to4}
We now show that the entropy of any finite region vanishes if and only if the highest eigenvector of $T$ is a product-state.
Let us therefore, for any fixed collection of sites $X\subset \mathbbm N$, define the reduced density matrix as
\begin{align}
	\rho^{(n)}_X = \tr_{X^c}[\rho^{(n)}],\quad \forall n\geq |X|. 
\end{align}
For our next step we are interested in the case where $X=\{k\}$ contains only the $k$-th site of the spin chain. Then we have by the cyclic properties of (partial) traces
\begin{widetext}
\begin{equation}\label{tr_rho_single}
	\hat \rho_{\{k\}} =\;\includegraphics[width=0.415\textwidth, valign=c]{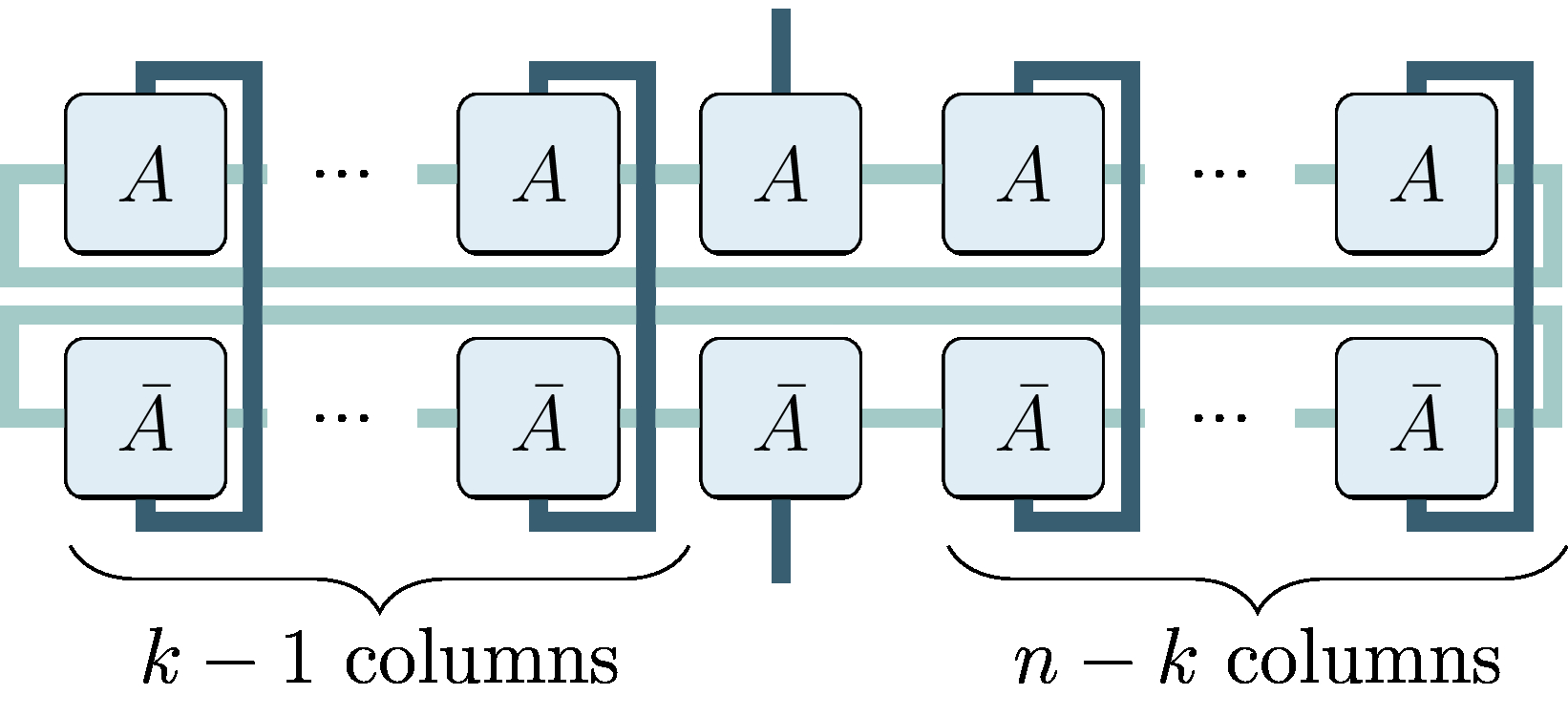}\; =  \; \includegraphics[width=0.252\textwidth, valign=c]{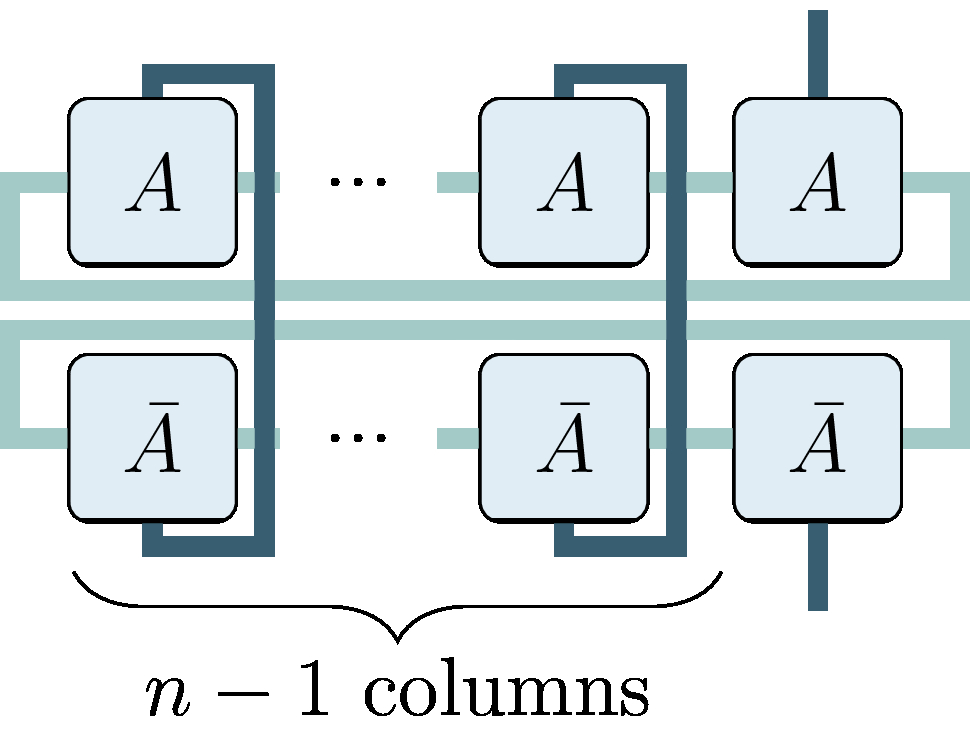}
\end{equation}
\end{widetext}
where we moved the $n-k$ columns from the right to the left without breaking any line.
Then it is clear that (omitting normalization in the diagrams)
\begin{align}\label{s2_single_That}
	\Tr[\hat \rho_{\{k\}}^2] &= \includegraphics[width=0.24\textwidth, valign=c, raise=-11pt]{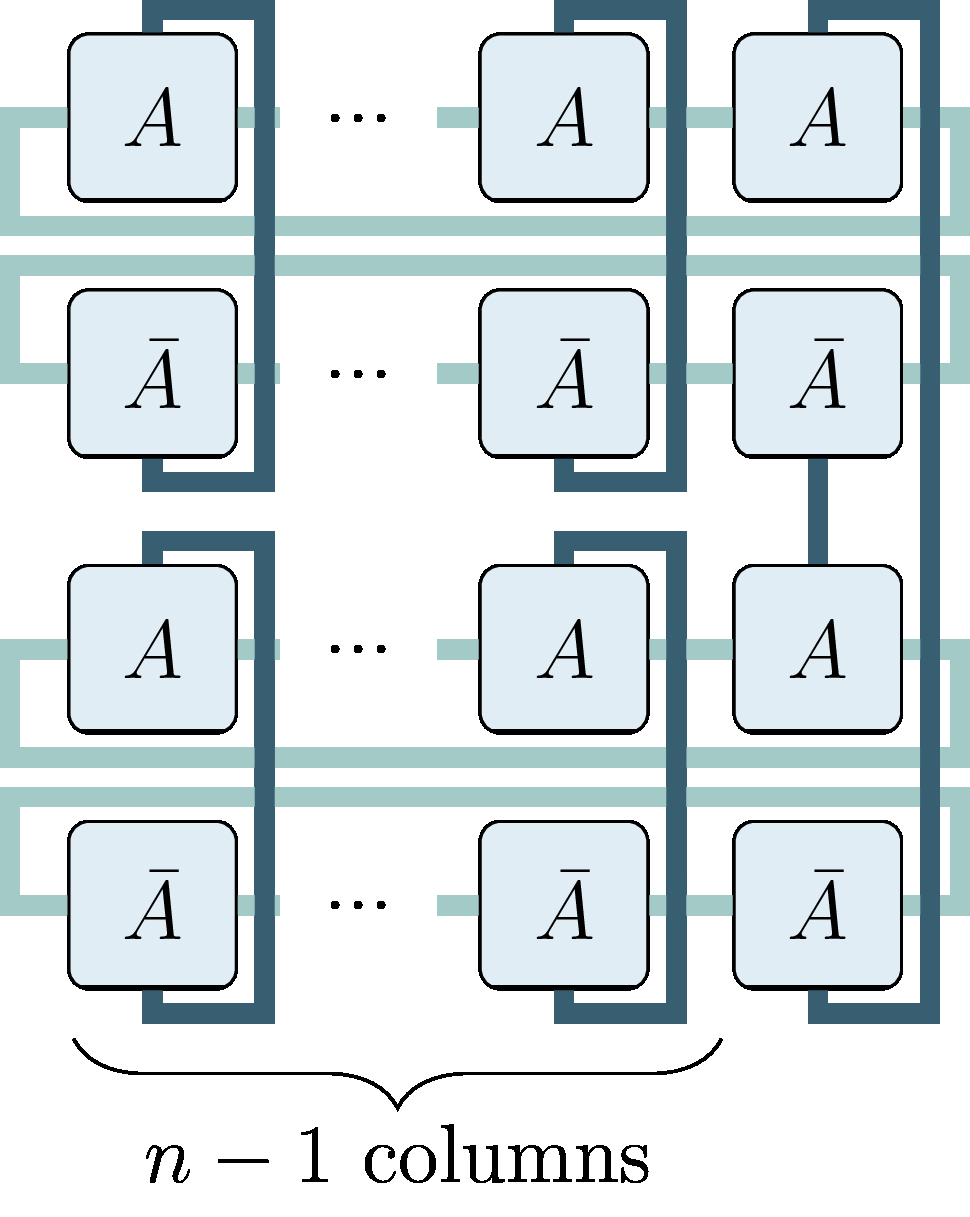}
=\includegraphics[width=0.125\textwidth, valign=c]{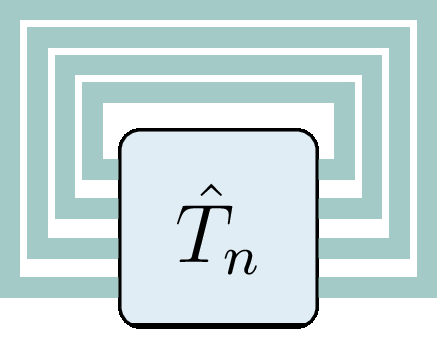}\nonumber \\
	&= \frac{\Tr[\hat T_n]}{\Tr[T^{n}]^2}.
\end{align}
This last insight leads to the following proposition:
\begin{proposition}\label{pure_site}
	Let $\rho^{(n)}$ be the density operator of a spin chain of length $n$ described by a translational invariant MPS with gapped transfer operator $T$. Then 
	\begin{equation}\label{pure_site_s2}
		\lim\limits_{n\rightarrow\infty}\s_2\left(\rho^{(n)}_{\{k\}}\right) = 0
	\end{equation}
	if and only if the greatest eigenvector of $T$ is a product state.
\end{proposition}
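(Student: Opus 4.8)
The starting point is the identity \eqref{s2_single_That}, $\Tr[\hat\rho_{\{k\}}^2]=\Tr[\hat T_n]/\Tr[T^n]^2$, where $\hat T_n$ is the twisted transfer operator $\hat T_k$ evaluated at $k=n$ (translation invariance was used to carry the $n-k$ outer columns around the ring, so the right-hand side does not depend on $k$). Since $T$ is gapped and normalized with $t_1=1$, we have $\Tr[T^n]=1+\sum_{j\geq 2}t_j^n\to 1$ and $\Tr[T^n]=\mathrm{Tr}_{\mathrm{super}}[\mc T^n]\geq 0$ for all $n$. Hence it suffices to compute $\lim_{n\to\infty}\Tr[\hat T_n]$ and to decide when it equals $1$, for then $\lim_{n\to\infty}S_2(\rho^{(n)}_{\{k\}})=-\log\big(\lim_{n\to\infty}\Tr[\hat T_n]\big)$.

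To evaluate this limit I would pass to the Liouville representation, in which $\hat T_n$ becomes the super-operator $\hat{\mc T}_n=(\mc T\otimes\mc T)^{n-1}\circ\Phi$ with $\Phi[X]=(\mc T\otimes\mc T)[X\mf s]\mf s$, and $\Tr[\hat T_n]$ equals the super-operator trace of $\hat{\mc T}_n$. Because $\mc T$ is gapped, $\mc T\otimes\mc T$ is gapped as well (its unique eigenvalue of modulus $1$ is $t_1^2=1$), so Lemma~\ref{lemma:powersprojection} applied to $\mc T\otimes\mc T$ gives $(\mc T\otimes\mc T)^{n-1}\to\mc P_{\Lambda_1\otimes\Lambda_1}$ in norm, where $\mc P_{\Lambda_1\otimes\Lambda_1}[Y]=(P_1\otimes P_1)\Tr[(\Lambda_1\otimes\Lambda_1)Y]$. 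Consequently $\hat{\mc T}_n$ converges to the rank-one super-operator $X\mapsto(P_1\otimes P_1)\Tr[(\Lambda_1\otimes\Lambda_1)(\mc T\otimes\mc T)[X\mf s]\mf s]$, so $\lim_n\Tr[\hat T_n]$ exists and equals the super-operator trace of this limit. Using the block structure of the canonical form (Proposition~\ref{can_form}) --- $\mf s(\Lambda_1\otimes\Lambda_1)$ is supported on $\mc H_1\otimes\mc H_1$, on which $\mc T\otimes\mc T$ and its adjoint restrict to $\mc T_1\otimes\mc T_1$, while $P_1\otimes P_1$ acts there as the identity --- together with cyclicity of the trace, a short computation gives
\begin{equation}\label{eq:plan-limit}
\lim_{n\to\infty}\Tr\big[(\hat\rho^{(n)}_{\{k\}})^2\big]=\Tr\big[(\Lambda_1\otimes\Lambda_1)\,(\mc T_1\otimes\mc T_1)[\mf s]\,\mf s\big],
\end{equation}
where now everything is understood on $\mc H_1\otimes\mc H_1$ and $\mc T_1$ is the primitive, unital CP-map of the first canonical block.

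It remains to decide when the right-hand side of \eqref{eq:plan-limit} equals $1$. Here I would invoke Lemma~\ref{lemma:1D} with $\mc H=\mc H_1$, $\mc T=\mc T_1$ (unital, with $\1$ as its unique fixed point by primitivity) and $\rho=\Lambda_1\otimes\Lambda_1$, which is a full-rank density matrix since $\Lambda_1>0$ on $\mc H_1$: then the trace equals $1$ only if $\dim\mc H_1=1$. The converse is immediate, since for $\dim\mc H_1=1$ every factor in \eqref{eq:plan-limit} is trivially $1$. Finally, by \eqref{eq:t1}, $\dim\mc H_1=1$ if and only if the highest eigenvector $\ket{t_1}$ of $T$ is a product state. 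Chaining these equivalences yields $\lim_{n\to\infty}S_2(\rho^{(n)}_{\{k\}})=0$ if and only if $\ket{t_1}$ is a product state.

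The step I expect to be the main obstacle is the middle one: determining $\lim_n\Tr[\hat T_n]$ \emph{exactly} rather than merely bounding it --- note that Lemma~\ref{one_eigenvalue} only controls the top eigenvalue of $\hat T_n$, not its trace, and the subleading eigenvalues move with $n$. Because the transfer operator is non-normal, one must be careful in justifying that the asymptotic super-operator is supported on $\mc H_1\otimes\mc H_1$ and that $\mc T\otimes\mc T$ acts there exactly as $\mc T_1\otimes\mc T_1$, so that Lemma~\ref{lemma:1D} applies; the auxiliary facts that $(\mc T_1\otimes\mc T_1)[\mf s]\mf s$ is Hermitian with operator norm at most $1$ (Lemmas~\ref{lemma:symmetry} and~\ref{lemma:sEsnorm}) are what make Lemma~\ref{lemma:1D} usable. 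Once this bookkeeping is done, the rest follows directly from the results already established.
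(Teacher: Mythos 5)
Your proposal is correct and follows essentially the same route as the paper: both reduce the statement to the identity $\lim_{n}\Tr[(\rho^{(n)}_{\{k\}})^2]=\Tr[\Lambda_{(1,1)}(\mc T_1\otimes\mc T_1)[\mf s]\mf s]$ and then conclude $\dim\mc H_1=1$ via Lemma~\ref{lemma:1D} (the paper phrases this as ``argue exactly as in Proposition~\ref{eq_rel_tech}''). The only difference is cosmetic: you obtain the limit of $\Tr[\hat T_n]$ directly from the norm convergence $(\mc T\otimes\mc T)^{n-1}\to\mc P_{\Lambda_1}\otimes\mc P_{\Lambda_1}$ of Lemma~\ref{lemma:powersprojection}, whereas the paper first passes through the Jordan form of $T$ and a triangularity argument to identify $\lim_n S_2$ with $-\log|\hat t|$; your version handles both directions of the equivalence uniformly and is, if anything, slightly cleaner.
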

\begin{proof}
	By \eqref{s2_single_That}
	\begin{equation}
		\s_2(\rho^{(n)}_{\{k\}}) = -\log(\Tr[(\rho^{(n)}_{\{k\}})^2]) = -\log\left(\frac{\Tr[\hat T_n]}{\Tr[T^n]^2} \right)~.
	\end{equation}
	Since $T$ is gapped we can assume without loss of generality that $t_1 = 1$ and $|t_2| < 1$. By Lemma~\ref{trace_ineq}, we therefore get
	\begin{equation}\label{pure_site_s2_tr}
		\lim_{n\rightarrow \infty} \s_2\left(\rho_{\{k\}}^{(n)}\right) =
		\lim_{n\rightarrow \infty} -\log\Tr[\hat T_n]~.
	\end{equation}
	As in the proof of ($\textit{2.} \Rightarrow \textit{1.}$) of Proposition~\ref{eq_rel_tech} we can approximate $T^n$ for large values of $n$ by the Jordan block corresponding to its largest eigenvalue, we get
	\begin{equation}\label{Tn_approx}
		T^n = VJ^n V^{-1} = \ket{t_1}\bra{e_1}V^{-1} + \mathcal{O}(e^{-n})
	\end{equation}
	with $\ket{t_1}$ the highest eigenvector of $T$, $\ket{e_1}$ the first vector of the canonical basis and $V$ the basis transformation matrix.
	Therefore we have
	\begin{equation}\label{Tn_lim}
		\lim_{n\rightarrow\infty} \hat T_n =  \ket{t_1t_1}\bra{e_1e_1}\left(V^{-1}\otimes V^{-1}\right) U  \left(T\otimes T\right) U.
	\end{equation}
	Which makes it clear that  $\lim_{n\rightarrow\infty} \hat T_n$ has only the top row non-zero in a basis extending $\ket{t_1t_1}$ to an orthonormal basis, thus making it triangular in this basis. Triangular matrices have the eigenvalues on the diagonal, therefore
	\begin{equation}\label{pure_site_s2_formula}
		\lim\limits_{n\rightarrow\infty}\s_2(\rho^{(n)}_{\{k\}}) =  -\log\left(|\hat t|\right),
	\end{equation}
	with 
	\begin{align}\label{eq:t_hat_def}
		\hat t = \bra{e_1e_1}\left(V^{-1}\otimes V^{-1}\right) U  \left(T\otimes T\right) U\ket{t_1t_1}. 
	\end{align}
	
	($\Leftarrow$) By assuming that $\ket{t_1}$ is a product state, $U\ket{t_1}\ket{t_1} = \ket{t_1}\ket{t_1}$ as in \eqref{t1_prod_evU}. Therefore
	\begin{equation}
		\hat t = \bra{e_1e_1}\left(V^{-1}\otimes V^{-1}\right)\ket{t_1t_1} = 1,
	\end{equation}
	since $\ket{e_1} = V^{-1}\ket{t_1}$. 
	
	($\Rightarrow$) By \eqref{eq:t_hat_def}, we have
	\begin{align}
	\lim_{n\rightarrow \infty} \Tr[\hat T_n]= \lim_{n\rightarrow \infty} \bra{t_1t_1}\hat T_n\ket{t_1 t_1}.
	\end{align}
	We now follow a similar strategy as in the proof of the equivalence of statements \ref{thm1:1}--\ref{thm1:3}. Going to Liouville space and making use of the canonical form, we find
	\begin{align}
		\bra{t_1t_1}\hat T_n\ket{t_1 t_1} &= \frac{1}{d_1^2} \Tr\big[P_{(1,1)} \mc E^n[\mc E[P_{(1,1)} \mf s] \mf s\big] \\
		&= \frac{1}{d_1^2} \Tr\big[P_{(1,1)} \mc E_{(1,1)}^n[\mc E_{(1,1)}[\mf s] \mf s\big].
	\end{align}
	Since $\mc E_{(1,1)}$ is a primitive, unital CP-map, we have 
	\begin{align}
		\lim_{n\rightarrow \infty} (\mc E^*_1)^n[P_{(1,1)}/d_1^2] = \Lambda_{(1,1)} = \Lambda_1\otimes \Lambda_1 > 0
	\end{align}
	on $\mc H_1\otimes \mc H_1$. If we now assume that 
	\begin{align}
	\lim_{n\rightarrow\infty} S_2(\rho^{(n)}_{\{k\}}) = 0,
	\end{align}
	we again require
	\begin{align}
	\Tr[\Lambda_{(1,1)} \mc T_1 \otimes \mc T_1 [\mf s]\mf s]=1,
	\end{align}
	We can now argue exactly as in the proof of Proposition~\ref{eq_rel_tech}.
\end{proof}

This last result tells us that, under the assumptions of the proposition, the reduced state of a single site is pure if and only if the greatest eigenvector of the transfer operator is a product state. 
Let us now again consider a fixed, arbitrary, but finite, subsystem $X$ of the chain. Rényi entropies are minimised only by pure states. Therefore, by Proposition~\ref{pure_site}, for any $x\in X$ there exists $\ket{\varphi_x} \in \mathcal{H}_S$ such that
\begin{equation}\label{site_all_lim}
	\lim_{n\rightarrow\infty} \hat \rho^{(n)}_{\{x\}} = \ket{\varphi_x}\bra{\varphi_x}
\end{equation}
if and only if $\ket{t_1}$ is a product state. 
Since this equivalence holds for all $x\in X$, we have that $\ket{t_1}$ being a product state is equivalent to each site of $X$ approaching a pure state in the thermodynamic limit. Therefore
\begin{equation}\label{X_prod_lim}
	\lim_{n\rightarrow\infty} \hat \rho_X^{(n)} = \bigotimes_{x\in X}\ket{\varphi_x}\bra{\varphi_x}
\end{equation}
for some collection $\{\ket{\varphi_x}\}_{x\in X} \subset \mathcal{H}_S$ if and only if the greatest eigenvector of $T$ is a product state. We can note that $\ket{\varphi_x} = \ket{\varphi_y}\;\forall x,y$ because of translational invariance.  
This finshes the proof of Theorem~\ref{thm:eq_rel_null}.

\section{Lower bound to the entropy density}
\label{sec:lower-bounds}
In this section, we provide the full statements and derivation for the lower bound on the entropy density $s^{(k)}_\alpha$ and its implications for quantum channels.  
We work under the same assumptions as in section~\ref{sec:translational-invariant} and assume that the transfer operator $T$ is normalized, so that $t_1=1$, and assume that $T$ is in canonical form. 
Our task is to upper bound the highest eigenvalue of the twisted transfer-operator in the limit $k\rightarrow \infty$.
In Liouville representation, the eigenvalue-equation reads
\begin{align}
	\mc E^{k-1}[\mc E[X_k \mf s]\mf s] = \hat t^{(k)}_1 X_k,
\end{align}
where $\hat t^{(k)}_1$ is the highest eigenvalue of the twisted transfer operator and $X_k$ the corresponding eigenvector, which we can choose to be normalized as $\norm{X_k}=1$.
\begin{lemma}\label{lemma:kinfty}
	In the limit $k\rightarrow \infty$, the highest eigenvector of the twisted transfer-operator $\hat{\mc T}_k$ is given by the projector $P_{(1,1)}$ and the associated eigenvalue is
	\begin{align}
		\hat t_1^{(\infty)} = \Tr[\Lambda_{(1,1)} \mc E[\mf s] \mf s] = \Tr[\Lambda_{(1,1)} \mc T_1\otimes \mc T_1 [\mf s]\mf s].
	\end{align}
	\begin{proof}
	By Lemma~\ref{lemma:powersprojection}, we have
		\begin{align}
		\lim_{k\rightarrow \infty} \mc E^{k-1} = \mc P_{\Lambda_1}\otimes \mc P_{\Lambda_1}. 
		\end{align}
		Thus, in this limit, the eigenvalue equation reduces to 
		\begin{align}
			\hat t^{(\infty)}_1 X_k = P_{(1,1)}\tr[\Lambda_{(1,1)}\mc E[X\mf s]\mf s]=P_{(1,1)}\tr[\Lambda_{(1,1)}\mc E_{(1,1)}[X\mf s]\mf s],\nonumber
		\end{align}
		which gives $X=P_{(1,1)}$ (by normalization) and $\hat t^{(\infty)}_1 = \tr[\Lambda_{(1,1)} \mc E_{(1,1)}[\mf s]\mf s]$, since $\mc E_{(1,1)}[P_{(1,1)}\mf s]=\mc E_{(1,1)}[\mf s]$.
	\end{proof}
\end{lemma}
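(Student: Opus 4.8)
The plan is to pass to the limit $k\to\infty$ directly in the Liouville-space eigenvalue equation $\hat t_1^{(k)} X_k = \mc E^{k-1}\big[\mc E[X_k\mf s]\mf s\big]$ (with $\norm{X_k}=1$), using that $\mc E^{k-1}$ collapses onto the rank-one map $\mc P_{\Lambda_1}\otimes\mc P_{\Lambda_1}=P_{(1,1)}\Tr[\Lambda_{(1,1)}\,\cdot\,]$. First I would record two uniform bounds that legitimize this limit. (i)~Since $\mc E=\mc T\otimes\mc T$ is sub-unital, the Schwarz inequality of Lemma~\ref{schwarz} gives $\norm{\mc E[Z]}\le\norm{Z}$, so the auxiliary operators $Y_k:=\mc E[X_k\mf s]\mf s$ satisfy $\norm{Y_k}\le\norm{X_k\mf s}=1$ for all $k$. (ii)~Because $T$ is gapped, $\mc E$ has a unique eigenvalue of modulus $1$ and second-largest modulus $|t_2|<1$, so by Lemma~\ref{lemma:powersprojection} (applied in each tensor factor) and finite-dimensionality $\norm{\mc E^{k-1}-\mc P_{\Lambda_1}\otimes\mc P_{\Lambda_1}}_{\mathrm{op}}\le\epsilon_k$ with $\epsilon_k\to0$ geometrically (up to a polynomial prefactor from possible Jordan blocks of $\mc E$).

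Combining (i) and (ii) I would then write $\norm{\hat t_1^{(k)}X_k-P_{(1,1)}\Tr[\Lambda_{(1,1)}Y_k]}=\norm{(\mc E^{k-1}-\mc P_{\Lambda_1}\otimes\mc P_{\Lambda_1})[Y_k]}\le\epsilon_k\to0$, which shows that $\hat{\mc T}_k$ converges, in operator norm on the space of superoperators, to the rank-one map $\hat{\mc T}_\infty:X\mapsto P_{(1,1)}\Tr[\Lambda_{(1,1)}\mc E[X\mf s]\mf s]$; note that the only $k$-dependence left in the argument of $\mc E^{k-1}$ sits in the uniformly bounded $Y_k$, which is exactly what makes the limit interchange work. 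Since $\hat{\mc T}_\infty$ has image $\mathbb C P_{(1,1)}$ with $\hat{\mc T}_\infty[P_{(1,1)}]=\tau P_{(1,1)}$ for $\tau:=\Tr[\Lambda_{(1,1)}\mc E[P_{(1,1)}\mf s]\mf s]$, its spectrum is $\{\tau\}\cup\{0\}$. To match $\tau$ with $\lim_k\hat t_1^{(k)}$ and to make sure it is genuinely the dominant eigenvalue, I would use that $\hat t_1^{(k)}$ is real and uniformly bounded away from $0$: it is positive by Lemma~\ref{one_eigenvalue}, and $-\log\hat t_1^{(k)}=\liminf_n\tfrac{k}{n}S_2(\rho_k^{(n)})\le\log d$ by Proposition~\ref{proposition:renyil} and the elementary bound $S_2(\rho_k^{(n)})\le(n/k)\log d$, so $\hat t_1^{(k)}\ge 1/d$. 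Hence $\tau\ge 1/d>0$, and by continuity of the spectrum under norm perturbations $\hat t_1^{(k)}\to\tau$ with unit eigenvector $X_k\to P_{(1,1)}$ (after fixing the phase, e.g.\ by requiring $\Tr[P_{(1,1)}X_k]\ge0$).

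Finally I would rewrite $\tau$. Here I would use that in the canonical form the $A^i$ are block-diagonal, so that $\mc T$ preserves operators supported on $\mc H_1$ and restricts there to $\mc T_1$ (recall $\lambda_1=1$), together with the fact that $\mf s$ and $\Lambda_{(1,1)}=\Lambda_1\otimes\Lambda_1$ both commute with $P_{(1,1)}=P_1\otimes P_1$; this gives $\mc E[P_{(1,1)}\mf s]=\mc E_{(1,1)}[\mf s]$ and, after projecting the trace onto the support of $\Lambda_{(1,1)}$, $\tau=\Tr[\Lambda_{(1,1)}\mc E_{(1,1)}[\mf s]\mf s]=\Tr[\Lambda_{(1,1)}\mc E[\mf s]\mf s]$, which is the asserted value. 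The hard part is the limit interchange in the first two paragraphs: the argument $\mc E[X_k\mf s]\mf s$ of $\mc E^{k-1}$ is itself $k$-dependent, so one cannot simply substitute the pointwise limit of $\mc E^{k-1}$; the fix is to combine the \emph{uniform} operator-norm convergence $\mc E^{k-1}\to\mc P_{\Lambda_1}\otimes\mc P_{\Lambda_1}$ with the \emph{uniform} bound $\norm{Y_k}\le1$ (equivalently, extract a norm-convergent subsequence of the $X_k$). A secondary point is ruling out $\hat t_1^{(k)}\to0$, so that the statement ``the highest eigenvector is $P_{(1,1)}$'' is meaningful — this is handled by the uniform lower bound $\hat t_1^{(k)}\ge 1/d$.
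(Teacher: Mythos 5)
Your proposal is correct and follows essentially the same route as the paper's own proof: pass to the $k\to\infty$ limit of the Liouville-space eigenvalue equation using the collapse $\mc E^{k-1}\to\mc P_{\Lambda_1}\otimes\mc P_{\Lambda_1}$, identify the rank-one limit map with top eigenvector $P_{(1,1)}$, and restrict to the first block via the canonical form. The paper states the reduction in one line, whereas you additionally justify the limit interchange (uniform norm convergence of $\mc E^{k-1}$ plus the Schwarz-inequality bound on $\mc E[X_k\mf s]\mf s$) and rule out a vanishing limit eigenvalue via $\hat t_1^{(k)}\ge 1/d$; these are worthwhile details, and the only small imprecision — citing Lemma~\ref{one_eigenvalue} for positivity of $\hat t_1^{(k)}$, which that lemma only gives when $|\hat t_1^{(k)}|=1$ — is harmless since your modulus lower bound alone already forces the limit to be the unique nonzero eigenvalue $\tau$.
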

Having calculated the eigenvalue of the twisted transfer operator in the limit $k\rightarrow \infty$, we are left with the task to estimate it.
Since the only information we have about $\mc E_{(1,1)}$ is that it is a tensor product of primitive quantum channels, we are left with a general problem for primitive quantum channels. We therefore now drop the sub-scripts $(1,1)$ and write $\tilde \Lambda = \Lambda\otimes \Lambda\equiv \Lambda_{(1,1)}$.
In the following, let us further write
\begin{align}
	\hat t(\mc T) := \Tr[(\Lambda\otimes\Lambda)(\mc T\otimes \mc T)[\mf s]\mf s]\geq 0
\end{align}
for any unital CP-map $\mc T$ whose dual $\mc T^*$ has fixed-point $\Lambda$. 

The following theorem provides two bounds for $\hat t(\mc T)$ in the case $\Lambda>0$. The first is the second main result stated in the introduction. The second provides a more involved bound, which replaces the sum over singular values of $\mc T$ with properties of the fixed point and the expansion coefficient. In the general case, when $\mc T^*$ does not have a full-rank fixed point, we see from Lemma~\ref{lemma:kinfty} that we can simply insert  $\mc T=\mc T_1$, $\Lambda=\Lambda_{(1,1)}$ and $D=\dim(\mc H_1)$ to obtain the lower-bound on $\lim_{k\rightarrow\infty} s_\alpha^{k}$. 
\begin{theorem}\label{thm:expansion-that}
	Let $\mc T$ be a primitive, unital CP-map on a $D$-dimensional Hilbert-space. Let $\Lambda>0$ be the fixed-point of $\mc T^*$, with smallest eigenvalue $\lambda_{\min{}}$.  Furthermore, let $d$ be the Kraus-rank of $\mc T$. Then
	\begin{align}\label{eq:boundthat1}
		\frac{1}{d}\leq \hat t(\mc T) &\leq 1 - \lambda_{\min{}}^2\Big[D^2 - \sum_i s_i(\mc T)^2\Big]
	\end{align}
	and
	\begin{align}\label{eq:boundthat2}
		\frac{1}{d}\leq \hat t(\mc T) &\leq 1 - \lambda_{\min{}}^2D^2\Big[1- \e^{-S_2(\Lambda)}/D \nonumber \\
		&\quad  -\kappa(\mc T)\big[\kappa(\mc T) + 2(\e^{-S_2(\Lambda)}/D- 1/D^2) \big]\Big].
	\end{align}
	The bound~\eqref{eq:boundthat1} is trivial if and only if $D=1$. 
\end{theorem}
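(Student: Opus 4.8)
The plan is to prove the three inequalities in turn, working in the Kraus/Liouville picture. Fix a minimal Kraus decomposition $\mc T[\,\cdot\,]=\sum_{a=1}^{d}K_a\,\cdot\,K_a^{\dagger}$, so that unitality reads $\sum_a K_a K_a^{\dagger}=\1$, and recall that $\Lambda$ is normalized to $\Tr\Lambda=1$. For the lower bound I would first use $\mf s(X\otimes Y)=(Y\otimes X)\mf s$ and $\mf s^2=\1$ to rewrite $(\mc T\otimes\mc T)[\mf s]\,\mf s=\sum_{a,b}K_aK_b^{\dagger}\otimes K_bK_a^{\dagger}$, whence $\hat t(\mc T)=\sum_{a,b}\Tr[\Lambda K_aK_b^{\dagger}]\,\Tr[\Lambda K_bK_a^{\dagger}]=\Tr[M^2]$ with $M_{ab}:=\Tr[\Lambda K_aK_b^{\dagger}]$. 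A short calculation shows $M$ is a $d\times d$ positive semidefinite matrix (testing against a vector $v$ gives $\Tr[\Lambda NN^{\dagger}]\ge 0$ with $N=\sum_a\bar v_a K_a$) with $\Tr M=\Tr[\Lambda\,\mc T[\1]]=\Tr\Lambda=1$; hence $M$ is a density matrix on $\mathbb C^d$ and $\hat t(\mc T)=\Tr[M^2]\in[1/d,\,1]$, giving both the lower bound and $\hat t(\mc T)\le 1$.

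For the first upper bound \eqref{eq:boundthat1}, set $A:=(\mc T\otimes\mc T)[\mf s]\,\mf s$. By Lemma~\ref{lemma:symmetry} $A$ is Hermitian, by Lemma~\ref{lemma:sEsnorm} $\norm A\le 1$, and by Lemma~\ref{lemma:swap-singular-values} $\Tr A=\sum_i s_i(\mc T)^2$. Since $\1-A\ge 0$ and $\Lambda\otimes\Lambda\ge\lambda_{\min{}}^2\,\1$, one obtains $1-\hat t(\mc T)=\Tr[(\Lambda\otimes\Lambda)(\1-A)]\ge\lambda_{\min{}}^2\Tr[\1-A]=\lambda_{\min{}}^2\bigl(D^2-\sum_i s_i(\mc T)^2\bigr)$, which rearranges to \eqref{eq:boundthat1}. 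Because $\lambda_{\min{}}>0$, this bound equals $1$ exactly when $\sum_i s_i(\mc T)^2=D^2$; since $\mc T$ is primitive and unital, $\1$ is its unique fixed point, so the second part of Lemma~\ref{lemma:swap-singular-values} forces $D=1$, and conversely $D=1$ makes everything trivial.

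For \eqref{eq:boundthat2} it suffices to upper bound $\sum_i s_i(\mc T)^2=\norm{\mc T}_2^2$ (the Hilbert--Schmidt norm of $\mc T$ as a superoperator) and feed the result into \eqref{eq:boundthat1}. Decompose $\mc T=\mc P_\Lambda+\mc R$ with $\mc P_\Lambda[X]=\1\Tr[\Lambda X]$ and $\mc R:=\mc T-\mc P_\Lambda$, so that $\norm{\mc R}=\kappa(\mc T)$ by the characterization of the expansion coefficient recalled in the introduction. Expanding the square, $\norm{\mc T}_2^2=\norm{\mc P_\Lambda}_2^2+2\Tr[\Lambda\,\mc R^{\dagger}[\1]]+\norm{\mc R}_2^2$, where the cross term is obtained by a short computation in an operator basis and is real because $\mc R^{\dagger}$ preserves Hermiticity. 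One has $\norm{\mc P_\Lambda}_2^2=\Tr[\1]\Tr[\Lambda^2]=D\,\e^{-S_2(\Lambda)}$ and $\norm{\mc R}_2^2\le D^2\kappa(\mc T)^2$ (a $D^2\times D^2$ matrix of operator norm $\le\kappa(\mc T)$). For the cross term, the key facts are $\mc R[\1]=\mc T[\1]-\1\Tr\Lambda=0$ (so $\Tr[\mc R^{\dagger}[\1]]=0$) and $\mc R^{\dagger}[\Lambda]=\mc T^{*}[\Lambda]-\Lambda\Tr\Lambda=0$ (so $\mc R^{\dagger}[\1]=\mc R^{\dagger}[\1-D\Lambda]$); these let me write $\Tr[\Lambda\,\mc R^{\dagger}[\1]]=\Tr[(\Lambda-\tfrac1D\1)\,\mc R^{\dagger}[\1-D\Lambda]]$, and Cauchy--Schwarz together with $\norm{\1-D\Lambda}_2=D\norm{\Lambda-\tfrac1D\1}_2$ and $\norm{\Lambda-\tfrac1D\1}_2^2=\e^{-S_2(\Lambda)}-1/D$ gives $|\Tr[\Lambda\,\mc R^{\dagger}[\1]]|\le\kappa(\mc T)\,D\,\norm{\Lambda-\tfrac1D\1}_2^2=\kappa(\mc T)\bigl(D\,\e^{-S_2(\Lambda)}-1\bigr)$. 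Adding the three contributions, $\sum_i s_i(\mc T)^2\le D\e^{-S_2(\Lambda)}+2\kappa(\mc T)\bigl(D\e^{-S_2(\Lambda)}-1\bigr)+D^2\kappa(\mc T)^2$, and substituting into \eqref{eq:boundthat1} reproduces exactly \eqref{eq:boundthat2}.

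The main obstacle is precisely this cross-term estimate: the naive Cauchy--Schwarz bound on $\Tr[\mc R[\Lambda]]$ against $\1$ only yields the weaker factor $\kappa(\mc T)\sqrt{D\e^{-S_2(\Lambda)}-1}$, and the sharp factor $\kappa(\mc T)(D\e^{-S_2(\Lambda)}-1)$ emerges only once one exploits that $\mc R$ annihilates $\1$ while $\mc R^{\dagger}$ annihilates the $\mc T^{*}$-fixed point $\Lambda$, which makes both Cauchy--Schwarz factors proportional to $\norm{\Lambda-\1/D}_2$. Everything else is bookkeeping with superoperator Hilbert--Schmidt norms and the swap-operator Lemmas already established.
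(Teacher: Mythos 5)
Your proof is correct, and in two of the three parts it takes a genuinely different route from the paper. For the lower bound $1/d\leq \hat t(\mc T)$ the paper argues indirectly: it builds an MPS with physical dimension $d$ from a minimal Kraus decomposition and invokes the fact that the R\'enyi-2 entropy density of every $k$-th spin is at most $\log(d)$ while converging to $-\log(\hat t(\mc T))$ as $k\to\infty$. Your Gram-matrix argument --- $\hat t(\mc T)=\Tr[M^2]$ with $M_{ab}=\Tr[\Lambda K_aK_b^\dagger]$ a $d\times d$ density matrix --- is purely algebraic, self-contained, gives $\hat t(\mc T)\leq 1$ for free, and even identifies the equality case $M=\1/d$; the paper's route buys a physical interpretation but is logically heavier. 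For \eqref{eq:boundthat1} your argument ($\1-A\geq 0$, $\Lambda\otimes\Lambda\geq\lambda_{\min{}}^2\1$, trace of a product of positive operators is nonnegative) is essentially the paper's, which phrases the same step via Lemma~\ref{lemma:simplebound} and the convex decomposition $\Lambda=(1-x)\1/D+x\,\mathbbm B[\Lambda]$. For \eqref{eq:boundthat2} the paper expands $\hat t(\mc T)$ itself term by term using $\mc T=\mc P_\Lambda+\mc Q$, with a fair amount of bookkeeping involving $\Tr[X\mc Q[X]]$ and $\Tr[\tilde X\mf s\mc E[\mf s]]$; you instead factor the whole argument through \eqref{eq:boundthat1} by bounding $\sum_i s_i(\mc T)^2=\norm{\mc T}_2^2$ via the same decomposition in the superoperator Hilbert--Schmidt norm. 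I checked the arithmetic: your estimates $\norm{\mc P_\Lambda}_2^2=D\e^{-S_2(\Lambda)}$, $\norm{\mc Q}_2^2\leq D^2\kappa(\mc T)^2$, and the cross term bounded by $2\kappa(\mc T)(D\e^{-S_2(\Lambda)}-1)$ --- which indeed requires using both $\mc Q[\1]=0$ and $\mc Q^*[\Lambda]=0$ to insert the traceless projections before Cauchy--Schwarz, exactly as you say --- reproduce \eqref{eq:boundthat2} verbatim after substitution. Your modular presentation makes transparent something the paper leaves implicit, namely that the second bound is nothing but the first with $\sum_i s_i(\mc T)^2$ replaced by an upper estimate in terms of $\kappa(\mc T)$ and $S_2(\Lambda)$.
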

The first bound \eqref{eq:boundthat1} makes more detailed use of the singular value distribution of $\mc T$. 
The second bound \eqref{eq:boundthat2} has the advantage of only depending on the expansion coefficient of $\mc T$, but requires more knowledge about the fixed-point $\Lambda$.
\begin{corollary}
	If $\Lambda$ is maximally mixed ($\mc T$ trace-preserving), then
	\begin{align}
		\frac{1}{d}\leq \hat t(\mc T) &\leq  \frac{\sum_i s_i(\mc T)^2}{D^2}\leq \frac{1}{D^2} + \kappa(\mc T)^2. 
	\end{align}
\begin{proof}
	The first bound follows from $\lambda_{\min{}}=1/D$. For the second bound, notice that $s_1(\mc T)=1$ \cite{PerezGarcia2006}, since $\mc T$ is unital and trace-preserving and $\kappa(\mc T) = s_2(\mc T)$. We then have
	\begin{align}
		1- \lambda_{\min{}}^2 \left[D^2 - \sum_i s_i^2(\mc T)^2\right] &= \frac{\sum_i s_i(\mc T)^2}{D^2} \leq \frac{1}{D^2} + s_2(\mc T)^2 \nonumber \\
		&=\frac{1}{D^2} + \kappa(\mc T)^2. 
	\end{align}
\end{proof}
\end{corollary}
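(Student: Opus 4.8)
The plan is to obtain this corollary as the $\Lambda=\1/D$ (trace‑preserving) case of Theorem~\ref{thm:expansion-that}, so I would first prove the general two‑sided bound on $\hat t(\mc T)=\Tr[(\Lambda\otimes\Lambda)(\mc T\otimes\mc T)[\mf s]\mf s]$ and then specialise. Throughout I would fix a minimal Kraus decomposition $\mc T[X]=\sum_{j=1}^{d}K_jXK_j^\dagger$, so that unitality reads $\sum_jK_jK_j^\dagger=\1$, and use the identity $(\mc T\otimes\mc T)[\mf s]\mf s=\sum_{j,l}K_jK_l^\dagger\otimes K_lK_j^\dagger$, which follows from $(K_j\otimes K_l)\mf s=\mf s(K_l\otimes K_j)$ together with $\mf s(A\otimes B)\mf s=B\otimes A$.

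For the \emph{lower bound}, contracting this identity with $\Lambda\otimes\Lambda$ gives $\hat t(\mc T)=\sum_{j,l}|\Tr[\Lambda K_jK_l^\dagger]|^2=\Tr[G^2]$, where the $d\times d$ matrix $G_{jl}=\Tr[\Lambda K_jK_l^\dagger]$ is positive semidefinite (since $\sum_{j,l}\bar c_jc_lG_{jl}=\Tr[\Lambda MM^\dagger]\geq0$ with $M=\sum_j\bar c_jK_j$) and has $\Tr G=\Tr[\Lambda\,\mc T[\1]]=\Tr\Lambda=1$; hence $G$ is a density matrix on $\mathbb C^d$ and $\hat t(\mc T)=\Tr[G^2]\in[1/d,1]$, giving $\hat t(\mc T)\geq1/d$. (One also sees $\hat t(\mc T)\ge1/d$ physically: the translational‑invariant MPS with tensor $A^i=K_i$ has physical dimension $d$, is in canonical form with one block, bond dimension $D$ and fixed point $\Lambda$, so by Proposition~\ref{proposition:renyil} and Lemma~\ref{lemma:kinfty} its entropy densities satisfy $s_2^{(k)}=-\log\hat t_1^{(k)}\leq\log d$ and tend to $-\log\hat t(\mc T)$.) For bound \eqref{eq:boundthat1} I would write $\hat t(\mc T)=1-\Tr[(\Lambda\otimes\Lambda)(\1-M)]$ with $M:=(\mc T\otimes\mc T)[\mf s]\mf s$; by Lemma~\ref{lemma:symmetry} (and the remark after it) $M$ is Hermitian and by Lemma~\ref{lemma:sEsnorm} $\norm M\leq1$, so $\1-M\geq0$, and combining $\Lambda\otimes\Lambda\geq\lambda_{\min}^2\1$ with $\Tr M=\Tr[\mf s\,(\mc T\otimes\mc T)[\mf s]]=\sum_i s_i(\mc T)^2$ (Lemma~\ref{lemma:swap-singular-values}) gives \eqref{eq:boundthat1}; it is trivial exactly when $\sum_is_i(\mc T)^2=D^2$, which by Lemma~\ref{lemma:swap-singular-values} forces $D=1$ (as $\1$ is the unique fixed point of the primitive, unital $\mc T$), and conversely $D=1$ is trivial.

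For bound \eqref{eq:boundthat2} I would decompose $\mc T=\mc P_\Lambda+\mc R$ with $\norm{\mc R}=\kappa(\mc T)$, using $\mc R[\1]=\mc T[\1]-\1\Tr\Lambda=0$ and $\Tr[\mc R^*[\1]]=\Tr[\mc T^*[\1]]-D\Tr\Lambda=0$ (the latter since $\mc T^*$ is trace‑preserving). Expanding $\Tr M=\Tr[\mf s\,(\mc P_\Lambda+\mc R)^{\otimes2}[\mf s]]$, the $\mc P_\Lambda\otimes\mc P_\Lambda$ term equals $D\e^{-S_2(\Lambda)}$ (from $\Tr\mf s=D$ and $\Tr[(\Lambda\otimes\Lambda)\mf s]=\Tr\Lambda^2$), the $\mc R\otimes\mc R$ term is at most $\norm{\mf s}_2^2\norm{\mc R}^2=D^2\kappa(\mc T)^2$, and each of the two cross terms equals $\Tr[\mc R[\Lambda]]=\Tr[\mc R^*[\1]\,\Lambda_\perp]$ with $\Lambda_\perp:=\Lambda-\1/D$, $\norm{\Lambda_\perp}_2^2=\e^{-S_2(\Lambda)}-1/D$; estimating the cross terms to at most $2\kappa(\mc T)(D\e^{-S_2(\Lambda)}-1)$ and substituting into \eqref{eq:boundthat1} yields \eqref{eq:boundthat2}. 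The corollary then drops out: for $\Lambda=\1/D$ one has $\lambda_{\min}=1/D$, so \eqref{eq:boundthat1} becomes $\hat t(\mc T)\leq\sum_is_i(\mc T)^2/D^2$; and since a unital, trace‑preserving $\mc T$ has $s_1(\mc T)=1$ \cite{PerezGarcia2006} and $\kappa(\mc T)=s_2(\mc T)$ (the vectorised identity is simultaneously a left and a right singular vector of $\mc T$ with singular value $1$, so $\mc P_{\1/D}$ subtracts exactly that singular direction), one gets $\sum_is_i(\mc T)^2\leq1+(D^2-1)s_2(\mc T)^2$ and hence $\sum_is_i(\mc T)^2/D^2\leq1/D^2+\kappa(\mc T)^2$.

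I expect the cross‑term estimate feeding \eqref{eq:boundthat2} to be the main obstacle: the naive Cauchy--Schwarz bound $\Tr[\mc R^*[\1]\,\Lambda_\perp]\leq\norm{\mc R^*[\1]}_2\norm{\Lambda_\perp}_2\leq\kappa(\mc T)\sqrt{D\e^{-S_2(\Lambda)}-1}$ is weaker than the claimed $\kappa(\mc T)(D\e^{-S_2(\Lambda)}-1)$ in the regime $D\e^{-S_2(\Lambda)}<2$, so reaching the stated constant requires a sharper argument that exploits $\mc R[\1]=0$ and that $\mc R$ and $\mc R^*[\1]$ both live on the traceless subspace. Everything else is either a direct computation with $\mf s$ and $\Lambda$ or a routine application of the Schwarz‑type lemmas already established.
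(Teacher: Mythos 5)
Your proof of the corollary proper is correct and, in its final step, identical to the paper's: specialize the first bound of Theorem~\ref{thm:expansion-that} to $\lambda_{\min{}}=1/D$, then use $s_1(\mc T)=1$ and $\kappa(\mc T)=s_2(\mc T)$ to get $\sum_i s_i(\mc T)^2\leq 1+(D^2-1)s_2(\mc T)^2$. Where you genuinely diverge is in the lower bound $1/d\leq\hat t(\mc T)$: the paper obtains it indirectly, by building an MPS with physical dimension $d$ from a minimal Kraus decomposition and invoking the entropy-density bound $s_2^{(k)}\leq\log d$ together with Lemma~\ref{lemma:kinfty}, whereas you compute $\hat t(\mc T)=\Tr[G^2]$ for the Gram matrix $G_{jl}=\Tr[\Lambda K_jK_l^\dagger]$, which is a density matrix on $\mathbb C^d$ by positivity and unitality. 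Your route is more elementary and self-contained (no detour through entropy densities), and it yields the extra information $\hat t(\mc T)\leq 1$ and an explicit characterization of when $1/d$ is attained ($G$ maximally mixed); the paper's route has the advantage of making the physical origin of the bound transparent. Your derivation of \eqref{eq:boundthat1} via $\1-M\geq 0$ and $\Lambda\otimes\Lambda\geq\lambda_{\min{}}^2\1$ (with $M=(\mc T\otimes\mc T)[\mf s]\mf s$ Hermitian by Lemma~\ref{lemma:symmetry} and $\norm{M}\leq 1$ by Lemma~\ref{lemma:sEsnorm}) is substantively the same as the paper's use of the decomposition $\Lambda\otimes\Lambda=\1/D^2+\tilde X$, just packaged more directly. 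Finally, your worry about the cross-term estimate feeding \eqref{eq:boundthat2} is moot for this corollary: only \eqref{eq:boundthat1} is used here, so nothing in the corollary depends on resolving it.
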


Before we get to the details of the proofs of the bounds in theorem~\ref{thm:expansion-that}, let us introduce some notation. 
We define $\mc Q = \mc T - \mc P_\Lambda$ and write  $\mc E \equiv \mc T\otimes \mc T$, $\mc R \equiv \mc E -\mc P_\Lambda\otimes \mc P_\Lambda$, and $\tilde Q:=\mc Q\otimes \mc Q$.
With this notation, the expansion coefficient is simply given by
\begin{align}
	\kappa(\mc T) = \sup_{X:\norm{X}_2=1}\norm{\mc Q[X]}_2 = \norm{\mc Q}=s_1(\mc Q)
\end{align}
and $s_1(\tilde{\mc Q}) = \kappa(\mc T)^2$. 
Apart from this, we have little information about the given maps. However we know that $\mc P_\Lambda[\1]=\1, \mc P^*_\Lambda[\Lambda]=\Lambda$ and $\mc R[\1]=0, \mc R^*[\Lambda\otimes \Lambda]=0$, since $\mc Q[\1]=0$ and $\mc Q^*[\Lambda]=0$. We will also make use of the following Lemmata:
\begin{lemma}\label{lemma:simplebound}
	Let $A,B$ be two hermitian operators with $A\geq a_0\1$. Then
	\begin{align}
	\Tr[AB] \leq \norm{B} \Tr[A-a_0\1] + a_0\tr[B].
	\end{align}
	\begin{proof}
	By assumption, $A-a_0\1$ is a positive semi-definite operator. Similarly, $\norm{B}\1-B\geq 0$. Hence $\Tr[(A-a_0\1)(\norm{B}\1 - B)]\geq 0$. 
	\end{proof}
\end{lemma}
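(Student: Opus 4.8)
The plan is to reduce the claim to the elementary fact that the trace of a product of two positive semi-definite operators is non-negative. The hypothesis $A\geq a_0\1$ is, by definition, precisely the statement that $A-a_0\1\geq 0$. The second ingredient I would extract from the hermiticity of $B$: since $B$ is hermitian, all of its eigenvalues are real, and each is bounded above by its largest eigenvalue, which in turn is at most the operator norm $\norm{B}$. Hence $\norm{B}\1 - B\geq 0$ as well. Thus both $A-a_0\1$ and $\norm{B}\1-B$ are positive semi-definite.

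Next I would invoke the standard fact that $\Tr[PQ]\geq 0$ whenever $P,Q\geq 0$; concretely one can write $P=C^\dagger C$ and use cyclicity of the trace to get $\Tr[PQ]=\Tr[C Q C^\dagger]\geq 0$, since $CQC^\dagger\geq 0$. Applying this with $P=A-a_0\1$ and $Q=\norm{B}\1-B$ gives
\begin{align}
0\leq \Tr\big[(A-a_0\1)(\norm{B}\1 - B)\big] = \norm{B}\,\Tr[A-a_0\1] - \Tr\big[(A-a_0\1)B\big].
\end{align}
Rearranging yields $\Tr[(A-a_0\1)B]\leq \norm{B}\Tr[A-a_0\1]$.

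Finally I would recover the stated inequality by adding back the removed multiple of the identity: by linearity of the trace, $\Tr[AB]=\Tr[(A-a_0\1)B]+a_0\Tr[B]$, and combining this with the bound just obtained gives $\Tr[AB]\leq \norm{B}\Tr[A-a_0\1]+a_0\tr[B]$, as claimed.

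There is no genuine obstacle in this argument; it is a routine manipulation. The only two points that warrant explicit care are (i) justifying $\norm{B}\1-B\geq 0$, which uses that $B$ is hermitian so that its spectrum is real and controlled by $\norm{B}$ (this would fail for a general operator whose numerical range is not contained in $(-\infty,\norm{B}]$), and (ii) the non-negativity of the trace of a product of two positive semi-definite operators, which I would state with its one-line square-root-and-cyclicity justification.
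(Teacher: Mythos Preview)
Your proof is correct and follows exactly the paper's approach: both observe that $A-a_0\1\geq 0$ and $\norm{B}\1-B\geq 0$, then use non-negativity of $\Tr[(A-a_0\1)(\norm{B}\1-B)]$ and expand. You simply spell out the algebraic rearrangement and the justification for $\Tr[PQ]\geq 0$ that the paper leaves implicit.
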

\begin{lemma}\label{lemma:PQ} Let $\mc S$ be a super-operator. Then
	$\mc P_\Lambda\otimes \mc S[\mf s]= \1\otimes \mc S[\Lambda].$
	\begin{proof}
		Let $\Lambda= \sum_i q_i \proj{i}$ be the spectral decomposition of $\Lambda$ and write $\mf s = \sum_{i,j} \ketbra{i}{j}\otimes\ketbra{j}{i}$.
		Then
		\begin{align}
			\mc P_\Lambda\otimes \mc S[\mf s]= \sum_{i,j} \1\otimes \mc S[\ketbra{j}{i}] \Tr[\Lambda\ketbra{i}{j}] = \1\otimes \mc S[\Lambda].\nonumber
		\end{align}
	\end{proof}
\end{lemma}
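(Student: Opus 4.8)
The plan is to reduce the identity to a direct computation by expanding the swap operator $\mf s$ in an orthonormal basis and then exploiting linearity of both super-operators together with the especially simple action of $\mc P_\Lambda$. First I would fix an orthonormal basis $\{\ket{i}\}$ of $\mc H$ and write $\mf s = \sum_{i,j}\ketbra{i}{j}\otimes\ketbra{j}{i}$, which one verifies by evaluating on basis vectors: $\big(\sum_{i,j}\ketbra{i}{j}\otimes\ketbra{j}{i}\big)\ket{a}\otimes\ket{b} = \ket{b}\otimes\ket{a}$. Applying $\mc P_\Lambda\otimes\mc S$ and using linearity factor-wise then gives $\sum_{i,j}\mc P_\Lambda[\ketbra{i}{j}]\otimes\mc S[\ketbra{j}{i}]$.

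The key step is that $\mc P_\Lambda$ collapses its argument to a scalar multiple of the identity: $\mc P_\Lambda[\ketbra{i}{j}]=\1\,\Tr[\Lambda\ketbra{i}{j}]=\1\,\bra{j}\Lambda\ket{i}$. Since $\bra{j}\Lambda\ket{i}$ is just a number, I can pull the $\1$ out into the first tensor slot and $\mc S$ out of the sum in the second, leaving $\1\otimes\mc S\big[\sum_{i,j}\bra{j}\Lambda\ket{i}\,\ketbra{j}{i}\big]$. The remaining task is to recognize that the operator $\sum_{i,j}\bra{j}\Lambda\ket{i}\,\ketbra{j}{i}$ is nothing but $\Lambda$ itself: relabeling the dummy indices, this is precisely the expansion $\Lambda=\sum_{a,b}\bra{a}\Lambda\ket{b}\,\ketbra{a}{b}$ of $\Lambda$ in the chosen basis. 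Substituting this yields $\1\otimes\mc S[\Lambda]$, as claimed.

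There is essentially no serious obstacle here; the statement holds for an arbitrary super-operator $\mc S$ with no positivity or trace-preservation hypotheses, so the entire content is bookkeeping. The only point requiring minor care is tracking index placement in $\mf s$ and in the trace $\Tr[\Lambda\ketbra{i}{j}]=\bra{j}\Lambda\ket{i}$, to confirm that the index reversal induced by the swap exactly compensates the one produced by the trace, so that $\Lambda$ is reconstructed rather than $\Lambda^T$ or its conjugate. Choosing the eigenbasis of $\Lambda$, as in the author's proof, makes $\bra{j}\Lambda\ket{i}=q_i\delta_{ij}$ diagonal and sidesteps this subtlety at the cost of slightly obscuring that the identity is in fact basis-independent.
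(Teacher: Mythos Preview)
Your proof is correct and follows essentially the same approach as the paper: expand $\mf s$ in a basis, apply linearity, and use $\mc P_\Lambda[X]=\1\,\Tr[\Lambda X]$ to collapse the first tensor factor. The only difference is cosmetic---you work in an arbitrary orthonormal basis rather than the eigenbasis of $\Lambda$, which as you already note makes the index bookkeeping marginally more delicate but exhibits basis-independence directly.
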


For both bounds, we will furthermore decompose $\Lambda$ into a convex combination of the maximally mixed state state and a state $\mathbbm B[\Lambda]$ on the boundary of the convex set of $D$-dimensional quantum states (see Fig.~\ref{fig:boundary-state}):
\begin{align}
	\Lambda = (1-x) \frac{\1}{D}  + x \mathbbm B[\Lambda].
\end{align}
It follows that $0\leq x < 1$, since $x=1$ if and only if $\mathbbm B[\Lambda]=\Lambda$, but the rank of $\mathbbm B[\Lambda]$ is strictly smaller than that of $\Lambda$.
A different way to express $x$ is through the minimum eigenvalue of $\Lambda$, as $x = 1-\lambda_{\min{}} D$.
\begin{figure}\centering
	\includegraphics[width=5cm]{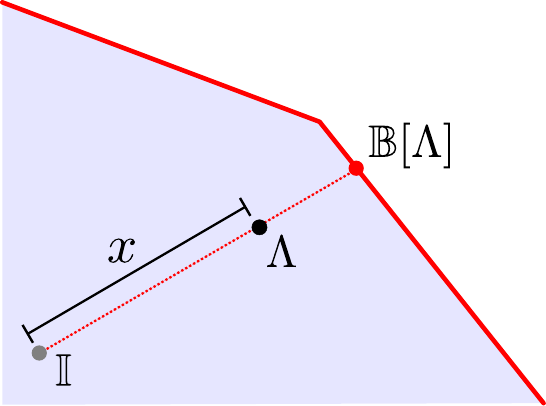}
	\caption{Convex decomposition of the fixed-point $\Lambda$ into the maximally mixed state $\mathbbm{I}=\1/D$ and a state on the boundary of the $D$-dimensional state-space.}
	\label{fig:boundary-state}
\end{figure}
	It is useful to write $X:= \Lambda-\1/D$, which leads to $\Lambda = \1/D + x X$ and $\Tr[X]=0$ as well as $x^2\Tr[X^2]=\Tr[\Lambda^2]-1/D$. 
	
	Similarly, it will be useful to introduce $\tilde X = \Lambda\otimes \Lambda - \1/D^2$ so that $\Lambda\otimes \Lambda = \1/D^2 + \tilde X$. Then $\Tr[\tilde X]=0$ as well as
	\begin{align}
	\Tr[\tilde X^2] = \Tr[\Lambda^2]^2 - 1/D^2
	\end{align}
	and
	\begin{align}\label{eq:lowerboundtildeX}
	\tilde X \geq - \frac{x(2-x)}{D^2}\1.
	\end{align}

	With these preparations, we can now give the proof of the theorem. 
\begin{proof}[Proof of the lower bound $1/d\leq \hat t(\mc T)$] 
	Let $\mc T[\cdot]= \sum_{i=1}^d A_i \cdot A_i^\dagger$ be a minimal Kraus decomposition of $\mc T$ and use the $A_i$ to construct an MPS with physical dimension $d$. Then the R\'enyi-2 entanglement density of every $k$-th spin in the thermodynamic limit is upper bounded by $\log(d)$ and approaches $-\log(\Tr[\Lambda\otimes\Lambda \mf s \mc T\otimes \mc T[\mf s]])$ in the limit $k\rightarrow \infty$ by Lemma~\ref{lemma:kinfty}. Hence $\log(d) \geq  - \log(\hat t(\mc T))$, which proves the inequality. 
\end{proof}

\begin{proof}[Proof of upper bound in \eqref{eq:boundthat1}]
	We have
	\begin{align}
		\hat t(\mc T) &= \tr[\mf s\mc T\otimes \mc T[\mf s]]/D^2 + \tr[\tilde X\mf s\mc T\otimes \mc T[\mf s]]\\
		&= \frac{\sum s_i(\mc T)^2}{D^2} +  \tr[\tilde X\mf s\mc T\otimes \mc T[\mf s]],
	\end{align}
	where we used Lemma~\ref{lemma:swap-singular-values}. Using \eqref{eq:lowerboundtildeX}, $\tr[\tilde X]=0$, $\norm{\mf s\mc T\otimes \mc T[\mf s]}\leq 1$, and Lemma~\ref{lemma:simplebound} we then find
	\begin{align}
		\tr[\tilde X\mf s\mc T\otimes \mc T[\mf s]] &\leq \frac{x(2-x)}{D^2}\left(\tr[ \1 ] - \tr[\mf s\mc T\otimes \mc T[\mf s]]\right)\nonumber\\
		&=\frac{x(2-x)}{D^2}\left(D^2 - \sum_i s_i(\mc T)^2\right).\nonumber
	\end{align}
	Using $x(2-x) = 1 -(1-x)^2$ and $(1-x)^2/D^2 = \lambda_{\min{}}^2$ then yields the desired bound.  
\end{proof}
	\begin{proof}[Proof of upper bound in \eqref{eq:boundthat2}] 
	We first decompose the map $\mc T$ as $\mc T=\mc P_\Lambda+\mc Q$. Using $\Tr[\Lambda^2]=\Tr[\Lambda\otimes\Lambda\mf s]$, one then finds
	\begin{align}
		\hat t(\mc T) &= \Tr[\Lambda^2]^2 + 2\Tr[\Lambda\otimes \Lambda \mc P_\Lambda \otimes \mc Q[\mf s]\mf s] + \Tr[\Lambda\otimes\Lambda\mf s\tilde{\mc Q}[\mf s]]\nonumber \\
		     &= \Tr[\Lambda^2]^2 + 2\Tr[\Lambda^2 \mc Q[\Lambda]] + \Tr[\Lambda\otimes\Lambda \mf s\tilde{\mc Q}[\mf s]], \label{eq:firstboundthat}
	\end{align}
	where we used Lemma~\ref{lemma:PQ} to write 
	\begin{align}
		\Tr[\Lambda\otimes \Lambda \mc P_\Lambda \otimes \mc Q[\mf s]\mf s] &= \Tr[(\Lambda\otimes \Lambda)(\1 \otimes \mc Q[\Lambda])\mf s]\nonumber\\ 
		&=\Tr[\Lambda^2 \mc Q[\Lambda]].\nonumber
	\end{align}
	We now first reformulate the second term in~\eqref{eq:firstboundthat}.
	Using the decomposition $\Lambda=\1/D+xX$ together with $\mc Q[\1]=0=\mc Q^*[\Lambda]$, we find
	\begin{align}
		\Tr[\Lambda^2\mc Q[\Lambda]]&= x^2\Tr[X\mc Q[X]]/D + x^3\Tr[X^2\mc Q[X]]\nonumber \\
		&= x^2\Tr[X\mc Q[X]]/D + x^3\Tr[X^2\mc T[X]] \nonumber\\
		&\quad- x^3\Tr[X^2]\Tr[\Lambda X]\nonumber\\
		&= x^2\Tr[X\mc Q[X]]/D + x^3\Tr[X^2\mc T[X]]\nonumber\\
		&\quad-(\Tr[\Lambda^2]-1/D)^2.\nonumber
	\end{align}
	Here, we used $\mc Q = \mc T-\mc P_\Lambda$ and (using $\tr[X]=0$)
	\begin{align}
		x\, \Tr[\Lambda X] &= x\Tr[(\1/D+xX)X] = x^2\,\Tr[X^2]\nonumber \\
		&=\Tr[\Lambda^2]-1/D.
	\end{align}
	Let us now rewrite the term $\Tr[\Lambda\otimes \Lambda \mf s\tilde{\mc Q}[\mf s]]$  using $\tilde X$:
	\begin{align}
		\Tr[\Lambda\otimes\Lambda \mf s\tilde{\mc Q}[\mf s]]
		&=	\Tr[\mf s\tilde{\mc Q}[\mf s]]/D^2+	\Tr[\tilde X \mf s\tilde{\mc Q}[\mf s]]\nonumber\\
		&=	\Tr[\mf s\tilde{\mc Q}[\mf s]]/D^2+	\Tr[\tilde X \mf s{\mc E}[\mf s]]\nonumber\\ 
		&\quad -2 \Tr[\tilde X \mf s (\mc T[\Lambda]\otimes \1)]+	\Tr[\tilde X \mf s]\Tr[\Lambda^2]\nonumber,
	\end{align}
	where we used $\mc T\otimes \mc P_\Lambda[\mf s]=\mc T[\Lambda]\otimes \1$ and $\mc P_\Lambda\otimes \mc P_\Lambda[\mf s] = \1\Tr[\Lambda^2]$.
	Using $\tilde X = (\1/D + xX)^{\otimes 2} - \1/D^2$ and $\Lambda = \1/D+xX$ repeatedly, we find
	\begin{align}
		\Tr[\tilde X \mf s (\mc T[\Lambda]\otimes \1)]&= (\Tr[\Lambda^2]-1/D)/D + x^3\Tr[X^2\mc T[X]]\nonumber \\
		&\quad+2x^2\Tr[X \mc Q[X]]/D.
	\end{align}
	Inserting the derived expression into  \eqref{eq:firstboundthat}, we obtain
	\begin{multline}
		\hat t(\mc T) = \Tr[\Lambda^2]/D + \Tr[\mf s\tilde{\mc Q}[\mf s]]/D^2 + \Tr[\tilde X\mf s\mc E[\mf s]] \\
		- 2x^2\tr[X\mc Q[X]]/D~.\label{eq:secondexpthat}
	\end{multline}
	We can now use Lemma~\ref{lemma:simplebound} and Lemma~\ref{lemma:sEsnorm} together with \eqref{eq:lowerboundtildeX} to find
	\begin{align}\label{eq:boundtildeXsEs}
	\Tr[\tilde X\mf s\mc E[\mf s]]\leq x(2-x)(1- \Tr[\mf s\mc E[\mf s]]/D^2).
	\end{align}
	Let us now discuss the term 
	\begin{align}
		\Tr[\mf s\mc E[\mf s]]/D^2 = \frac{\sum_i s_i(\mc T)^2}{D^2} \leq 1, 
	\end{align}
	where we used Lemma~\ref{lemma:swap-singular-values}. 
	We can again decompose $\mc E$ to get
	\begin{align}
		\Tr[\mf s\mc E[\mf s]]/D^2 =	\Tr[\mf s\tilde{\mc Q}[\mf s]]/D^2 +2\Tr[\mc T[\Lambda]]/D^2 - \Tr[\Lambda^2]/D.\nonumber   
	\end{align}
	Decomposing $\Lambda=\1/D + xX$ and using $\mc Q^*[\Lambda]=0$ then yields
	\begin{align}
	\tr[\mc T[\Lambda]]/D^2 = \Tr[\Lambda^2]/D-x^2 \Tr[X\mc Q[X]]/D. 
	\end{align}
	Plugging the result together with \eqref{eq:boundtildeXsEs} into \eqref{eq:secondexpthat} finally gives
	\begin{align}
	\hat t(\mc T) \leq 1 - (1-x)^2&\left[1-\Tr[\Lambda^2]/D+2x^2\tr[X\mc Q[X]]/D\right.\nonumber\\
		&\quad\quad\quad\left.-\Tr[\mf s\tilde{\mc Q}[\mf s]]/D^2\right]. \label{eq:thirdboundthat}
	\end{align}
	We now make use of the expansion coefficient to bound
	\begin{align}
		x^2\tr[X\mc Q[X]] &\leq x^2\Tr[X^2]\kappa(\mc T)=(\Tr[\Lambda^2]-1/D)\kappa(\mc T),\nonumber \\
		\tr[\mf s\tilde{\mc Q}[\mf s]]/D^2 &\leq \frac{\Tr[\mf s^2]}{D^2}\kappa(\mc T)^2.\nonumber
	\end{align}
	Plugging these bounds into~\eqref{eq:thirdboundthat} and using $1-x = \lambda_{\min{}}D$ together with $\Tr[\Lambda^2]=\exp(-S_2(\Lambda))$, we obtain the final expression:
	\begin{align}
		\hat t(\mc T) \leq  1 &- \lambda_{\min{}}^2D^2\Big[1- \e^{-S_2(\Lambda)}/D \nonumber \\
		 &-\kappa(\mc T)\big[\kappa(\mc T) + 2(\e^{-S_2(\Lambda)}/D- 1/D^2) \big]\Big].
	\end{align}
\end{proof}

\section{Discussion and open problems}
We comprehensively studied the R\'enyi entanglement entropies of disconnected subsystems in one-dimensional spin chains described by translational invariant matrix product states with gapped transfer operator and fixed bond-dimension. 
Our results show that these entropies are either extensive or the system's state approaches a product state in the thermodynamic limit. 
We can furthermore give explicit lower bounds to the entropy density, which only depends on the expansion coefficient and the fixed point of the dual to the transfer operator.
As a side-result we find a simple bound on the expansion coefficient and the distribution of singular values for primitive, unital and trace-preserving CP-maps in terms of their Kraus rank.  

Our work suggests several natural open problems to be considered in future work:
First, Besides the translation invariant case, it would be interesting to study the case where each of the MPS tensors for the different sites is drawn independently at random. One then expects that with very high probability the entanglement entropy is quite large, i.e., close to saturating the upper bound $|\partial A|\log(D)$. It would be interesting to give quantitative bounds on the typical behaviour of the entanglement entropy for disconnected sub-systems, extending the results known for fixed, small sub-systems \cite{Collins2013}. We also assumed a fixed bond dimension and it would be interesting to generalize our results to the case $D=\mathrm{poly}(n)$. 

Secondly, an obvious further generalization to consider is the two-dimensional case, described by Projected Entangled Pairs States (PEPS). Due to the lack of a similarly powerful, local canonical form of the tensors describing the state, we expect this to be much more challenging than the one-dimensional case studied in this paper, but also potentially giving rise to richer behaviour.

Finally, it's not clear to us how tight our bound on the expansion coefficient of primitive quantum channels is away from the trivial limit $\kappa=0$. Numerics with randomly sampled channels as well as the example in the introduction indicate that the bound is not very tight. However, at the moment we cannot rule out fine-tuned examples that come close to saturating our bound. We leave it as an open question to study the tightness of the bound in more detail. Similarly, we leave it as an open question to develop tighter bounds on the averaged squared singular values $\sum_i s_i(\mc T)^2/D^2$ of a primitive quantum channel, which would directly imply better bounds on the entropy density.

\section*{Acknowledgements}
We acknowledge support from the National Science Foundation through  SNSF project No.\ $200020\_165843$ and through the National Centre of
Competence in Research \emph{Quantum Science and Technology} (QSIT). 

\bibliographystyle{apsrev4-1}
\bibliography{literature.bib}
\newpage

\appendix
\noindent{\LARGE{\sffamily {Appendix}}}

\section{The MPS constructed from $\mc T'$}\label{app:exampleMPS}
We here briefly describe the construction and properties of the MPS defined using the channel $\mc T'$. The channel has two Kraus-operators $U_1/\sqrt{2}$ and $U_2/\sqrt{2}$. 
The only properties of $\mc T'$ that we will make use of are: i) that $\tr[U_1^\dagger U_2]=0$ and ii) that $\mc T'$ squares to the constant map, mapping any state to the maximally mixed state: 
\begin{align}
	{\mc T'}^2 = \mc P_{\1/2}.
\end{align}
Since $\mc T'[\1]=\1$, we have $\mc P_{\1/2} \circ \mc T' = \mc T'\circ \mc P_{\1/2} = \mc P_{\1/2}$ and therefore ${\mc T'}^{n+1} = \mc P_{\1/2}$ for any $n\geq 1$. 
\begin{lemma}\label{lemma:MPSexample}
	Let $\mc T$ be a unital or trace-preserving CP-map on a $D$-dimensional Hilbert-space such that $\mc T^{k}=\mc P_{\1/D}$ for some $k\in\mathbbm N$. Then $\mc T$ has spectrum (including multiplicities) $(1,0,\ldots,0)$ and is both unital and trace-preserving. 
	\begin{proof}
		Let $X$ be an eigenvector of $\mc T$ with eigenvalue $\lambda$. Then it is also an eigenvector of $\mc T^k=\mc P_{\1/D}$ with eigenvalue $\lambda^k$. But $\mc P_{\1/D}$ has spectrum $(1,0,\ldots,0)$ and the only eigenvector is given by $\1$. Therefore $\1$ is also the unique eigenvector of $\mc T$. Since $\mc P_{\1/D}=\mc P_{\1/D}^*$, the same conclusion holds for $\mc T^*$.  
	\end{proof}
\end{lemma}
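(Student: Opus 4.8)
The plan is to transfer the structure of $\mc P_{\1/D}$ --- which, viewed as a super-operator on the $D^2$-dimensional space $L(\mc H)$, is a rank-one orthogonal projection whose only non-zero eigenvalue is $1$, with eigenvector $\1$ --- back through its $k$-th root $\mc T$. Concretely, if $X$ is an eigenvector of $\mc T$ with eigenvalue $\lambda$, then $\mc P_{\1/D}[X]=\mc T^k[X]=\lambda^k X$, so $X$ is an eigenvector of $\mc P_{\1/D}$ with eigenvalue $\lambda^k$. Since $\mc P_{\1/D}[X]=\1\,\Tr[X]/D$, this leaves only two possibilities: either $\lambda=0$ (and then $\Tr[X]=0$), or $\lambda^k=1$ and $X\propto\1$. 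So the first step is to record that every eigenvector of $\mc T$ belonging to a non-zero eigenvalue is a multiple of $\1$, and that $\mc T$ does have a non-zero eigenvalue because $\mc T^k=\mc P_{\1/D}\neq 0$ forces $\mc T$ to be non-nilpotent; hence $\1$ is an eigenvector, $\mc T[\1]=\lambda_0\1$ with $\lambda_0^k=1$.

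Next I would pin down $\lambda_0=1$ together with the unitality/trace-preservation assertion, which is the only place the disjunctive hypothesis enters. If $\mc T$ is unital, $\mc T[\1]=\1$ immediately gives $\lambda_0=1$. If $\mc T$ is trace-preserving, taking traces in $\mc T[\1]=\lambda_0\1$ gives $D=\lambda_0 D$, so again $\lambda_0=1$, and moreover $\mc T[\1]=\1$, i.e.\ $\mc T$ is unital too. To obtain trace-preservation in the unital case I would pass to the Hilbert-Schmidt adjoint: $(\mc T^*)^k=(\mc T^k)^*=\mc P_{\1/D}^*=\mc P_{\1/D}$, and $\mc T^*$ is completely positive and trace-preserving (the dual of a unital CP-map), so applying the previous observation to $\mc T^*$ yields $\mc T^*[\1]=\1$, that is, $\mc T$ is trace-preserving. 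Either way $\mc T$ is simultaneously unital and trace-preserving, and combining this with the first step the only non-zero eigenvalue of $\mc T$ is $1$.

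It then remains to fix the multiplicities, i.e.\ to show the characteristic polynomial of $\mc T$ is $t^{D^2-1}(t-1)$. Here I would use that $\mc P_{\1/D}$ is idempotent, so $\mc T^{2k}=\mc P_{\1/D}^2=\mc P_{\1/D}=\mc T^k$; hence the minimal polynomial of $\mc T$ divides $t^k(t^k-1)$. Since $t^k-1$ has simple roots, every non-zero eigenvalue of $\mc T$ is semisimple; in particular the eigenvalue $1$ is semisimple, and its eigenspace is one-dimensional (spanned by $\1$, by the first step), so its algebraic multiplicity equals $1$. The remaining $D^2-1$ eigenvalues are therefore all $0$, giving spectrum $(1,0,\ldots,0)$.

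The only step that is not completely mechanical is the last one --- excluding a non-trivial Jordan block at the eigenvalue $1$ --- and even that reduces at once to the idempotence of $\mc P_{\1/D}$. Everything else follows directly from $\mc P_{\1/D}$ being a rank-one projection together with the definitions of unital and trace-preserving maps, so I do not anticipate a genuine obstacle; the main care needed is simply to track which hypothesis (unital versus trace-preserving) is invoked at each point and to use the adjoint to bridge between them.
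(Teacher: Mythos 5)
Your proposal is correct and follows the same core idea as the paper's proof: transferring the spectral structure of $\mc P_{\1/D}$ back through the relation $\mc T^k=\mc P_{\1/D}$, so that every eigenvector of $\mc T$ with non-zero eigenvalue must be proportional to $\1$, and then dualizing to get both unitality and trace preservation. You are in fact more careful than the paper's terse argument on the points it leaves implicit --- ruling out non-trivial $k$-th roots of unity via the unital/trace-preserving hypothesis, and excluding a Jordan block at the eigenvalue $1$ via the minimal polynomial dividing $t^k(t^k-1)$.
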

We will also need to compute $\tr[\mf s \mc T'\otimes \mc T'[\mf s]]$ and $\tr[\mf s\mc P_{\1/2}\otimes \mc P_{\1/2}[\mf s]]$. 
By the same computation as in the proof of lemma~\ref{lemma:swap-singular-values} together with $T' = \frac{1}{2}\sum_i U_i\otimes \overline{U_i}$ being the matrix representation of $\mc T'$ one finds
\begin{align}
	\tr[\mf s \mc T'\otimes \mc T'[\mf s]] &= \frac{1}{4}\sum_{i,j} |\tr[U_i U_j^\dagger]|^2=2, 
\end{align}
where we used $\tr[U_1^\dagger U_2]=0$ and $\tr[U_i^\dagger U_i]=\tr[\1]=2$. Lemma~\ref{lemma:swap-singular-values} also directly shows
 $\tr[\mf s\mc P_{\1/2}\otimes \mc P_{\1/2}[\mf s]]=1$, since $\mc P_{\1/2}$ is an orthogonal (in Hilbert-Schmidt inner product) projector. 
We can now derive the properties of the MPS constructed from the tensor $A_1 = U_1/\sqrt{2}$ and $A_2=U_2/\sqrt{2}$.
Let us first compute the R\'enyi-2 entropy of a connected subsystem $X$ of size $l\leq n-2$. First note that the state is automatically normalized since $\tr[T'^n]=1$ from lemma~\ref{lemma:MPSexample}.
Then, following the calculations in section~\ref{sec:background}, we find 
\begin{align}
\tr[\rho_X^2] &= \tr[(T'\otimes T')^{n-l} U (T'\otimes T')^l U]\\
	&= \tr[P\otimes P U(T'\otimes T')^l U],	
\end{align}
where $P$ is the matrix-representation of $\mc P_{\1/2}$. 
In Liouville-space, this equation reads
\begin{align}
	\tr[\rho_X^2] = \frac{1}{4}\tr[\mf s (\mc T'\otimes \mc T')^l[\mf s]]. 
\end{align}
For $l=1$, we thus find $\tr[\rho_X^2]=1/2$, leading to $S_2(\rho_X)=\log(2)$. 
If $l\geq 2$, on the other hand, we find 
\begin{align}
\tr[\rho_X^2] = \frac{1}{4}\tr[\mf s (\mc P_{\1/2}\otimes \mc P_{\1/2})[\mf s]] = \frac{1}{4}. 
\end{align}
Therefore, any connected sub-system of size $l\geq 2$ has entropy $S_2(\rho_X)=2\log(2)$, saturating the bound $S_2(\rho_X) \leq |\partial X| \log(2) = 2\log(2)$.

Finally, let us show that any two sub-systems $X$ and $Y$ (of sizes $l,l'$), that are a distance $L\geq 2$ apart from each other, are uncorrelated, $\rho_{XY}=\rho_X\otimes \rho_Y$ (since the system is periodic, $L\geq 2$ also requires $n-l-l'-L\geq 2$). 
This is particularly easy to see when directly calculating the matrix-elements of $\rho_{XY}$ as follows, using that ${T'}^2=P=\proj{t_1}$ with $\ket{t_1}$ the unique eigenvector of $T'$: 
\begin{widetext}
\begin{align}
	\bra{j_1,\ldots,j_{l+l'}}\rho_{XY}\ket{i_1,\ldots,i_{l+l'}}&= \tr[{T'}^{n-l-l'-L}(A_{i_1}\otimes \overline{A}_{j_1}) \cdots (A_{i_l}\otimes \overline{A}_{j_l}) {T'}^L (A_{i_{l+1}}\otimes \overline{A}_{j_{l+1}})\cdots(A_{i_{l+l'}}\otimes \overline{A}_{j_{l+l'}})]\\
	&= \tr[\proj{t_1} (A_{i_1}\otimes \overline{A}_{j_1}) \cdots (A_{i_l}\otimes \overline{A}_{j_l}) \proj{t_1} (A_{i_{l+1}}\otimes \overline{A}_{j_{l+1}})\cdots(A_{i_{l+l'}}\otimes \overline{A}_{j_{l+l'}})] \\
	&= \underbrace{\bra{t_1}(A_{i_1}\otimes \overline{A}_{j_1}) \cdots (A_{i_l}\otimes \overline{A}_{j_l})\ket{t_1}}_{=\bra{j_1,\ldots, j_l} \rho_X\ket{i_1,\ldots, i_l}} 
	\underbrace{\bra{t_1} (A_{i_{l+1}}\otimes \overline{A}_{j_{l+1}})\cdots(A_{i_{l+l'}}\otimes \overline{A}_{j_{l+l'}})]\ket{t_1}}_{=\bra{j_{l+1},\ldots,j_{l+l'}} \rho_Y \ket{i_{l+1},\ldots,i_{l+l'}}}\\
&=\bra{j_1,\ldots, j_l} \rho_X\ket{i_1,\ldots, i_l} \bra{j_{l+1},\ldots,j_{l+l'}} \rho_Y \ket{i_{l+1},\ldots,i_{l+l'}}.
\end{align}
\end{widetext}

\section{Lemmas and their proofs}\label{app:lemmas}
In this section we cover technical lemmas and proofs of lemmas that are not part of the main text.

\begin{lemma}\label{unit_circle_powers}
	Let $\{z_k\}_{k=1}^d$ be a collection of numbers on the complex unit circle. Then for all $\epsilon > 0$ and for any $n\in \mathbb{N}$ there exists a natural number $N \geq n$ such that
	\begin{equation}\label{circle_pow_pres}
		|z_k^N - 1| < \epsilon~
	\end{equation}
	for all $k = 1,...,d$.
\end{lemma}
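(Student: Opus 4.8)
The plan is to reduce the claim to a standard simultaneous Diophantine approximation argument via the pigeonhole principle, with a small extra step to force the exponent to be as large as we like. Write each $z_k=e^{2\pi i\theta_k}$ with $\theta_k\in[0,1)$, and for $t\in\mathbb R$ let $\|t\|$ denote the distance from $t$ to the nearest integer. I would use two elementary facts: $|e^{2\pi i t}-1|\leq 2\pi\|t\|$, and the subadditivity $\|a+b\|\leq\|a\|+\|b\|$, which iterates to $\|jt\|\leq j\|t\|$ for all $j\in\mathbb N$. It therefore suffices to find, given $\epsilon>0$ and $n\in\mathbb N$, an integer $N\geq n$ with $\|N\theta_k\|<\epsilon/(2\pi)$ for every $k$.

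First I would fix $M\in\mathbb N$ with $M>2\pi n/\epsilon$ and partition the cube $[0,1)^d$ into the $M^d$ disjoint half-open boxes of side $1/M$. Applying the pigeonhole principle to the $M^d+1$ points $P_m:=(\{m\theta_1\},\dots,\{m\theta_d\})\in[0,1)^d$, $m=0,1,\dots,M^d$, two of them lie in the same box, say $P_{m_1}$ and $P_{m_2}$ with $0\leq m_1<m_2\leq M^d$. Put $N_0:=m_2-m_1\in\{1,\dots,M^d\}$. For each $k$ the number $\{m_2\theta_k\}-\{m_1\theta_k\}$ lies in $(-1/M,1/M)$ and is congruent to $N_0\theta_k$ modulo $1$, hence $\|N_0\theta_k\|<1/M$.

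To turn $N_0$ into an exponent that is at least $n$, I would set $j:=\lceil n/N_0\rceil$ and $N:=jN_0$; then $N\geq n$, and since $N_0\geq 1$ we have $1\leq j\leq n$. Consequently, for every $k$,
\[
\|N\theta_k\|=\|jN_0\theta_k\|\leq j\,\|N_0\theta_k\|<\frac{j}{M}\leq\frac{n}{M}<\frac{\epsilon}{2\pi},
\]
so $|z_k^N-1|\leq 2\pi\|N\theta_k\|<\epsilon$, which is exactly \eqref{circle_pow_pres}. No step here is genuinely hard — this is a textbook Dirichlet–pigeonhole argument — and the only thing to watch is the last estimate: passing from $N_0$ to its $j$-th multiple multiplies the approximation error by $j$, and this is harmless precisely because $j\leq n$, which is why the cutoff $M$ was chosen to carry the extra factor $n$.
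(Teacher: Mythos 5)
Your proof is correct, and it follows the same basic strategy as the paper --- simultaneous Diophantine approximation of the arguments $\theta_k$ --- but the two arguments differ in how they are packaged. The paper applies the simultaneous Dirichlet approximation theorem as a cited black box, and it enforces $N\geq n$ by a different device: it defines $\theta_k$ via $z_k^{n}=e^{2\pi i\theta_k}$ (i.e.\ it approximates the arguments of the $n$-th powers rather than of the $z_k$ themselves) and then takes $N=nq$ with $q$ the Dirichlet denominator, so the factor $n$ is built in from the start and no multiple of the approximant is ever needed. You instead approximate the $\theta_k$ directly, prove the pigeonhole step from scratch, obtain some $N_0$ that may be smaller than $n$, and then pass to the multiple $N=jN_0$ with $j\leq n$, paying for the factor $j$ in the error $\|N\theta_k\|\leq j\|N_0\theta_k\|$ by choosing the grid parameter $M>2\pi n/\epsilon$ from the outset. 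Your version is fully self-contained and the final estimate $|e^{2\pi i t}-1|\leq 2\pi\|t\|$ is cleaner than the paper's exponential-series bound $e^{2\pi q|\delta_k|}-1<4\pi q|\delta_k|$; the paper's version avoids the subadditivity/multiplication step entirely at the cost of invoking the approximation theorem for the pre-exponentiated angles. Both are complete and correct.
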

\begin{proof}
	Since $|z_k| = 1$ for $k=1,...,d$ we can write $z_k^n = e^{2\pi i \theta_k}$ for some real number $\theta_k$. By the simultaneous version of Dirichlet's approximation theorem \cite{Cassels1965} for any $l\in\mathbb{N}$ there exists $p_1,...,p_d,q \in \mathbb{Z}$ such that $1\leq q \leq l$ and
	\begin{equation}\label{dirichlet}
		|\theta_k - \frac{p_k}{q}| \leq \frac{1}{l^{1/d}q}~.
	\end{equation}
	Thus by defining $\delta_k = \theta_k - \frac{p_k}{q}$ we have $z_k^{nq} = e^{2\pi i q\theta_k} = e^{2\pi i (q\delta_k + p_k)} = e^{2\pi i q\delta_k}$. Therefore
	\begin{align}
		\left|z_k^{nq}-1\right| &= \left|\sum_{j=1}^\infty \frac{(2\pi i q\delta_k)^j}{j!}\right|
		\leq \sum_{j=1}^\infty \frac{(2\pi q\left|\delta_k\right|)^j}{j!}\\
		&= e^{2\pi q\left|\delta_k\right|}-1 
		< 4\pi q\left|\delta_k\right| 
		\leq \frac{4\pi}{l^{1/d}}~,
	\end{align}
	where we used $e^x-1<2x$ for $|x| \leq 1$ (which can be achieved here by choosing $l$ large enough) and \eqref{dirichlet} in the last inequality. Therefore by choosing $l \geq \max[(4\pi/\epsilon)^d, 1]$ and $N = nq$ we achieve \eqref{circle_pow_pres} with $N\geq n$.
\end{proof}

\begin{proof}[Proof of Lemma~\ref{trace_ineq}] \label{trace_ineq_proof}
	($\Rightarrow$) We will prove the contrapositive, therefore we first assume $|a_1| < 1$. \\
	Then,
	\begin{align}
		\lim_{n\rightarrow\infty}\left|\Tr[A^n]\right| &= \lim_{n\rightarrow\infty}\left|\sum_{i=1}^{d} a_i^n \right|
		\leq \lim_{n\rightarrow\infty}\sum_{i=1}^{d} |a_i|^n \\
		&\leq \lim_{n\rightarrow\infty} d|a_1|^n = 0.
	\end{align}
	Therefore $\lim_{n\rightarrow\infty}\log(\Tr[A^n]) = -\infty$. 
	Conversely, if we assume $|a_2| = 1$ we define $2 \leq l \leq d$ such that $|a_i| = 1$ for $i=1,...,l$ and if $l < d$ then $|a_{l+1}| < 1$. By Lemma~\ref{unit_circle_powers} we can construct a strictly increasing sequence of integers $\{n_k\}_{k=0}^\infty$ such that $|a_i^{n_k} - 1| < 1/2^k$. Thus it is clear that
	\begin{equation}
		\lim_{k\rightarrow\infty} \sum_{i=1}^l a_i^{n_k} = l~. 
	\end{equation}
	If $l < d$ then it is also clear that $\lim_{n\rightarrow\infty} \sum_{i=l+1}^d a_i^{n} = 0$. Therefore $\lim_{k\rightarrow\infty}\left|\Tr[A^{n_k}]\right| = l \geq 2$. Since $n_{k+1} > n_k$, we have just proven that 
	\begin{equation}
		\limsup_{n\rightarrow\infty} \left|\Tr[A^n]\right| \geq l \geq 2~.
	\end{equation}
	Thus $\limsup_{n\rightarrow\infty} \log\left|\Tr[A^n]\right| \geq \log(2) > 0$, so if $\lim_{n\rightarrow\infty} \log\left|\Tr[A^n]\right|$ exists it is strictly greater than 1, otherwise it is not defined (and therefore cannot be 0).
	
	($\Leftarrow$) We assume that $|a_1| = 1$ and $|a_2| < 1$. Then 
	\begin{align}
		\lim_{n\rightarrow\infty}\left|\Tr[A^n] - a_1^n\right| &= \lim_{n\rightarrow\infty}\left|\sum_{i=2}^{d} a_i^n\right| \leq \lim_{n\rightarrow\infty}\sum_{i=2}^{d} |a_i|^n \nonumber\\
		&\leq \lim_{n\rightarrow\infty} d|a_2|^n  = 0.
	\end{align}
	Therefore $\lim_{n\rightarrow\infty}\log|\Tr[A^n]| = \lim_{n\rightarrow\infty}\log|a_1|^n = \log(1) = 0$, which completes the proof.
\end{proof}

\begin{proof}[Proof of Lemma~\ref{lim_sup}] \label{lim_sup_proof} We make use of complex analysis as presented in a stack-exchange answer \cite{mathstack}.
	Let us define $f:\mathbb{C}\rightarrow\mathbb{C}$
	$$ f(z) = \sum_{i=1}^{n}\frac{1}{1-z\alpha_i}.$$
	We can see that $f$ is a meromorphic function with poles of order $1$ at $1/\alpha_i$ for $i=1,...,n$. The derivatives of $f$ are
	$$ f^{(k)}(z) = k!\sum_{i=1}^{n}\frac{\alpha_i^k}{(1-z\alpha_i)^{k+1}}.$$
	Hence we have $f^{(k)}(0) = k!\sum_{i=1}^{n} \alpha_i^k$ and we can compute the Taylor series of $f$ around $0$ for $|z| < R$, where $R$ is the convergence radius of the Taylor series:
	$$ f(z) = \sum_{k=0}^{\infty}\left(\sum_{i=1}^{n}\alpha_i^k\right)z^k.$$
	Since $f$ is meromorphic its convergence radius around $0$ is $R = \min\limits_{i=1,...,n} \frac{1}{|\alpha_i|}  = 1/r$. Then by the Cauchy-Hadamard theorem
	$$\limsup_{k\rightarrow\infty} \left|\sum_{i=1}^{n} \alpha_i^k \right|^{1/k} = \frac{1}{R} = r~, $$
	proving the lemma.
\end{proof}

\end{document}